\theoremstyle{plain}
\newtheorem{thm}{\protect\theoremname}[section]
  \theoremstyle{definition}
  \newtheorem{defn}[thm]{\protect\definitionname}
  \theoremstyle{remark}
  \newtheorem{rem}[thm]{\protect\remarkname}
  \theoremstyle{plain}
  \newtheorem{lem}[thm]{\protect\lemmaname}
  \theoremstyle{plain}
  \newtheorem{fact}[thm]{\protect\factname}
  \theoremstyle{plain}  
  \newtheorem{example}[thm]{\protect\examplename}
  \theoremstyle{plain}  
  \newtheorem{cor}[thm]{\protect\corname}
  \theoremstyle{plain}    
  \providecommand{\definitionname}{Definition}
  \providecommand{\lemmaname}{Lemma}
  \providecommand{\remarkname}{Remark}
  \providecommand{\theoremname}{Theorem}
  \providecommand{\factname}{Fact}
  \providecommand{\examplename}{Example}  
  \providecommand{\corname}{Corollary}    
\begin{document}

\global\long\def\N{\mathbb{N}}
\global\long\def\C{\mathbb{C}}
\global\long\def\Z#1{\mathbb{Z}/#1\mathbb{Z}}

\global\long\def\A{\mathcal{A}}
\global\long\def\H{\mathcal{H}}
\global\long\def\C{\mathcal{C}}
\global\long\def\T{\mathcal{T}}
\global\long\def\O{\mathcal{O}}
\global\long\def\parity{\oplus}
\global\long\def\hw{\mathrm{hw}}
\global\long\def\hadw{\mathrm{hadw}}
\global\long\def\apex{\mathrm{apex}}
\global\long\def\supp{\mathrm{supp}}
\global\long\def\mod{\mathrm{mod}}

\global\long\def\sigHW#1{\mathtt{HW_{#1}}}
\global\long\def\pass{\mathtt{PASS}}
\global\long\def\act{\mathtt{ACT}}
\global\long\def\pre{\mathtt{PRE}}
\global\long\def\odd{\mathtt{ODD}}
\global\long\def\even{\mathtt{EVEN}}

\global\long\def\phione{\varphi_{\mathit{one}}}
\global\long\def\phiprop{\varphi_{\mathit{prop}}}

\global\long\def\Holant{\mathrm{Holant}}
\global\long\def\val{\mathrm{val}}
\global\long\def\Sig{\mathrm{Sig}}

\global\long\def\pSub{\mathrm{\#Sub}}
\newcommandx\pClique[1][usedefault, addprefix=\global, 1=]{\mathrm{#1Clique}}
\newcommandx\pPart[1][usedefault, addprefix=\global, 1=]{\mathrm{#1PartitionedSub}}
\newcommandx\pGrid[1][usedefault, addprefix=\global, 1=]{\mathrm{#1GridTiling}}
\global\long\def\PerfMatch{\mathrm{PerfMatch}}
\global\long\def\perm{\mathrm{perm}}

\global\long\def\FP{\mathsf{FP}}
\global\long\def\sharpP{\mathsf{\#P}}
\global\long\def\XP{\mathsf{XP}}
\global\long\def\FPT{\mathsf{FPT}}
\newcommandx\Wone[1][usedefault, addprefix=\global, 1=]{\mathsf{#1W[1]}}
\newcommandx\ETH[1][usedefault, addprefix=\global, 1=]{\mathsf{#1ETH}}
\renewcommandx\ETH[1][usedefault, addprefix=\global, 1=]{\mathsf{#1ETH}}
\global\long\def\ModP#1{\mathsf{Mod}_{#1}\mathsf{P}}

\global\long\def\leqFpt{\leq_{\mathit{fpt}}}
\global\long\def\leqFptT{\leq_{\mathit{fpt}}^{T}}
\global\long\def\leqFptPar{\leq_{\mathit{fpt}}^{\mathit{pars}}}
\global\long\def\leqPLin{\leq_{p}^{\mathit{lin}}}
\global\long\def\leqFptLin{\leq_{\mathit{fpt}}^{\mathit{lin}}}

\newcommand*\texEmpty{\vcenter{\hbox{\includegraphics[width=0.375cm]{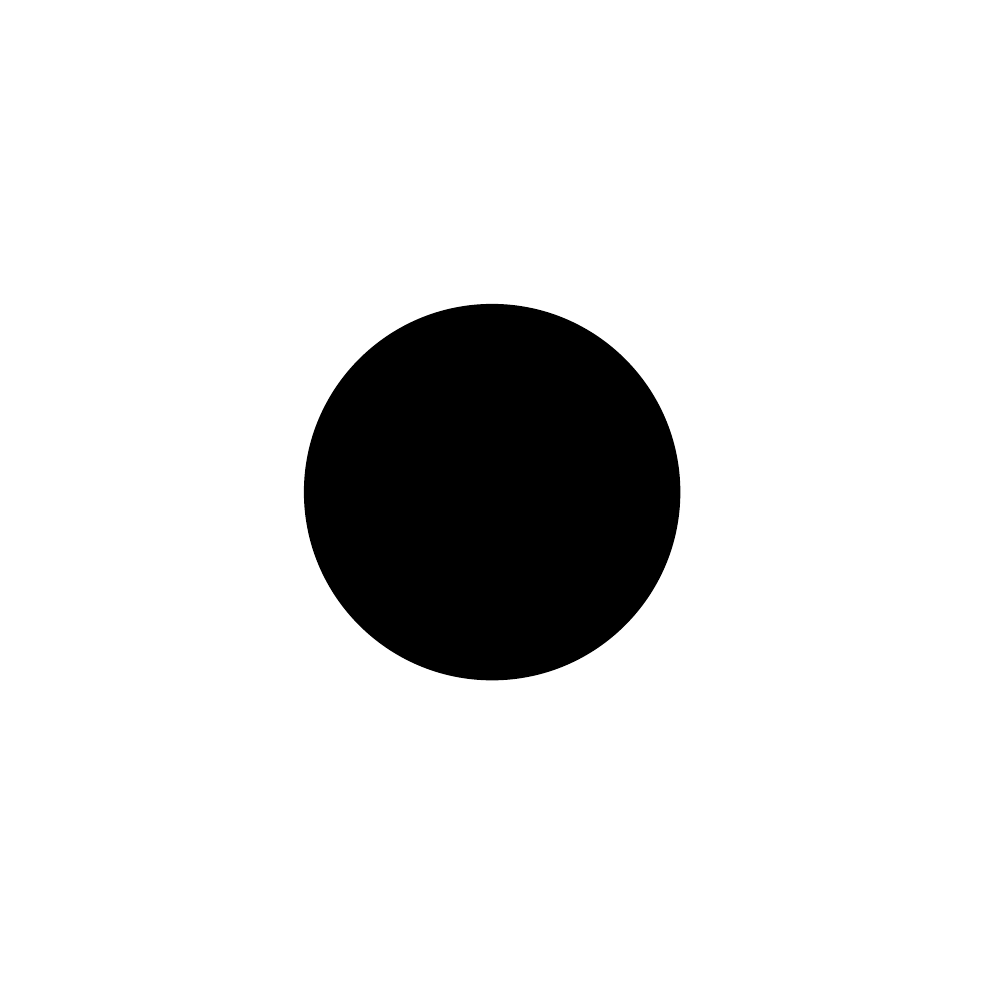}}}}\global\long\def\symEmpty{\texEmpty}
\newcommand*\texNS{\vcenter{\hbox{\includegraphics[width=0.375cm]{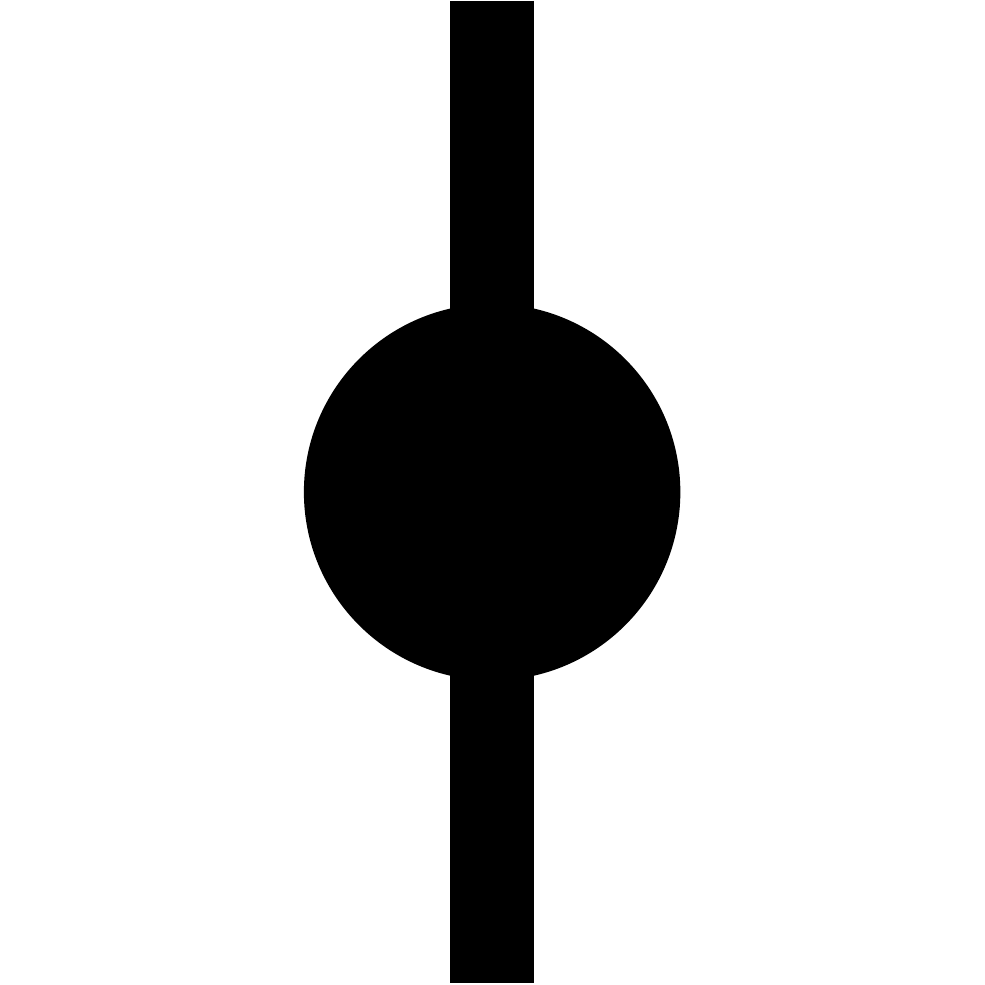}}}}\global\long\def\symNS{\texNS}
\newcommand*\texWE{\vcenter{\hbox{\includegraphics[width=0.375cm]{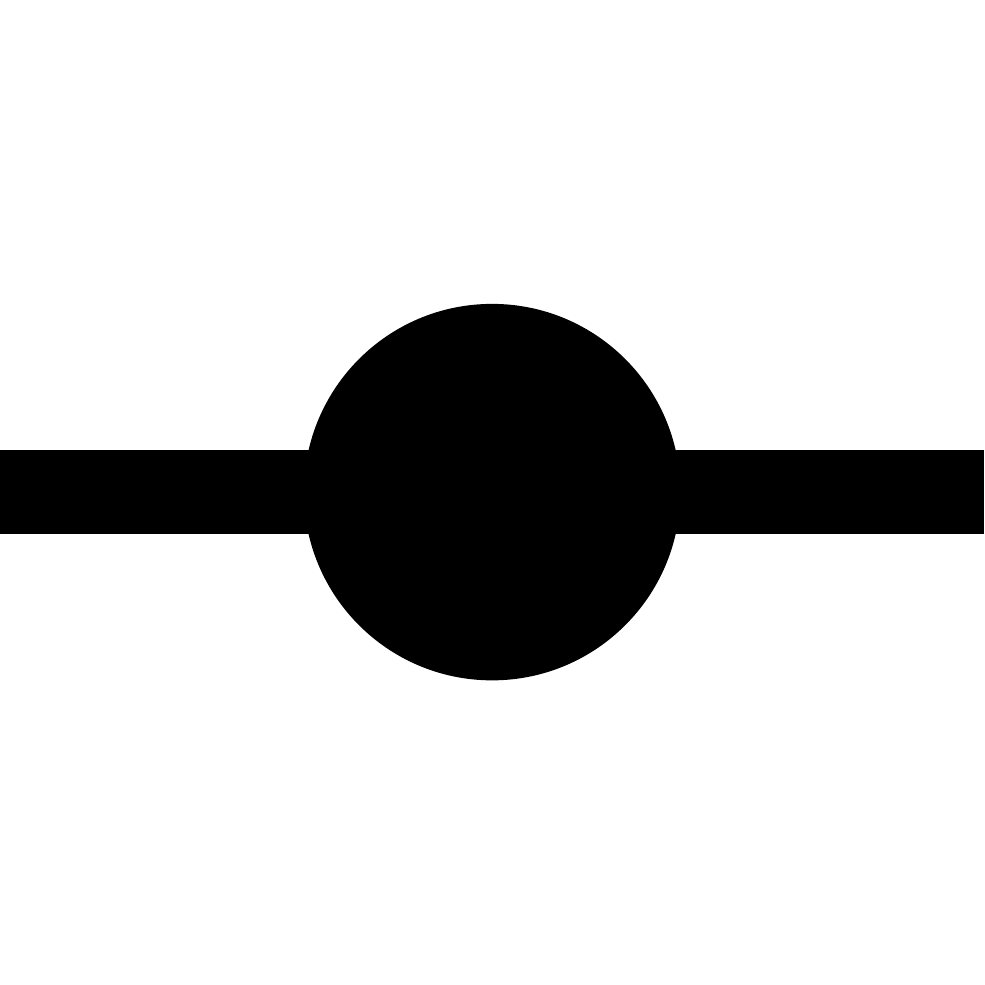}}}}\global\long\def\symWE{\texWE}
\newcommand*\texNSWE{\vcenter{\hbox{\includegraphics[width=0.375cm]{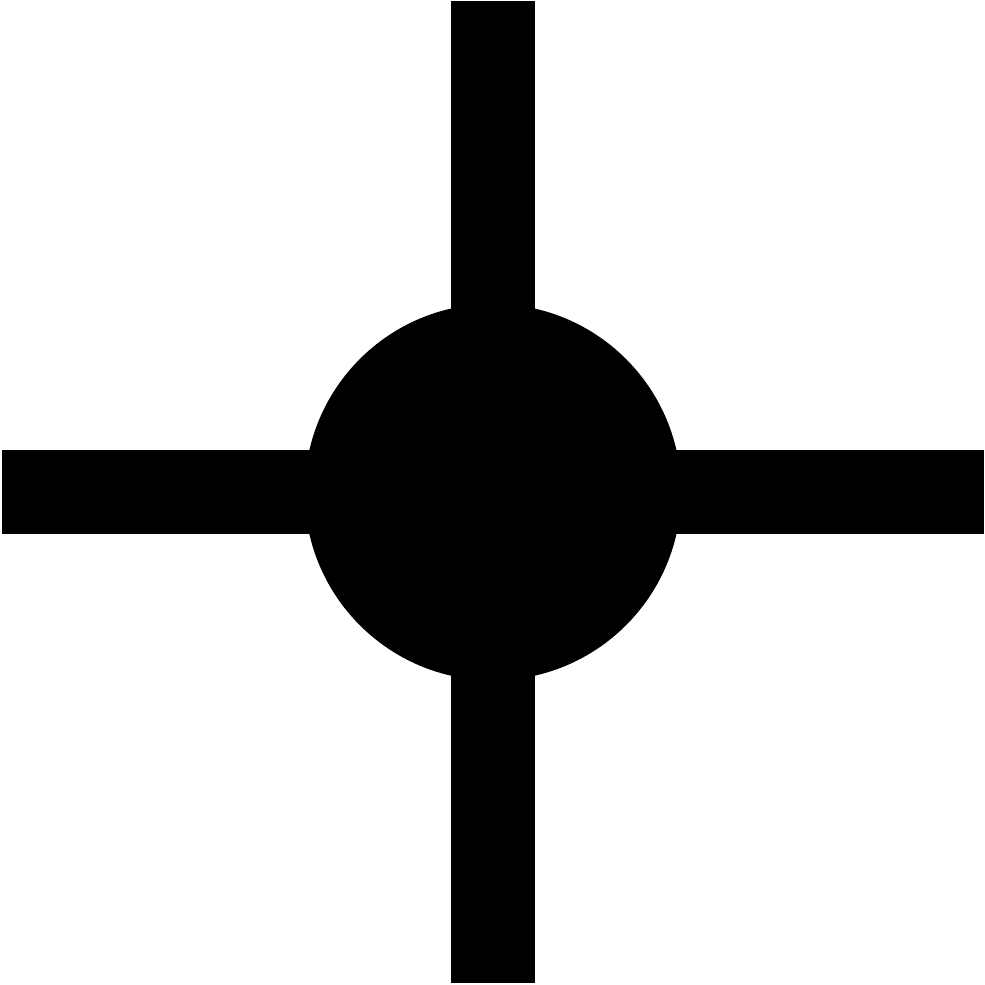}}}}\global\long\def\symNSWE{\texNSWE}

\newcommand*\texNWE{\vcenter{\hbox{\includegraphics[width=0.375cm]{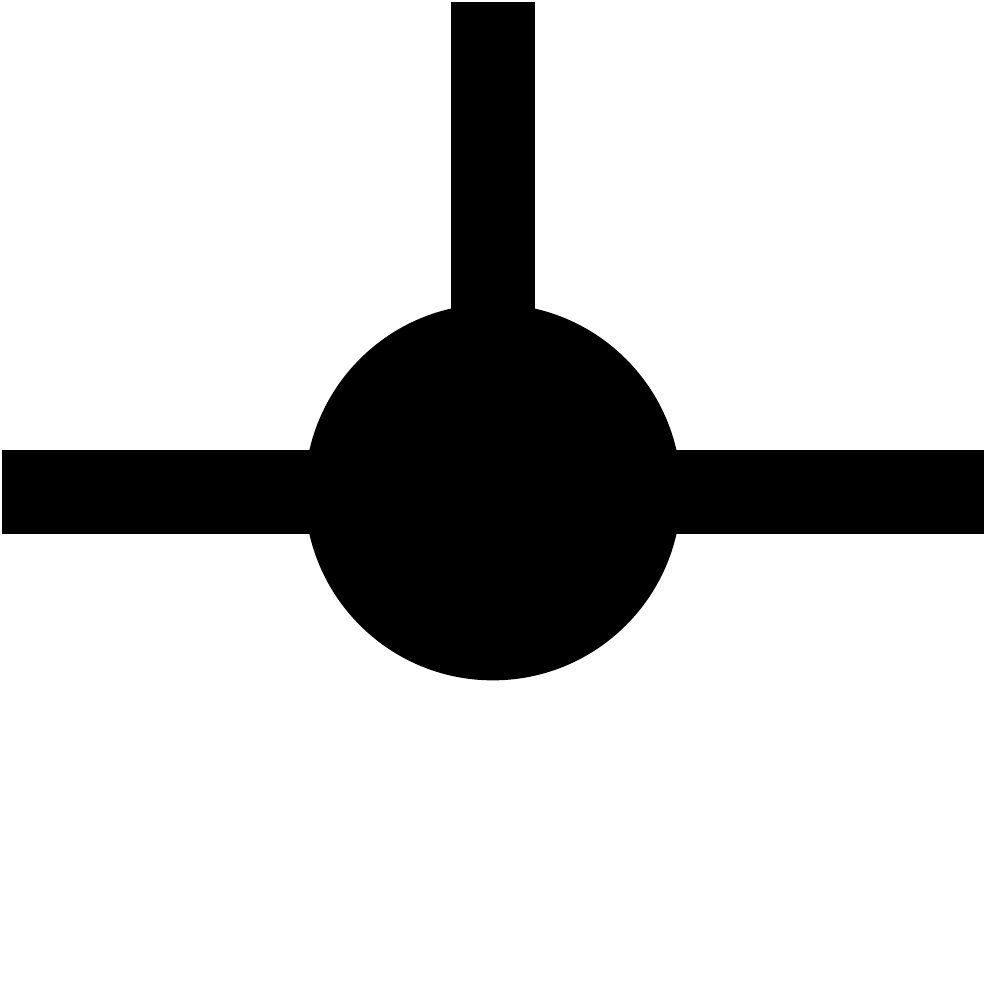}}}}\global\long\def\symNWE{\texNWE}
\newcommand*\texSWE{\vcenter{\hbox{\includegraphics[width=0.375cm]{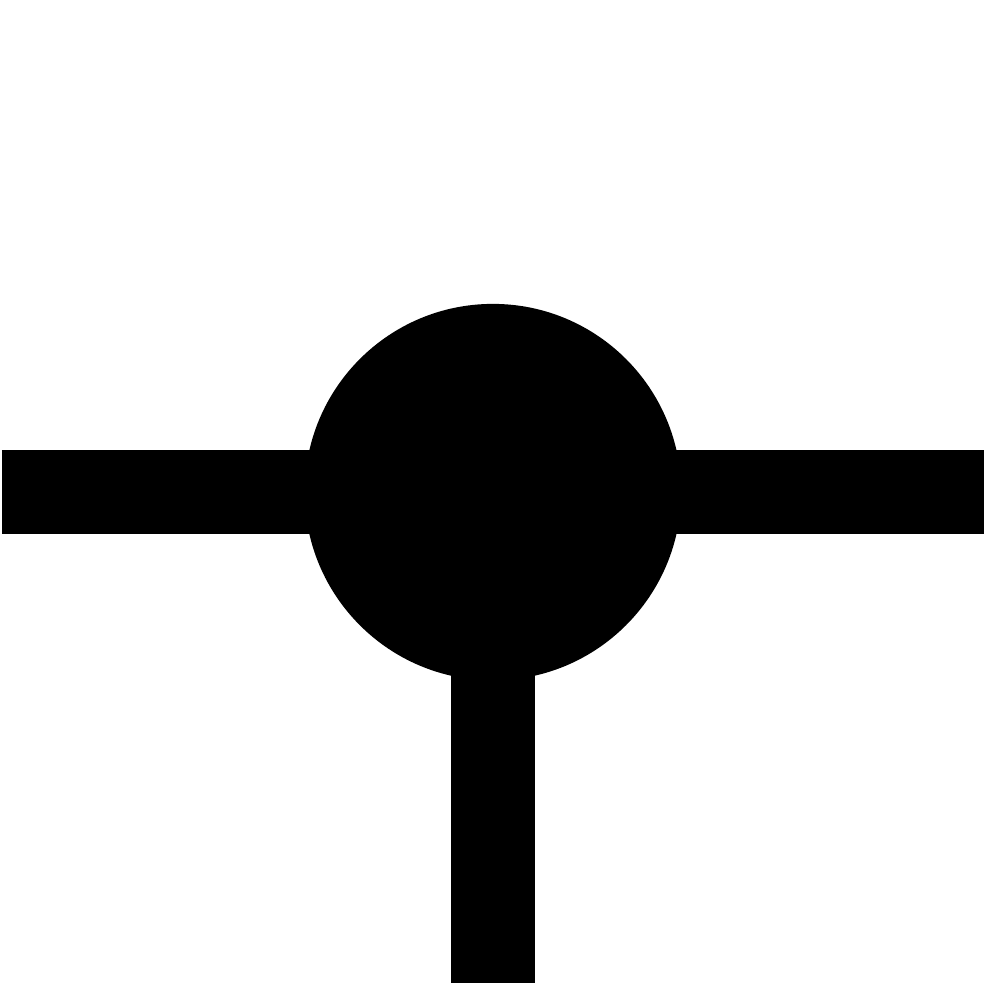}}}}\global\long\def\symSWE{\texSWE}
\newcommand*\texN{\vcenter{\hbox{\includegraphics[width=0.375cm]{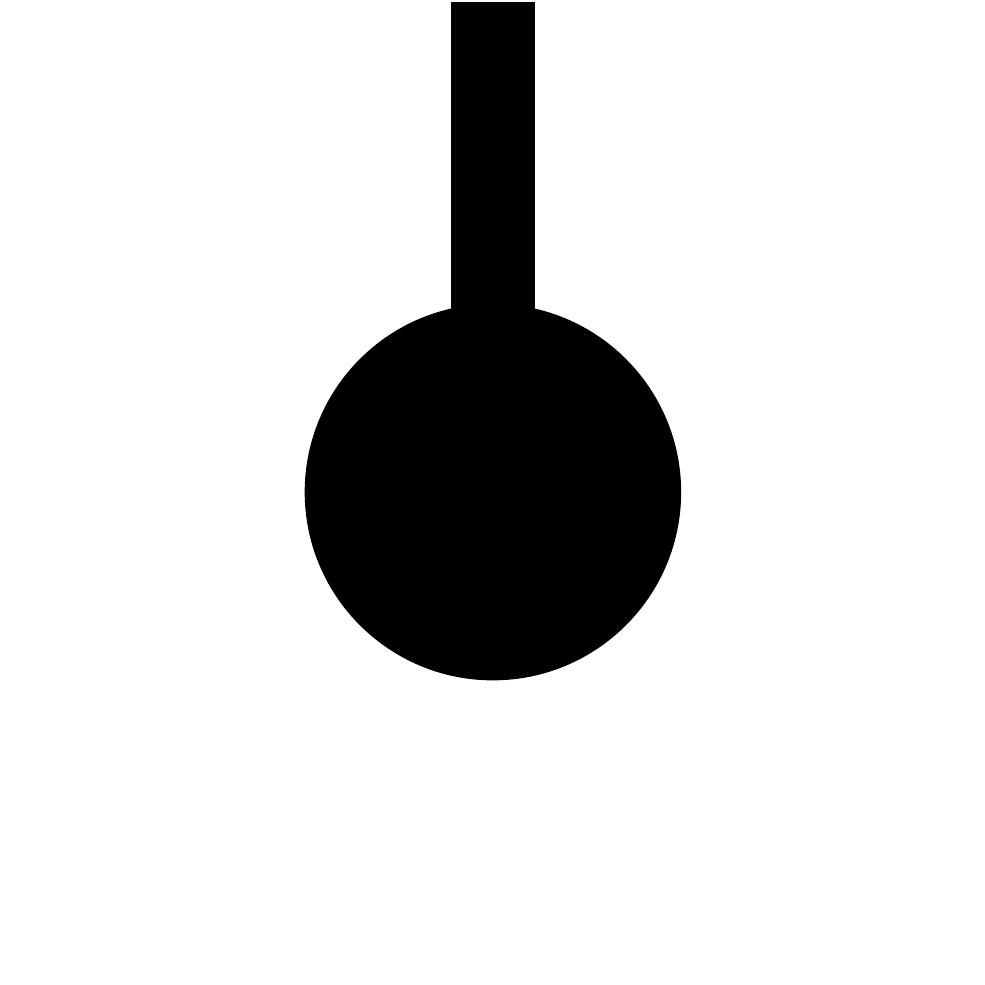}}}}\global\long\def\symN{\texN}
\newcommand*\texS{\vcenter{\hbox{\includegraphics[width=0.375cm]{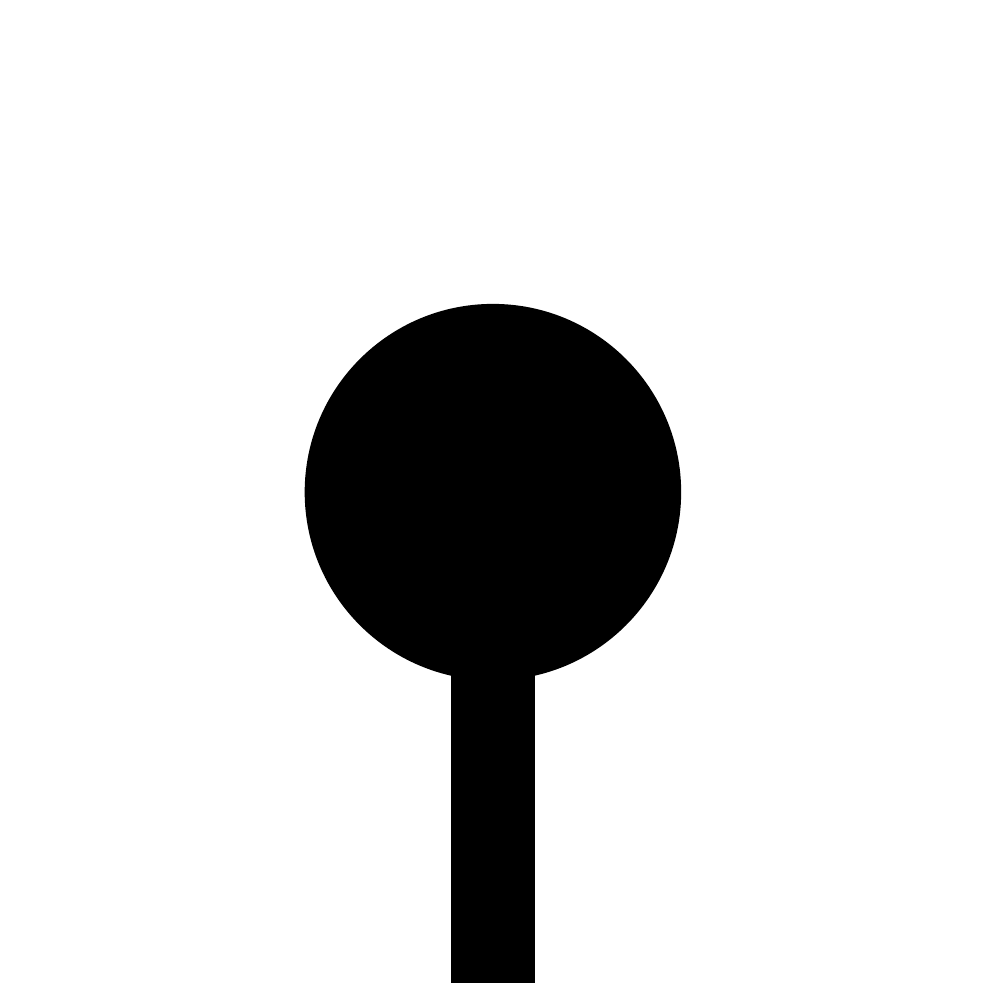}}}}\global\long\def\symS{\texS}

\title{Parameterizing the Permanent: \\
Genus, Apices, Minors, Evaluation mod $2^{k}$}

\author{Radu Curticapean\thanks{
	Simons Institute for the Theory of Computing, Berkeley, USA, and
	Institute for Computer Science and Control, Hungarian Academy of Sciences (MTA SZTAKI), Budapest, Hungary.
	Supported by ERC Starting Grant PARAMTIGHT, No. 280152.
}	
	, Mingji Xia\thanks{
	Institute of Software, Chinese Academy of Sciences, Beijing, China.
	Supported by China National 973 program 2014CB340301, China Basic Research Program (973) Grant 2014CB340302, NSFC 61003030 and NSFC 61170073.	
	}
}

\maketitle

\begin{abstract}
We identify and study relevant structural parameters for the problem
$\PerfMatch$ of counting perfect matchings in a given input graph
$G$. These generalize the well-known tractable planar
case, and they include the \emph{genus} of $G$, its \emph{apex number}
(the minimum number of vertices whose removal renders $G$
planar), and its \emph{Hadwiger number} (the size of a largest
clique minor).

To study these parameters, we first introduce the notion of \emph{combined
matchgates}, a general technique that bridges parameterized counting
problems and the theory of so-called Holants and matchgates: Using
combined matchgates, we can simulate certain non-existing gadgets
$F$ as linear combinations of $t=\mathcal{O}(1)$ existing gadgets.
If a graph $G$ features $k$ occurrences of $F$, we can then reduce
$G$ to $t^{k}$ graphs that feature only existing gadgets, thus
enabling parameterized reductions.

As applications of this technique, we simplify known $4^{g}n^{O(1)}$
time algorithms for $\PerfMatch$ on graphs of genus $g$. Orthogonally
to this, we show $\Wone[\#]$-hardness of the permanent on $k$-apex
graphs, implying its $\Wone[\#]$-hardness under the Hadwiger number.
Additionally, we rule out $n^{o(k/\log k)}$ time algorithms
under the counting exponential-time hypothesis
$\ETH[\#]$.

Finally, we use combined matchgates to prove $\Wone[\parity]$-hardness of
evaluating the permanent modulo $2^{k}$, complementing an $\mathcal{O}(n^{4k-3})$
time algorithm by Valiant and answering an open question
of Bj\"orklund. We also obtain a lower bound of $n^{\Omega(k/\log k)}$
under the parity version $\ETH[\parity]$ of the exponential-time
hypothesis.
\end{abstract}

\maketitle

\newpage
\setcounter{tocdepth}{2}
\tableofcontents

\section{Introduction}

The study of counting problems has become a classical subfield of
computational complexity since Valiant's seminal papers \cite{Valiant1979a,DBLP:journals/siamcomp/Valiant79}
introduced the class $\sharpP$ and established $\sharpP$-hardness
of counting perfect matchings in bipartite graphs. 
In particular, this proves $\sharpP$-hardness of the following generalized problem: Given a graph $G$ with edge-weights $w:E(G)\to\mathbb{Q}$,
compute the quantity
\[
\PerfMatch(G):=\sum_{\substack{M\subseteq E(G)\\
		\mathrm{perfect}\,\mathrm{matching\,of\,}G
	}
}\prod_{e\in M}w(e).
\]

In statistical physics, $\PerfMatch$ is known as the \emph{partition
	function} of the \emph{dimer model} \cite{Temperley.Fisher1961,Kasteleyn1961,Kasteleyn1967},
and the first nontrivial algorithms for the evaluation of this quantity stem from this
area. This includes the celebrated \emph{FKT method}, a polynomial-time
algorithm for computing $\PerfMatch$ on planar graphs \cite{Kasteleyn1967}.
Roughly speaking, this algorithm proceeds as follows: Given a planar
graph $G$, it constructs a \emph{Pfaffian orientation} $F$ of $G$,
which we may view as a subset $F\subseteq E(G)$ with the following
miraculous property: If we define a matrix $A$ from the adjacency
matrix of $G$ by flipping the signs of edges in $F$, then $(\PerfMatch(G))^2=\det(A)$.
Overall, this yields a reduction from planar $\PerfMatch$ to the
determinant.

In algebra and combinatorics, the quantity $\PerfMatch(G)$ for a
bipartite graph $G$ with $n+n$ vertices is better known as the \emph{permanent}
of the biadjacency matrix $A$ of $G$, defined by
\[
\perm(A)=\sum_{\substack{\sigma:[n]\to[n]\\
		\mathrm{is\:a\:permutation}
	}
}\prod_{i=1}^{n}A_{i,\sigma(i)}.
\]
The permanent is central to algebraic complexity theory, which aims
at proving the permanent to be inherently harder than the similar-looking
determinant \cite{Agrawal2006,DBLP:journals/jacm/Raz09,DBLP:journals/cc/CaiCL10}.
This would imply an algebraic analogue
of $\mathsf{P}\neq\mathsf{NP}$ \cite{Valiant1979}.

In order to obtain a more refined view on the complexity of the permanent,
and to cope with its hardness in view of practical applications, various
relaxations of this problem were studied: A celebrated randomized
\textbf{approximation} scheme \cite{Jerrum.Sinclair2004,DBLP:journals/siamcomp/JerrumS93}
allows one to approximate the permanent on matrices with non-negative
entries. Furthermore, on some \textbf{restricted graph classes}, $\PerfMatch$
can be solved in time $\O(n^{3})$: This includes the above-mentioned
planar graphs, and in fact, all graph classes of bounded genus \cite{Galluccio.Loebl,DBLP:journals/jct/Tesler00, ReggeZecchina}. (We will present more classes in the remainder of the introduction.)
As another relaxation, \textbf{modular evaluation} of the permanent was studied in Valiant's original
paper \cite{Valiant1979a}: He showed that the permanent modulo $m=2^{k}$
can be computed in time $n^{\O(k)}$ for all $k \in \mathbb{N}$, but for all $m$ containing an odd prime factor, the evaluation modulo $m$ is $\mathsf{NP}$-hard under randomized reductions.

In this paper, we consider another such refinement (and generalize existing ones) by investigating
the permanent in the framework of \textbf{parameterized complexity}.
This area was initiated by Downey and Fellows \cite{Downey.Fellows1995,Downey.Fellows1999}
and was adapted to counting problems by Flum and Grohe \cite{Flum.Grohe2004}
and McCartin \cite{McCartin2006}. In parameterized counting complexity,
the objects in study are counting problems that come with \emph{parameterizations}
$\pi:\{0,1\}^{*}\to\mathbb{N}$, and a central question is whether
such problems are \emph{fixed-parameter tractable (fpt)}. A given
problem is fpt if it can be solved in time $f(\pi(x))|x|^{\O(1)}$
on input $x$, for a computable function $f$ that depends only on
the parameter value, but not on $|x|$.
If we fail to find an fpt-algorithm for a given parameterized problem, we can often give evidence
that no such algorithm exists by proving its $\Wone[\#]$-hardness,
the parameterized analogue of $\sharpP$-hardness. 
(A more detailed exposition can be found in Section~\ref{sec:General-preliminaries}.)

By studying natural parameterizations $\pi$ of the input, we obtain
a fine-grained complexity analysis that could not be achieved by considering
the input size $|x|$ alone. For instance, consider the problem $\mathrm{VertexCover}$,
which asks whether a graph $G$ on $n$ vertices admits a vertex-cover
of size $k$. This problem is $\mathsf{NP}$-complete, but it can
be solved in time $n^{\O(k)}$ for every fixed $k$, and it is actually
even fpt in the parameter $k$, as we can find \cite{Downey.Fellows1995}
and even count \cite{Flum.Grohe2006} vertex-covers of size $k$ in
time $2^{k}n^{\O(1)}$. On the other hand, we can decide in polynomial
time whether $G$ contains a matching of size $k$, but the problem
of counting $k$-matchings is $\sharpP$-complete, and in fact even
$\Wone[\#]$-complete when parameterized by $k$ \cite{Curticape2013,Curtican.Marx2014}.

\subsection{Genus, apices and excluded minors}

To investigate the parameterized complexity of the permanent, we first
identify interesting parameterizations for this problem. For instance,
the maximum degree $\Delta(G)$ of the input graph $G$ is not particularly
interesting, since the permanent is already $\sharpP$-complete on
$3$-regular graphs \cite{Dagum.Luby1992}. That is, even an $n^{f(\Delta(G))}$
time algorithm for some function $f$ (and an fpt-algorithm in particular) would imply $\mathsf{P}=\sharpP$.
However, it turns out that the known polynomial-time solvable graph
classes for $\PerfMatch$ point us towards a natural parameter, namely
the size of a smallest excluded minor. Here, a minor $H$ of a graph $G$ is a graph that can be obtained from $G$ by deletions of edges
and/or vertices, and contraction of edges. To explain the significance of minors for counting perfect matchings, we first survey the known algorithms for $\PerfMatch$, all of which can be considered as generalizations of the FKT method for planar graphs.

\begin{description}
	
	\item[Excluding $K_{3,3}$ or $K_5$:]
	It was shown by Little \cite{Little1974} and later by Vazirani \cite{Vazirani1989} (who gave a parallelized algorithm)
	that $\PerfMatch$ can be solved in time $\O(n^{3})$ on graphs excluding
	the minor $K_{3,3}$. A similar result was recently shown by Straub
	et al.~\cite{Straub.Thierauf2014} for graphs excluding $K_{5}$.
	Note that the FKT method gives an $\O(n^{3})$ time algorithm on graphs
	excluding \emph{both} $K_{3,3}$ and $K_{5}$, whereas the two above
	algorithms show that excluding \emph{either} minor entails the polynomial-time
	solvability of $\PerfMatch$. For the $K_{3,3}$-free case, this was
	shown by constructing a Pfaffian orientation. The $K_{5}$-free case
	was shown by a different technique; in particular, $K_{5}$-free graphs
	do not necessarily admit Pfaffian orientations.
	
	\item[Excluding single-crossing minors:]
	Extending the above item,
	it was recently shown by Curticapean \cite{Curticape2014} that $\PerfMatch$
	can be solved in time $\O(n^{4})$ on any class excluding a fixed
	\emph{single-crossing minor} $H$, i.e., a minor that can be drawn
	in the plane with at most one crossing, such as $K_{3,3}$ or $K_{5}$.
	In fact, it is shown that $\PerfMatch$ is fpt in the size of the
	smallest excluded single-crossing minor. This algorithm does not inherently
	rely upon Pfaffian orientations, apart from a black-box algorithm
	for planar $\PerfMatch$.
	
	\item[Bounded-genus graphs:]
	Another line of extensions of the FKT
	method is to graphs of bounded \emph{genus}: It was shown independently by Gallucio
	and Loebl \cite{Galluccio.Loebl}, Tesler \cite{DBLP:journals/jct/Tesler00}
	and Regge and Zechina \cite{ReggeZecchina} that $\PerfMatch$ can
	be solved in time $\O(4^{g}n^{3})$ on $n$-vertex graphs $G$ of
	genus $g$. In the framework of fixed-parameter tractability, this can be read as $\PerfMatch$ being fpt when parameterized by the genus
	of $G$. The algorithms for the bounded-genus case proceed by expressing $\PerfMatch(G)$ as
	the linear combination of $4^{g}$ determinants derived from Pfaffian
	orientations. In the present paper, we give an alternative proof of
	this theorem that proceeds by reduction to $4^{g}$ instances of planar
	$\PerfMatch$. Together with the previous item, this eliminates the
	need for Pfaffian orientations from all known algorithms for $\PerfMatch$
	except for the planar case.
	
\end{description}

From the above list, we can draw the conclusion that every\emph{ known }polynomial-time
solvable graph class for $\PerfMatch$ excludes some fixed minor.\footnote{This statement comes with a caveat: For instance, we can determine the number of perfect matchings in a complete graph in polynomial time by means of a closed formula. The class of complete graphs clearly excludes no fixed minor. However, we cannot solve the (weighted) problem $\PerfMatch$ on this class in polynomial time, as edge-weights would allow us to simulate arbitrary graphs, for which counting perfect matchings is $\sharpP$-complete.}
This is clear for the first two items, and furthermore, the graphs
of genus $g\in\N$ are easily seen to exclude a complete graph of size $\mathcal{O}(g)$.
Since this shows that excluded minors have been a driving force behind
polynomial-time algorithms for $\PerfMatch$, it is natural to study
this problem under the more general \emph{Hadwiger number}
\[
\hadw(G):=\max\{k\in\N:\ G\mbox{ contains a \ensuremath{K_{k}}-minor}\}.
\]

Note that planar graphs have Hadwiger number at most $4$. More generally,
if the genus of $G$ or the size of the smallest excluded single-crossing
minor is bounded, then $\hadw(G)$ is bounded as well, but the converse
does not hold. However, the \emph{Graph Structure Theorem} \cite{Robertson.Seymour2003},
a celebrated result in graph minor theory \cite{Robertson.Seymour2004},
yields a decomposition of the graphs with fixed Hadwiger number $k$ into
graphs that have genus $c=c(k)$ except for $c$ occurrences of certain
defects, namely so-called vortices and apices.
Such decompositions have proven immensely useful for fpt-algorithms
on graphs excluding fixed minors, see \cite{Marx.Pilipczuk2014,Demaine.Hajiaghayi2009,Demaine.Hajiaghayi2005a,Demaine.Hajiaghayi2005,Demaine.Hajiaghayi2002,Fomin.Demaine2015}.
If a problem can be solved efficiently on planar instances and we
can extend this to bounded-genus instances, as in the case of $\PerfMatch$,
then with a leap of faith, the Graph Structure Theorem allows us to
hope for an fpt-algorithm under the more general parameterization by Hadwiger number.
Our following negative result however shatters these hopes for the case of $\PerfMatch$.
\begin{thm}
	\label{thm: hadw hard}The zero-one permanent is $\Wone[\#]$-hard when parameterized
	by the Hadwiger number. In other words, computing $\PerfMatch$ is $\Wone[\#]$-hard when parameterized by the Hadwiger number, even on bipartite graphs without edge-weights.
\end{thm}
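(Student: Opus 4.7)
The plan is to prove the strictly stronger claim that $\PerfMatch$ is $\Wone[\#]$-hard on bipartite unweighted graphs when parameterized by the \emph{apex number} $\apex(G)$, i.e., the minimum number of vertices whose deletion leaves a planar graph. The statement of the theorem then follows as a black-box consequence: if $\apex(G)\leq k$ with witness set $S$, then $G-S$ is planar and so $\hadw(G-S)\leq 4$; reintroducing each apex vertex raises the Hadwiger number by at most one, giving $\hadw(G)\leq k+4$. Any fpt-algorithm parameterized by $\hadw$ would therefore also be fpt in $\apex$ and would contradict the strengthened hardness.

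For the apex-hardness itself, I would reduce from $\pPart$ parameterized by the pattern size $k$, a canonical $\Wone[\#]$-hard starting point. Given an instance with pattern $H$ on $[k]$ and colour classes $V_{1},\dots,V_{k}$, I would construct a bipartite graph $G_{I}$ with $k$ distinguished vertices $a_{1},\dots,a_{k}$ sitting on top of a planar base graph. Each $a_{i}$ is the unique port of a planar \emph{selector} subgadget encoding $V_{i}$: a perfect matching of that subgadget picks a $v_{i}\in V_{i}$ and exposes this choice at $a_{i}$. For every pair $ij\in E(H)$ a planar \emph{pair-gadget} attached only to $a_{i}$ and $a_{j}$ contributes $1$ to the matching weight exactly when the chosen $v_{i}$ and $v_{j}$ form an edge of $H$, and $0$ otherwise. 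Summing over perfect matchings yields the partitioned subgraph count, and since only $a_{1},\dots,a_{k}$ lie outside the planar base we obtain $\apex(G_{I})\leq k$.

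The main technical obstacle is that the selector and edge-checking primitives one would naturally want to use cannot be realised directly as bipartite subgraphs with $0/1$ edge weights attached through a single port: an equality-style read-out gadget or a conjunctive edge-indicator generically requires signed or non-$0/1$ matchgate entries. This is precisely the regime the combined-matchgates framework introduced earlier in the paper is designed for. Each ideal gadget $F$ is written as a linear combination of $t=\O(1)$ bipartite unweighted gadgets $F_{1},\dots,F_{t}$ that do admit planar realisations through the apices. Since $G_{I}$ contains $\O(k^{2})$ copies of $F$, distributing over the linear combination expresses $\PerfMatch(G_{I})$ as a signed sum of $t^{\O(k^{2})}$ evaluations on bipartite $0/1$ graphs of apex number at most $k$; because $t^{\O(k^{2})}$ is a function of $k$ alone, this is a valid fpt Turing reduction from $\pPart$ to $\PerfMatch$ parameterized by $\apex$.

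A final concern that I expect to be mostly routine but still delicate is the planar layout of the $\O(k^{2})$ pair-gadgets together with the $k$ selector subgadgets into a single planar base graph attached only through the apex vertices $a_{1},\dots,a_{k}$. I plan to handle this by a grid-like routing argument analogous to standard planar reductions from $\pGrid$ or $\pClique$, exploiting that any unavoidable crossings can be pushed onto the apex vertices themselves, whose removal restores planarity by definition.
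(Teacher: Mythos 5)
Your reduction from apex-hardness to Hadwiger-hardness is exactly the paper's: one checks $\hadw(G)\leq\apex(G)+4$ and concludes as you do. The paper also agrees with your high-level plan of reducing from a $\pPart$-type problem and using combined matchgates to express unrealisable gadgets as $\O(1)$-term linear combinations, distributing over the product to get an fpt Turing reduction.

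However, the concrete gadget architecture you propose has a genuine gap, and it is not the small ``routine but delicate'' issue you flag at the end --- it is the whole difficulty. You want $k$ global apices $a_1,\dots,a_k$, one per colour class, each acting as the ``port'' of a selector, with a pair-gadget for each $ij\in E(H)$ \emph{attached only to $a_i$ and $a_j$}. But a perfect matching matches each $a_i$ to exactly one neighbour, so $a_i$ can ``communicate'' a choice $v_i\in V_i$ to at most one of its $\deg_H(i)$ incident pair-gadgets; the others receive no information. Either the choice is encoded by which edge of $a_i$ is matched (then it reaches only one gadget), or it is encoded inside the selector's internal state, which then has to be routed through the planar base to every pair-gadget involving $i$. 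That routing is precisely where planarity breaks, and ``pushing crossings onto the apices'' does not resolve it: a crossing between two base edges neither of which is incident to an apex is not fixed by deleting apices. So the claim $\apex(G_I)\leq k$ does not follow from the described construction.

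The paper gets around this by not attaching gadgets to global apex ``ports'' at all. Instead it reduces via $\pGrid$ to a planar $k\times k$ grid Holant in which choices propagate along edge bundles row-by-row and column-by-column; crossings inside a cell are absorbed by the $\pass$ signature, which \emph{does} admit a planar matchgate (at the cost of a $-1$ on the all-ones input). The only non-planar requirement is the per-cell membership check $(x_W,x_N)\in\T(\kappa)$, and this is realised by a $2$-apex matchgate $\Phi'$ whose apices are \emph{local} to the cell and combined linearly with the planar $\Phi$ (Lemma 5.7). Because one can take instances with $|\C|=\O(k)$, this gives $\O(k)$ apices and $2^{\O(k)}$ Turing queries, which is also what makes the $n^{\Omega(k/\log k)}$ lower bound go through. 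Your sketch would, if the routing were repaired, need $\O(k^2)$ linear-combination sites, which still suffices for $\Wone[\#]$-hardness but not for the tight $\ETH[\#]$ bound. Finally, the existence of the $\O(1)$-term linear combinations of bipartite $0/1$ matchgates is asserted in your proposal but is the technical crux; the paper spends Sections 3 and 5.3 constructing $\Phi$, $\Phi'$, and verifying the identity $g_\kappa=\tfrac{T}{2}\Sig(\Phi)+\tfrac12\Sig(\Phi')$, together with the balance normalisation (Lemma 2.11) that makes the coefficients uniform over cells.
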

We show this by proving the following stronger statement: Let us define
the apex number
\[
\apex(G):=\min\{k\in\N\mid\exists S\subseteq V(G)\mbox{ of size }k:\ G-S\mbox{ is planar}\}.
\]

This parameter, studied in \cite{Marx.Schlotter2012}, measures the
distance of a graph to planarity by vertex deletions. Note that planar
graphs have apex number $0$. Using the apex number as parameter,
we can generalize planar graphs in a way that is orthogonal to the
genus parameter: There are graphs on which any one of these parameters
is bounded, while the other is not. However, it can be verified that
$\hadw(G)\leq\O(\apex(G))$. This allows us to obtain Theorem~\ref{thm: hadw hard}
as a corollary from the following result, which we consider to be
of independent interest.
\begin{thm}
	\label{thm: apex hard}The permanent is $\Wone[\#]$-hard when parameterized
	by the apex number. Assuming the exponential-time hypothesis $\ETH[\#]$,
	it admits no $n^{o(k/\log k)}$ time algorithm on $k$-apex graphs
	with $n$ vertices.
\end{thm}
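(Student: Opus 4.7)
The plan is to give an fpt-Turing reduction from $\pPart$, the problem of counting partitioned subgraph isomorphisms, to the permanent on apex graphs. Given a pattern $H$ on $k$ vertices and a host graph with color classes $V_1,\ldots,V_k$, $\pPart(H)$ asks for the number of color-preserving embeddings of $H$ into the host graph. By the framework of Marx, transferred to counting by Curticapean and Marx, $\pPart$ is $\Wone[\#]$-hard and admits no algorithm running in time $f(k)\,n^{o(k/\log k)}$ under $\ETH[\#]$. Hence a reduction with only a constant-factor parameter blow-up from $\pPart$ to the permanent on apex graphs immediately yields both statements of the theorem.

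Given an instance of $\pPart(H)$, I would construct a bipartite graph $G$ of apex number at most $k$ as follows. For each color class $V_i$, introduce one apex vertex $a_i$ together with a planar \emph{selection gadget} $S_i$ carrying one ``slot'' per candidate $v\in V_i$. The gadget is designed so that the perfect matchings of $S_i$ together with the edges incident to $a_i$ correspond bijectively to choices of a candidate $v_i\in V_i$, and so that $v_i$ is encoded by the activation pattern of a bundle of external wires leaving $S_i$. For each edge $\{i,j\}\in E(H)$, insert a planar \emph{adjacency gadget} $E_{ij}$ whose external wires meet those of $S_i$ and $S_j$ and whose signature evaluates to the $(v_i,v_j)$-entry of the bipartite adjacency matrix between $V_i$ and $V_j$ in the host graph. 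A standard signature-algebra calculation then gives $\perm(G)$ equal to the count of partitioned copies of $H$ up to a computable normalizing factor, and since only the $a_i$ are non-planar, $\apex(G)\le k$.

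The principal technical challenge is to realize $S_i$, $E_{ij}$, and the global layout planarly with precisely the prescribed signatures. Both the ``select-one-of-$n$'' signature of $S_i$ and the adjacency signature of $E_{ij}$ are in general not realizable by a single planar matchgate, and worse, the $\binom{k}{2}$ adjacency gadgets form an abstract $K_k$-like overlay on $\{S_1,\ldots,S_k\}$ that cannot be drawn planarly for $k\ge 5$. This is exactly where the combined matchgate framework developed earlier in the paper is invoked: both the selection signature and each planar wire crossing needed to resolve the overlay can be simulated as a linear combination of $t=\O(1)$ realizable planar matchgates. Expanding these $\O(k^2)$ linear combinations yields $\perm(G)=\sum_\alpha c_\alpha\,\perm(G_\alpha)$ for $t^{\O(k^2)}$ bipartite graphs $G_\alpha$ whose non-apex parts use only genuinely realizable planar matchgates, with $\apex(G_\alpha)\le k$ throughout. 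This completes the fpt-Turing reduction, and the ETH lower bound follows by propagating the lower bound for $\pPart$ through the reduction. The main obstacle in carrying this out is the joint book-keeping: verifying (i)~that the planar layout of wires, crossings, and gadgets is indeed planar, (ii)~that the per-gadget signatures multiply together to the desired global count without spurious contributions from ``wrong'' matchings, and (iii)~that the blow-up factor from expanding all combined matchgates depends only on $k$, not on $n$.
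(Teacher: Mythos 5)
Your high-level strategy — an fpt-Turing reduction from $\pPart$ using one selection gadget per color class, one adjacency gadget per edge of $H$, and combined matchgates to patch up non-realizable signatures — differs substantively from the paper's route, which first reduces $\pPart$ to the (generalized) grid-tiling problem $\pGrid[\#]$ and then realizes an inherently planar signature graph $G(\A)$. That detour is not cosmetic, and your proposal has two gaps that stem from not taking it.

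First, the apex budget is misallocated. You introduce one apex $a_i$ per color class and then claim that all remaining gadgetry — selection gadgets $S_i$, adjacency gadgets $E_{ij}$, and the crossing-resolution — can be expressed as linear combinations of $\O(1)$ \emph{planar} matchgates. In the paper the apices are attached precisely to the membership test (the analogue of your $E_{ij}$): the signature $g_\kappa$ checking $(x_W,x_N)\in\mathcal{T}(\kappa)$ is realized as $\tfrac{T}{2}\Sig(\Phi)+\tfrac12\Sig(\Phi'(A))$, where $\Phi$ is planar but $\Phi'(A)$ carries two apex vertices that range over the entire $n\times n$ grid of $\pass$-vertices. Your adjacency gadget $E_{ij}$ must encode an arbitrary $n\times n$ $0/1$-matrix; nothing in the combined-matchgate machinery lets you linearly combine an $n$-dependent membership test from a \emph{bounded} number of \emph{planar} constituents, and the paper gives no such lemma. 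The apex vertices must live inside the adjacency gadgets, not on the color classes; if you could avoid that, expansion of the $\O(k)$ combinations would place $\pPart$ on instances with planar constituents only, plus $k$ isolated apices of unclear purpose, and your own analysis no longer explains what the $a_i$ do.

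Second, you implicitly take $H$ to be $K_k$ (you speak of $\binom{k}{2}$ adjacency gadgets and a $K_k$ overlay), which forfeits the tight ETH bound. The paper's $n^{o(k/\log k)}$ lower bound hinges on Theorem~\ref{thm: PartitionedSub Marx}, the bounded-degree version of $\pPart$, so that $|E(H)|=\O(k)$; this is what propagates into $|\C|=\O(k)$ in Theorem~\ref{thm: GridTiling hardness} and hence into $\O(k)$ apices in the final graphs $H_\omega$. With a $K_k$ pattern your blow-up parameter is $\O(k^2)$ and the resulting lower bound would be only $n^{o(\sqrt{t}/\log t)}$ on $t$-apex graphs. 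Relatedly, the planar overlay problem you raise (bundle-crossings between $S_i$ and $E_{ij}$) is exactly what the $\pGrid$ detour eliminates: $G(\A)$ is drawn as a $k\times k$ grid of cells, so no two edge bundles ever cross, and no grid-cap-style $4$-constituent combinations are needed for layout at all. If you insist on a direct layout you must budget for both the crossing combinations and the apex combinations, and you must do the bookkeeping with the bounded-degree $\pPart$ in hand — none of which appears in your proposal.
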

This contrasts with the fpt-algorithm for $\PerfMatch$ when parameterized
by genus. We observe that $\PerfMatch$ can be computed
easily in time $n^{k+\O(1)}$ on $k$-apex graphs by means of brute-force,
so the lower bound under $\ETH[\#]$ is almost tight.
However, it should be noted that no similar algorithm is known for the Hadwiger
number: At least to us, it remains an important open question whether $\PerfMatch$
can be solved in time $n^{f(k)}$ on graphs excluding the complete
graph $K_{k}$ as minor.

\subsection{Evaluating the permanent modulo $2^{k}$}

In the following, we depart from structural parameters of the input
graph $G$ and consider the evaluation of the permanent modulo $2^{k}$.
In the seminal paper \cite{Valiant1979a}, not only did Valiant prove
$\sharpP$-completeness of the permanent, but he also studied the
complexity of evaluating the permanent modulo fixed numbers $m\in\mathbb{N}$.

Observe that $\perm(A)$ and $\det(A)$ are equivalent modulo $2$ for any matrix $A$,
giving a polynomial-time algorithm for the permanent modulo $2$.
On the other hand, for odd primes $p$, Valiant's original proof shows
that the permanent modulo $p$ is $\mathsf{Mod}_{p}\mathsf{P}$-complete.
That is, we can reduce counting satisfying assignments to $3$-CNF
formulas modulo $p$ to the permanent modulo $p$. This also shows
the $\mathsf{NP}$-hardness of the latter problem under randomized reductions, and this
holds more generally whenever the modulus $m$ contains an odd prime factor, that is, when $m$ is not a power of two.

For the remaining cases $m=2^{k}$ however, Valiant
\cite{Valiant1979a} showed an $\O(n^{4k})$ time algorithm for
evaluating the permanent modulo $2^{k}$ on $n$-vertex graphs, which
was recently improved to $n^{k+\mathcal{O}(1)}$ time by Bj\"orklund, Husfeldt and Lyckberg~\cite{BHL_PermMod}.
Given these results, it is natural to study this problem in the framework
of parameterized complexity, thus asking whether we can compute the
permanent modulo $2^{k}$ in time $n^{o(k)}$ or even $f(k)n^{\O(1)}$.
This was also posed as an open problem in \cite{BHL_PermMod}.
Please recall that this question is indeed only interesting for $m=2^{k}$:
As stated in the previous paragraph, on all other \emph{fixed} $m\in\N$,
the problem is $\mathsf{NP}$-hard under randomized reductions.

We rule out the fixed-parameter tractability 
of the permanent modulo $2^k$
by the following stronger
hardness result, which also establishes an unexpected connection
to the apex parameter introduced before:
Evaluating the permanent modulo $2^{k}$ on $k$-apex graphs is $\Wone[\parity]$-hard,
that is, an fpt-algorithm for this problem would imply one for counting
$k$-cliques modulo $2$, a problem that was shown to be $\Wone$-hard under randomized
reductions by a recent result of Bj\"orkund, Dell and Husfeldt~\cite{DBLP:conf/icalp/BjorklundDH15}.
We also obtain an almost-tight lower bound under $\ETH[\parity]$,
the parity version of the exponential-time hypothesis $\ETH$.
\begin{thm}
	\label{thm: perm mod 2^k}The evaluation of the permanent modulo $2^{k}$
	is $\Wone[\parity]$-hard when parameterized by $k$, even when restricted
	to $k$-apex graphs. Assuming $\ETH[\parity]$, there is no $n^{o(k/\log k)}$
	time algorithm for this problem.
\end{thm}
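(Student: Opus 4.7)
The plan is to adapt the reduction underlying Theorem~\ref{thm: apex hard} to the modular setting, replacing the source problem with its parity variant. Specifically, I would reduce from $\parity$-$k$-$\pClique$ (equivalently, $\parity$-$\pPart$ on a $k$-partitioned graph), whose $\Wone[\parity]$-hardness and matching $n^{\Omega(k/\log k)}$ lower bound under $\ETH[\parity]$ follow from the recent Bj\"orklund--Dell--Husfeldt result cited in the introduction. The target is an instance of evaluating $\PerfMatch$ modulo $2^{k'}$ on a $k'$-apex graph, with $k'=\mathcal{O}(k)$ so that both the parameterized reduction and the $\ETH[\parity]$ lower bound transfer.

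The reduction of Theorem~\ref{thm: apex hard} constructs, given a $k$-partitioned graph $H$, an auxiliary graph $G$ whose $k$ apex vertices implement ``vertex selectors'' (one per color class) via combined matchgates, while the remainder of $G$ is planar and its planar portion enforces the edge-consistency constraints between selected vertices. Without modification, one gets $\PerfMatch(G)=c\cdot \pPart(H)$ for an integer $c$ whose $2$-adic valuation is essentially uncontrolled, which would destroy the parity information modulo $2^{k'}$. The core task is therefore to redesign the $k$ apex gadgets, still via the combined matchgate technique, so that a single factor of $2$ is generated at each apex vertex on every legal selection, while illegal selections (or indeed any coordinate in which the simulating combination falls outside the intended signature) contribute values divisible by $2^{k'+1}$. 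If this is achieved, then summing over the $t^k$ Turing-reduction branches gives
\[
\PerfMatch(G)\;\equiv\;2^{k}\cdot\pPart(H)\pmod{2^{k+1}},
\]
so that reading off the coefficient of $2^k$ in the evaluation of the permanent modulo $2^{k+1}$ recovers the parity of $\pPart(H)$.

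Concretely, the apex selector signature I would target is the ``exactly one'' signature $\sigHW{=1}$ on the incidences between an apex vertex and its color class; the combined matchgate framework expresses this as a linear combination $\sum_{i=1}^{t}\alpha_i F_i$ of planar matchgates. The refinement over the proof of Theorem~\ref{thm: apex hard} is to choose the decomposition (and possibly rescale edge-weights inside each $F_i$) so that every $\alpha_i$ has $2$-adic valuation at least $1$ and the spurious ``off-target'' combinations that a single $F_i$ produces on the other $k-1$ apex gadgets always collect total valuation at least $k+1$. Verifying this cancellation is the main obstacle: it amounts to a delicate $2$-adic calculation on the $t^k$-term expansion, ensuring that the only surviving residue modulo $2^{k+1}$ is the ``diagonal'' one in which each apex gadget simultaneously enacts a valid selection.

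Once the gadgets are in place, the rest is routine. The parameter of the target instance is $k'=k$ up to an additive constant accounting for additional apices used by the planar edge-consistency gadgetry, the blow-up in instance size is polynomial in $|H|$, and $t^k=f(k)$ Turing queries to the mod-$2^{k+1}$ oracle solve one instance of $\parity$-$\pPart$, establishing $\Wone[\parity]$-hardness. The $\ETH[\parity]$ lower bound of $n^{\Omega(k/\log k)}$ for $\parity$-$k$-$\pClique$ transfers to the same asymptotic lower bound for the permanent modulo $2^k$ on $k$-apex graphs, since the reduction is linear in the parameter and polynomial in the instance size.
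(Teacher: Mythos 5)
Your proposal correctly recognizes that the essential obstacle is 2-adic: the combined-matchgate reduction of Theorem~\ref{thm: apex hard} produces constituent graphs whose permanents are tangled up with negative powers of~$2$, and one cannot naively port that reduction into $\Z{2^{k}}$ arithmetic. But having named the obstacle, the proposal does not actually overcome it. You write that the needed redesign ``amounts to a delicate $2$-adic calculation on the $t^k$-term expansion'' and that ``verifying this cancellation is the main obstacle'' --- precisely the part a proof must supply. There is no concrete gadget, no concrete linear combination with the required valuation properties, and no argument for why the spurious terms have valuation at least $k+1$. As stated, the proposal is a wish list rather than a proof.

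The paper's actual resolution is quite specific and qualitatively different from what you are envisioning, and it is worth spelling out what you are missing. The offending denominator comes from a single source: the \emph{even filter} inside the matchgate $\Gamma_{\act}$, which uses an edge of weight $\nicefrac{1}{2}$. The paper therefore swaps $\act$ for $\pre$, whose matchgate $\Gamma_{\pre}$ uses only weights in $\{-1,1\}$. The price is that the resulting gate $\Gamma(A)$ no longer realizes a clean $\{0,-T,-T+2\}$-valued signature: instead, $\Sig(\Gamma(A))$ is a polynomial in the quantities $\alpha_{u,v},\beta_{u,v}$ counting $\pre$-vertices above and below each grid position (Lemma~\ref{lem: sig Gamma}), including nonzero contributions on ``unwanted'' assignments with $x_N\neq x_S$. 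The key idea that your proposal lacks is the \emph{discrete derivative}: the paper builds a second gate $\Gamma_{\uparrow}$ by padding $\Gamma$ with dummy $\pre$-rows so that $\Sig(\Gamma_{\uparrow})$ equals $\Sig(\Gamma)$ with every $\alpha_{u,v}$ and $\beta_{u,v}$ shifted by $1$, and then considers $D := \Sig(\Gamma_{\uparrow})-\Sig(\Gamma)$. By direct computation the operator $D$ annihilates all the quadratic ``unwanted'' polynomials (e.g.\ $D(p_{u,v,w})=0$) and sends the linear terms to the constants $1$, $T$, or $T-1$. This yields Corollary~\ref{cor: combine modulo}: $g_\kappa$ is the linear combination $\frac{1}{4}\bigl(\Sig(\Gamma_{\uparrow})-\Sig(\Gamma)-S\cdot\Sig(\Phi)\bigr)$ with all three constituents built over edge-weights $\{-1,1\}$. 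Moving the $4^{|\C|}$ to the left-hand side and working modulo $2M$ with $M=2^{2|\C|}$ then recovers the parity of $\pGrid[\#](\A)$ exactly as you hoped to recover the parity of $\pPart$, but via a concrete mechanism rather than a postulated one.

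Two smaller remarks. First, the reduction source in the paper is $\pGrid[\parity]$ rather than $\parity$-$\pPart$ directly; this is a cosmetic choice (the former is reached from the latter via Theorem~\ref{thm: GridTiling hardness}), but the grid-tiling formulation is what makes the planar signature graph $G(\A)$ and its cell gates natural. Second, your summary of the reduction in Theorem~\ref{thm: apex hard} --- ``$k$ apex vertices implement vertex selectors, one per color class'' --- does not match the paper's construction, where apices live inside the $\Phi'$-gates attached to cells $\kappa\in\C$ and enforce membership in $\T(\kappa)$; this mischaracterization may be part of why the $2$-adic bookkeeping you sketch does not lead anywhere concrete.
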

Theorem~\ref{thm: perm mod 2^k} is proven by reduction from the following
problem $\pPart[\parity]$: Given vertex-colored graphs $H$ and $G$
as input, where each color in $H$ appears exactly once, count modulo $2$ the subgraphs
of $G$ that are isomorphic to $H$, respecting colors. It was shown
that the decision version of this problem, which is $\Wone$-hard, can be reduced to $\pPart[\parity]$
by means of randomized reductions \cite{DBLP:conf/icalp/BjorklundDH15}.
Furthermore, assuming $\ETH[\parity]$, an argument by Marx \cite{Marx2010}
implies that $\pPart[\parity]$ cannot be solved in time $n^{o(\ell/\log\ell)}$
for $\ell$-edge graphs $H$ and $n$-vertex graphs $G$.

In our reduction, we transform a given instance $(H,G)$ for $\pPart[\parity]$
with an $\ell$-edge graph $H$ to $3^{\ell}$ instances of the permanent
modulo $2^{2\ell+1}$ on $2\ell$-apex graphs with $\O(\ell^{2}n^{2})$
vertices. Thus, if we can prove better lower bounds for finding $k$-edge
subgraphs, then those bounds carry over to the seemingly unrelated
problem of evaluating permanents modulo $2^{k}$, even on $k$-apex graphs.
On the other hand, a randomized $n^{o(k)}$ time algorithm for the
permanent modulo $2^{k}$ on $k$-apex graphs would imply one for
$\pPart$ on $k$-edge graphs $H$, thus falsifying a hypothesis posed
by Marx \cite{Marx2010}.

\subsection{Proof technique: Linear combinations of signatures}

We phrase our proofs in the language of so-called Holant problems
\cite{Cai.Lu2007} and matchgates \cite{Cai.Lu2007,Cai.Choudhary2006,Cai.Gorenstein2013}.
Please consider Section~\ref{sec:Holants-and-linear}
for a more detailed introduction into these topics. In our proofs, we reformulate
parameterized counting problems as Holant problems (specific weighted
sums over assignments to the edges of graphs) and then try to realize the occurring signatures (local constraints at vertices) by certain
matchgates (gadgets). However, many required signatures cannot be
realized by matchgates. The key idea in this paper
is that such unrealizable signatures can sometimes still be realized
as \emph{linear combinations }of matchgate signatures.

To this end, we proceed as follows: First, we show how to simulate
non-existing gadgets $F$ as a formal linear combination of realizable
gadgets $F_{1},\ldots,F_{t}$, typically with $t=\O(1)$. Then, if
a graph $G$ features $k$ occurrences of $F$, we can easily reduce
$G$ to $t^{k}$ graphs that feature only occurrences of $F_{1},\ldots,F_{t}$.
Each of these $t^{k}$ graphs can then be handled by an algorithm
(when we aim at positive results) or by an oracle call (when proving
hardness results). The generality of our technique allows it to be
applied to various parameterized problems. For instance, a recent
$\Wone[\#]$-hardness proof for counting $k$-matchings \cite{Curtican.Marx2014}
can also be rephrased in this framework.

As pointed out by Tyson Williams, a similar idea appears under the notion of \emph{vanishing signatures} \cite{Guo.Lu2013,DBLP:conf/stoc/CaiGW13}.
These however apply linear combinations in a quite different setting.
In particular, they consider no connections to parameterized complexity.

\subsection*{Organization of the paper}

In Section \ref{sec:General-preliminaries}, we introduce notions
from parameterized complexity, exponential-time complexity, and we
prove $\Wone[\#]$-hardness of a modified version of the problem $\pGrid[\#]$,
our main reduction source for subsequent hardness proofs. In Section
\ref{sec:Holants-and-linear}, we introduce Holant problems and matchgates,
including some particular matchgates required in later sections. We
also introduce our proof technique of linearly combined signatures. This finishes the general introduction of our proof techniques.

In Section \ref{sec: genus}, we then give a first application of the machinery developed in the previous sections
by proving a $4^{g} \cdot n^{\O(1)}$ time algorithm for $\PerfMatch$ on
graphs of genus $g$. In Section \ref{sec: permanent k-apex}, we
then prove Theorem~\ref{thm: apex hard}, which asserts $\Wone[\#]$-hardness
of $\PerfMatch$ on bipartite unweighted $k$-apex graphs and implies Theorem \ref{thm: hadw hard}, the hardness under the Hadwiger number parameter. In Section
\ref{sec: permanent modulo}, we introduce a more involved construction
and an additional technique called \emph{discrete derivatives} to
transform the argument from Section~\ref{sec: permanent k-apex}
to a proof of Theorem~\ref{thm: perm mod 2^k}.

\section{\label{sec:General-preliminaries}General preliminaries}

For $n\in\N$, we write $[n]:=\{1,\ldots,n\}$. The graphs $G$ in
this paper are undirected, but they may feature parallel edges and
edge-weights. All \emph{hardness results} are however shown for \emph{simple}
graphs featuring no parallel edges and no edge-weights. We write $uv\in E(G)$
for an edge of $G$, and given $v\in V(G)$, we denote the edges incident
with $v$ by $I(v)$. Sometimes, we consider graphs to be embedded
on surfaces, see \cite{Diestel2012}.

For numbers $n\in\mathbb{N}$, we abbreviate $\parity n:=(n\ \mod\ 2)$.
Given a bitstring $x\in\{0,1\}^{*}$, we write $\hw(x):=\sum_{i}x_{i}$
for its \emph{Hamming weight}, and we define 
\begin{eqnarray*}
	\odd(x)  & := & \parity\hw(x), \\
	\even(x) & := & 1-\parity\hw(x).
\end{eqnarray*}
We write $\supp(f)$ for the support of a function $f$. For predicates
$\varphi$, we use the Iverson bracket notation 
\[
[\varphi]:=\begin{cases}
1 & \varphi\mbox{ is true,}\\
0 & \mbox{otherwise.}
\end{cases}
\]

Let $A$ and $B$ be sets; we define certain abbreviations for subsets of $A\times B$. 
For $b\in B$, we write $(\star,b)=\{(a,b)\mid a\in A\}$
for the \emph{column} at $b$. For $a\in A$, we write $(a,\star)=\{(a,b)\mid b\in B\}$
for the \emph{row} at $a$. 
We use this notation only when $A$ and $B$ are unambiguous from the context.
For $k\in\mathbb{N}$, we say that $(i,j)\in[k]^{2}$ and $(i',j')\in[k]^{2}$
are \emph{vertically adjacent} if $|i-i'|=1$ and $j=j'$. Likewise,
we call such pairs \emph{horizontally adjacent} if $|j-j'|=1$ and
$i=i'$.

\subsection{Parameterized complexity}

Parameterized counting problems are problems $\mathrm{A}/\pi$, where
$\mathrm{A}:\{0,1\}^{*}\to\mathbb{C}$ is a counting problem and $\pi:\{0,1\}^{*}\to\N$
is a polynomial-time computable parameterization, see \cite{Flum.Grohe2004}.
We define $\FPT$ as the class of all problems $\mathrm{A}/\pi$ such that $\mathrm{A}$ can be solved in time $f(\pi(x))|x|^{\O(1)}$.
Likewise, we define $\XP$ as the class of problems $\mathrm{A}/\pi$ that can
be solved in time $|x|^{f(\pi(x))}$, where $f:\N\to\N$ is a computable
function. In the following, we define the classes $\Wone$, $\Wone[\#]$
and $\Wone[\parity]$ we referred to in the introduction, using the following reduction
notions.
\begin{defn}
	[\cite{Flum.Grohe2004}]\label{def: fpt-reductions}Let $\mathrm{A}/\pi$
	and $\mathrm{B}/\pi'$ be parameterized counting problems. 
	\begin{itemize}
		\item We call $f:\{0,1\}^{*}\to\{0,1\}^{*}$ a \emph{parsimonious fpt-reduction}
		and write $\mathrm{A}/\pi \leqFptPar \mathrm{B}/\pi'$ if there
		are computable functions $r,s$ such that the following holds for all
		$x\in\{0,1\}^{*}$: 
		\begin{enumerate}
			\item We have $\mathrm{A}(x)=\mathrm{B}(f(x))$. 
			\item The running time of $f$ is bounded by $r(\pi(x))\cdot |x|^{\mathcal{O}(1)}$. 
			\item We have $\pi'(f(x))\leq s(\pi(x))$.
		\end{enumerate}
		If $\mathrm{A}$ and $\mathrm{B}$ are decision problems, 
		replace the first condition by ``$x\in\mathrm{A}$ iff $f(x)\in\mathrm{B}$'',
		and write $\mathrm{A}/ \pi \leqFpt \mathrm{B} / \pi'$.
		\item We call an algorithm $\mathbb{T}$ a \emph{Turing fpt-reduction}
		and write $\mathrm{A}/\pi \leqFptT \mathrm{B}/\pi'$ if there are
		computable functions $r$ and $s$ such that the following holds for
		all $x\in\{0,1\}^{*}$: Firstly, the running time of $\mathbb{T}$ on $x$ is bounded by $r(\pi(x))|x|^{\mathcal{O}(1)}$. 
		Secondly, every oracle query $y$ issued by $\mathbb{T}$ on $x$ satisfies
		$\pi'(y)\leq s(\pi(x))$.
	\end{itemize}
\end{defn}
We introduce $\Wone$, $\Wone[\parity]$
and $\Wone[\#]$ as the closures of clique-related problems
under fpt-reductions.
\begin{defn}
	Consider the following parameterized problems and complexity classes:
	\begin{itemize}
		\item Let $\pClique/k$ denote the problem of \emph{deciding}, on input
		a graph $G$ and $k\in\N$, whether $G$ contains a $k$-clique. 
		Let $\Wone$ denote the set of all problems $\mathrm{A}/\pi$ with $\mathrm{A}/\pi \leqFpt \pClique/k$.
		\item Let $\pClique[\#]/k$ denote the problem of determining, on input
		$G$ and $k$, the \emph{number} of $k$-cliques in $G$. 
		Let $\Wone[\#]$ denote the set of all problems $\mathrm{A}/\pi$ with $\mathrm{A}/\pi\leqFptPar\pClique[\#]/k$.
		\item Let $\pClique[\parity]/k$ denote the problem of \emph{deciding},
		on input $G$ and $k$, whether $G$ contains an \emph{odd} number
		of $k$-cliques. 
		Let $\Wone[\parity]$ denote the set of all $\mathrm{A}/\pi$
		with $\mathrm{A}/\pi\leqFpt\pClique[\parity]/k$.
	\end{itemize}
\end{defn}
It is a standard assumption of parameterized complexity theory that
$\FPT\neq\Wone$ holds, implying $\FPT\neq\Wone[\#]$. The problem
$\pClique/k$ is $\Wone$-complete by definition, so this assumption
can equivalently be considered as the statement that $\pClique/k$ is not fixed-parameter
tractable. Furthermore, it has been recently shown in \cite[Theorem 5]{DBLP:conf/icalp/BjorklundDH15}
that $\pClique[\parity]/k$ is $\Wone$-hard under randomized parameterized
reductions with constant one-sided error. Therefore, an fpt-algorithm
for $\pClique[\parity]/k$ would imply a randomized fpt-algorithm
for $\pClique/k$, which is considered almost as unlikely as $\FPT=\Wone.$

\subsection{Exponential-time complexity}

We also consider conditional lower bounds on the running times required
to solve problems. These are based on different exponential-time hypotheses,
introduced by \cite{Impagliazzo.Paturi2001,Impagliazzo.Paturi2001a}
and \cite{Dell.Husfeldt2014}.
\begin{defn}
	The exponential-time hypothesis $\ETH$, introduced in \cite{Impagliazzo.Paturi2001,Impagliazzo.Paturi2001a},
	claims that the satisfiability of $3$-CNF formulas on $n$ variables 
	cannot be decided in time $2^{o(n)}n^{O(1)}$.
	The hypothesis $\ETH[\#]$ postulates the same lower bound for counting the number of satisfying assignments to
	$3$-CNF formulas, and $\ETH[\parity]$ postulates the same for computing
	the parity of the number of satisfying assignments.
\end{defn}
The hypothesis $\ETH$ implies a lower bound for $\pClique/k$, and
thus also $\FPT\neq\Wone$: It was shown in \cite{Chen.Chor2005,Chen.Huang2006}
that $\pClique/k$ cannot be solved in time $f(k)\cdot n^{o(k)}$ on $n$-vertex
graphs, for any computable function $f$. Furthermore, if a problem $\mathrm{A}/\pi$ cannot be solved in time $f(k)\cdot n^{g(k)}$ under $\ETH$, and we can reduce $\mathrm{A}/\pi$
to $\mathrm{B}/\pi'$ with a reduction $f$ that satisfies $\pi'(f(x))\in\O(\pi(x))$
for all $x$, then it can be seen that $\mathrm{B}/\pi'$ cannot be solved in time
$f'(k)\cdot n^{\Omega(g(k))}$ under $\ETH$, for any computable function $f'$.

By an isolation argument similar to the Valiant-Vazirani theorem \cite{Valiant.Vazirani1986},
it was shown in \cite{CalabroIKP08} that a $2^{o(n)}$ time algorithm
for counting satisfying assignments to $3$-CNF formulas modulo $2$
implies a randomized $2^{o(n)}$ time algorithm for deciding the existence
of a satisfying assignment. In other words, a randomized version $\ETH[r]$
of $\ETH$ implies $\ETH[\parity]$; see also \cite{Dell.Husfeldt2014} for more details.

\subsection{Grid tilings and vertex-colored subgraphs}

We will reduce from the problem $\pGrid$ of deciding the existence of grid tilings, as well as its counting version  $\mathsf{\#}\pGrid$ and its parity counting version $\pGrid[\parity]$. The decision version $\pGrid$ was introduced by
Marx \cite{Marx2012} in order to obtain lower bounds for planar multiway
cut, but grid tilings have since proven to be generally useful for proving hardness of problems on planar structures \cite{Marx.Pilipczuk2014}. 
\begin{defn}
	\label{def: GridTiling}The inputs to the problem $\pGrid$ are numbers
	$n,k\in\mathbb{N}$, together with a set $\mathcal{C}\subseteq[k]^{2}$
	and a function {\small{}$\mathcal{T} : \C \to 2^{[n]^{2}} $}. 
	The task is to decide whether there exists
	a\emph{ grid tiling} of $\mathcal{T}$, i.e., a function $a:[k]^{2}\to[n]^{2}$
	such that:
	\begin{enumerate}
		\item For horizontally adjacent $\kappa,\kappa'\in[k]^{2}$, the first components
		of $a(\kappa)$ and $a(\kappa')$ agree.
		\item For vertically adjacent $\kappa,\kappa'\in[k]^{2}$, the second components
		of $a(\kappa)$ and $a(\kappa')$ agree. 
		\item For all $\kappa\in\mathcal{C}$, we have $a(\kappa)\in\mathcal{T}(\kappa)$.
	\end{enumerate}
	On the same inputs, we also define the problem $\mathsf{\#}\pGrid$,
	which asks to determine the \emph{number} of grid tilings, and the
	problem $\pGrid[\parity]$, which asks to determine the \emph{parity}
	of this number. All three problems are parameterized by $k$.\end{defn}
It should be noted that our definition of $\pGrid$ is actually a generalized version of Marx's
original formulation \cite{Marx2012}: In his definition, the set $\C$ of any instance is fixed to
$\C=[k]^{2}$. That is, the third condition of Definition~\ref{def: GridTiling}
is required to apply for \emph{all} $\kappa\in[k]^{2}$, whereas in our formulation, only a subset is relevant.
In particular, we may choose sparse subsets $\C$ with $|\C| = \mathcal{O}(k)$, which will make the generalized grid tiling problems very useful in proving lower bounds under the exponential-time hypotheses.

By reduction from $k$-cliques, Marx showed that $\pGrid$ is complete
for $\Wone$. A simple adaptation of this reduction shows that the same holds for its counting and parity
version, where $\Wone[\#]$ and $\Wone[\parity]$ take the part of
$\Wone$. In the remainder of this subsection, we give a different reduction, which chooses partitioned
subgraph isomorphisms rather than $k$-cliques as a reduction source.
This allows us to transfer an almost-tight conditional lower bound
for subgraph isomorphisms under $\ETH$ to $\pGrid$.
\begin{defn}
	For $k\in\N$, a \emph{$[k]$-colored graph} is a pair $(H,c)$, where
	$H$ is a graph and $c:V(H)\to[k]$ is a coloring. We call $(H,c)$ \emph{colorful}
	if $c$ is bijective. This implies of course that $H$ has $k$
	vertices. 
	
	For $[k]$-colored graphs $(H,c)$ and $(G,c')$, we say
	that $(H,c)$ is \emph{color-preserving isomorphic} to $(G,c')$ if
	there exists an isomorphism $f$ from $H$ to $G$ such that $c(v)=c'(f(v))$
	holds for all $v\in V(H)$. To simplify notation, we will often write
	$G$ rather than $(G,c)$ for a colored graph.
	
	The problem $\pPart$ is defined as follows: The input consists of $[k]$-colored
	graphs $H$ and $G$, where $H$ is colorful. The task is to decide
	whether there exists a copy of $H$ in $G$, which is a (not necessarily
	induced) subgraph $F$ of $G$ such that $H$ and $F$ are color-preserving
	isomorphic. Likewise, the problem $\pPart[\#]$ asks to determine
	the \emph{number} of copies of $H$ in $G$, and $\pPart[\oplus]$
	asks to determine its parity. All problems are parameterized by $k$.
\end{defn}
It can be shown by a parsimonious reduction from $\pClique$ that
the problem $\pPart$ is $\Wone$-complete, and this implies similar
statements for its other variants as well. We omit the elementary proof.
\begin{lem}
	\label{lem: PartitionedSub hard}The three variants of $\pPart$ are
	complete for $\Wone$, $\Wone[\#]$ or $\Wone[\parity]$, respectively.\end{lem}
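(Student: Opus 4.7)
My plan is to prove the three completeness statements simultaneously by giving a parsimonious fpt-reduction in each direction between $\pPart$ (in its three variants) and $\pClique$ (in the corresponding variant). Since the three reduction notions $\leqFptPar$, $\leqFpt$, and $\leqFpt$ (the latter for the parity version) all respect the parity and exact count equalities needed, a single pair of combinatorial constructions will handle all three cases uniformly.

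For \emph{hardness} of $\pPart$, I would reduce from $\pClique$. Given an input graph $G$ with a fixed ordering $v_1,\ldots,v_n$ on $V(G)$ and a parameter $k$, construct a $[k]$-colored graph $G'$ with vertex set $V(G)\times[k]$, where $(v_a,i)$ gets color $i$. For $i<j$, place an edge between $(v_a,i)$ and $(v_b,j)$ iff $a<b$ and $v_a v_b\in E(G)$. Let $H$ be the colorful $K_k$ on colors $\{1,\ldots,k\}$. A copy of $H$ in $G'$ must contain exactly one vertex $(v_{a_i},i)$ for each color $i$, and the pairwise edge constraints force $a_1<a_2<\cdots<a_k$ together with $\{v_{a_1},\ldots,v_{a_k}\}$ being a $k$-clique in $G$. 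The ordering constraint is crucial: it eliminates the $k!$ factor that a naive reduction would introduce, giving a bijection between $k$-cliques in $G$ and copies of $H$ in $G'$. Consequently $\pClique[\#]/k\leqFptPar\pPart[\#]/k$ and $\pClique[\parity]/k\leqFpt\pPart[\parity]/k$; the decision version follows by dropping the count.

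For \emph{membership} of $\pPart$ in the respective clique class, I would reduce in the opposite direction. Given colorful $H$ on $[k]$ and a $[k]$-colored graph $G$ with coloring $c$, construct $G^{*}$ on vertex set $V(G)$ with edges
\[
uv\in E(G^{*})\ \iff\ c(u)\neq c(v)\ \text{and}\ \bigl(c(u)c(v)\notin E(H)\ \text{or}\ uv\in E(G)\bigr).
\]
Any $k$-clique in $G^{*}$ uses $k$ distinct colors, so exactly one vertex per color, and the definition of $E(G^{*})$ ensures that every edge of $H$ is realized by an edge of $G$ on the corresponding selected vertices; conversely, non-edges of $H$ impose no restriction. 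This yields a bijection between copies of $H$ in $G$ and $k$-cliques in $G^{*}$, giving parsimonious reductions $\pPart[\#]/k\leqFptPar\pClique[\#]/k$ and $\pPart[\parity]/k\leqFpt\pClique[\parity]/k$, as well as $\pPart/k\leqFpt\pClique/k$.

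Both constructions are clearly polynomial-time and preserve the parameter $k$ exactly, so the reductions satisfy all three conditions of Definition~\ref{def: fpt-reductions}. There is no substantial obstacle here: the only subtlety worth flagging is the ordering trick in the hardness reduction, which avoids the spurious $k!$ overcounting that would otherwise break parsimoniousness. Combining the two directions establishes completeness of each variant of $\pPart$ for the corresponding class among $\Wone$, $\Wone[\#]$, and $\Wone[\parity]$.
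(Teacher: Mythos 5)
Your proof is correct and matches the approach the paper indicates. The paper explicitly omits this proof as elementary, noting only that it proceeds ``by a parsimonious reduction from $\pClique$'' — which is exactly your hardness direction, with the vertex-ordering trick correctly supplying the parsimony (avoiding the spurious $k!$ factor). Your membership direction via $G^{*}$ is the standard complementary construction, and the bijection argument is sound because a colorful $H$-copy is determined by the choice of one vertex per color (the color-preserving isomorphism is forced, and the edge set of the copy is then forced by $E(H)$). All three variants follow uniformly since both reductions preserve the exact count, hence also parity and decision.
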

\begin{rem}
	\label{rem: color-preserve-preprocessing}Let $H$ and $G$ be $[k]$-colored
	such that $H$ is colorful; we assume $V(H)=[k]$ without limitation
	of generality. If $F$ is a $H$-copy in $G$ and $uv\in E(F)$ is
	an edge with endpoints of colors $i$ and $j$ for some $i,j\in[k]$,
	then the edge $ij$ must be present in $H$. 
	
	We may therefore assume the following: Whenever an instance $(H,G)$ to $\pPart$ is given, then
	for all $i,j\in[k]$ with $ij\notin E(H)$, the graph $G$ contains no
	edges between $i$-colored and $j$-colored vertices. Otherwise, we can delete these edges without affecting the set of color-preserving $H$-copies in $G$.
\end{rem}
In the following, we consider instances $(H,G)$ for $\pPart$ with
$n=|V(G)|$ and $k=|V(H)|$. We can solve each such instance in time
$n^{\O(k)}$ by brute force, and by reduction from $\pClique$, it
was shown that algorithms with running time $f(k) \cdot n^{o(k)}$ would refute
$\ETH$, see \cite{Chen.Chor2005,Chen.Huang2006}. 

This lower bound alone would however not suffice for our purposes
of proving tight lower bounds: In the reductions from $\pPart$ to
the permanent we construct later, each \emph{edge} of $H$ will incur
some constant parameter blowup. As an example, on instances $(H,G)$,
our reduction images for the permanent will have $\O(|E(H)|)$ apices,
which amounts to $\O(k^{2})$ apices if $H$ is a $k$-clique. Thus,
if we reduced from $\pClique$ for our lower bound, then $\ETH$ could
only rule out algorithms with running time $n^{o(\sqrt{t})}$ for
the permanent on $t$-apex graphs. This is however obviously far from the upper bound of $\O(n^{t+3})$ time obtained by the brute-force algorithm, and we would not consider such a result to be satisfactory. 

To avoid this problem, we use a refined lower bound for $\pPart$, shown
also by Marx, which allows to assume that $H$ has constant degree,
and thus, only $\O(k)$ edges, see \cite[Corollary~6.3]{Marx2010}.
\begin{thm}[\cite{Marx2010}]
	\label{thm: PartitionedSub Marx}Assuming $\ETH$, there is a universal
	constant $C^{*}$ such that $\pPart$ cannot be solved in time $f(k) \cdot n^{o(k/\log k)}$, for any computable function $f$,
	even on instances $(H,G)$ where $H$ has maximum degree at most $C^{*}$. The same applies to the variants 
	$\pPart[\#]$ and $\pPart[\parity]$, assuming $\ETH[\#]$ and $\ETH[\parity]$
	respectively.
\end{thm}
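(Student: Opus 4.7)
The plan is to reduce $3$-SAT (and for the other two statements, its counting and parity variants) to $\pPart$ on bounded-degree pattern graphs, so that the respective $\ETH$ lower bound transfers to an $n^{o(k/\log k)}$ lower bound for $\pPart$. The argument has three stages.

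First, apply the Sparsification Lemma of Impagliazzo--Paturi--Zane to assume without loss of generality that a given $3$-CNF formula $\phi$ on $N$ variables has at most $C \cdot N$ clauses for an absolute constant $C$. Its primal graph $P_\phi$ (variables as vertices, clauses as edges) then has $N$ vertices and $O(N)$ edges. Set $k := \lceil N / \log N \rceil$; with this choice we will have $n^{o(k/\log k)} = 2^{o(N)}$ whenever $n = N^{O(1)}$.

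Second, construct a colorful pattern graph $H$ with $|V(H)| = k$ and maximum degree at most an absolute constant $C^{*}$ (so that $|E(H)| = O(k)$), together with a low-congestion, low-dilation embedding of $P_\phi$ into $H$: each variable of $P_\phi$ is placed in a ``bucket'' vertex of $H$ of capacity $O(N/k) = O(\log N)$, and each clause-edge of $P_\phi$ is routed along a path of length $O(\log N)$ in $H$ so that no edge of $H$ is used more than $O(\log N)$ times. Such an $H$ exists by a probabilistic or explicit construction based on bounded-degree expander-like graphs; the step exploits that $P_\phi$ has only $O(k \log k)$ edges and is the heart of Marx's argument. Given the embedding, build the colored host graph $G$ as follows: for each $v \in V(H)$, let the color class $V_v$ enumerate all joint assignments to the variables in $v$'s bucket that are locally consistent with every clause touching or routed through $v$, annotated with the truth values forwarded along each path using $v$; for each $uv \in E(H)$, add edges in $G$ between states in $V_u$ and $V_v$ that agree on their shared coordinates. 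This gives a bijection between satisfying assignments of $\phi$ and color-preserving copies of $H$ in $G$, with each $|V_v| \le 2^{O(\log N)} = N^{O(1)}$, so $n := |V(G)| = N^{O(1)}$.

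Third, an algorithm solving $\pPart$ on instances with $\Delta(H) \le C^{*}$ in time $f(k) \cdot n^{o(k/\log k)}$ would run, on the constructed instance, in time
\[
f(k) \cdot N^{o(k/\log k)} \;=\; f(k) \cdot 2^{\log N \cdot o(N/(\log N)^{2})} \;=\; 2^{o(N/\log N)} \;\subseteq\; 2^{o(N)},
\]
contradicting $\ETH$. Because the reduction is parsimonious via the above bijection, the same construction simultaneously transfers the $\ETH[\#]$ lower bound for counting $3$-SAT to $\pPart[\#]$ and the $\ETH[\parity]$ lower bound for parity $3$-SAT to $\pPart[\parity]$. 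The principal obstacle is the second stage: producing a bounded-degree $k$-vertex pattern admitting a low-congestion embedding of an $O(k \log k)$-edge graph. This is precisely where the extra $\log k$ loss in the exponent originates, and it is the main technical contribution of Marx's paper.
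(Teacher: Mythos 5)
This theorem is cited from \cite{Marx2010} rather than proved in the paper, so there is no in-paper proof to compare against. Your reconstruction captures the correct high-level shape of Marx's argument — sparsify the $3$-CNF, embed its constraint structure into a bounded-degree pattern, transfer the bound — and you are explicit that the embedding lemma is the technical heart and that you are not proving it. Since that lemma (embedding a graph with $\Theta(k\log k)$ edges into a bounded-degree $k$-vertex pattern with controlled overhead, the $\log$ factor being inherent to expander-based routing) is exactly the novel combinatorics of Marx's paper, leaving it as a black box is a genuine gap; you do flag it honestly.

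Beyond that acknowledged gap, a few of the quantities you commit to do not cohere, and one step is silently skipped. With $k = N/\log N$ and $H$ of bounded degree, $|E(H)| = O(N/\log N)$; routing $\Theta(N)$ clause-paths each of length $\Theta(\log N)$ forces average, hence maximum, congestion $\Omega(\log^2 N)$, not $O(\log N)$. That pushes $|V_v|$ to $2^{\Theta(\log^2 N)}$ and $n$ to $N^{\Theta(\log N)}$ rather than $N^{O(1)}$; the final exponent calculation $\log n \cdot o(k/\log k) = O(\log^2 N)\cdot o(N/\log^2 N) = o(N)$ still closes, but your stated congestion and your stated $n = N^{O(1)}$ cannot both hold. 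More seriously, with $k$ growing as a function of $N$ the factor $f(k)$ is a priori unbounded and can swamp $2^{o(N)}$; the correct handling, which is also what Marx does, is to fix for each target $\epsilon>0$ a \emph{constant} $k = k(\epsilon)$ large enough that the hypothesised exponent sits below $\epsilon k/\log k$, run the reduction at that fixed $k$ with buckets of size $N/k$, and let $f(k)$ become a constant depending only on $\epsilon$. Finally, clauses are arity-$3$ hyperedges: routing the pairwise primal edges does not by itself guarantee that some single vertex of $H$ sees all three truth values of a given clause, so the construction should be phrased via the variable--clause incidence structure or the routing must be argued to co-locate each clause's three carriers; your phrase about "every clause touching or routed through $v$" gestures at this but does not establish it. For the counting and parity variants one additionally needs a parsimonious sparsification, which is available in the literature but is a further unstated dependency.
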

Using Lemma~\ref{lem: PartitionedSub hard} and Theorem~\ref{thm: PartitionedSub Marx},
we can then prove similar lower bounds for grid
tilings.
\begin{thm}
	\label{thm: GridTiling hardness}The three variants of $\pGrid$ are
	complete for $\Wone$, $\Wone[\#]$ and $\Wone[\parity]$, respectively.
	Furthermore, the problems admit no $n^{o(k/\log k)}$ time algorithms, even on instances
	with $|\mathcal{C}|=\O(k)$, unless $\ETH$, $\ETH[\#]$ or $\ETH[\parity]$
	fails, respectively.\end{thm}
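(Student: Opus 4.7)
The plan is to give a single parsimonious fpt-reduction from $\pPart$ to $\pGrid$ whose parameter is $k$ and whose constrained set $\C$ has size $\O(k)$; composing this with Lemma~\ref{lem: PartitionedSub hard} and Theorem~\ref{thm: PartitionedSub Marx} then delivers both the completeness statements and the $n^{o(k/\log k)}$ lower bounds in one stroke. By Theorem~\ref{thm: PartitionedSub Marx} I may additionally assume that the pattern graph $H$ has maximum degree at most the universal constant $C^{*}$, so $|E(H)| \leq C^{*} k / 2$.

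Given an instance $(H,G)$ with $V(H) = [k]$, coloring $c'$ of $G$, and $n = |V(G)|$, I would construct a $k \times k$ grid-tiling instance whose constrained cells lie on the diagonal and at edge positions of $H$. For each $i \in [k]$, set
\[
\T(i,i) := \{(v,v) \mid v \in V(G),\ c'(v) = i\},
\]
and for every $ij \in E(H)$ with $i < j$, set
\[
\T(i,j) := \{(u,w) \mid uw \in E(G),\ c'(u) = i,\ c'(w) = j\},
\]
letting $\C$ be exactly the set of cells on which $\T$ is defined, so $|\C| \leq k + C^{*} k / 2 = \O(k)$. Parsimoniousness follows from a short propagation argument built into the axioms of Definition~\ref{def: GridTiling}: horizontal adjacency forces the first coordinate of the assignment $a$ to be constant along each row, and vertical adjacency does the same for the second coordinate along each column, so every consistent assignment has the form $a(i,j) = (v_{i}, v_{j})$ for some $v_{1}, \ldots, v_{k} \in [n]$. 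The diagonal constraints then enforce $c'(v_{i}) = i$, while the edge constraints enforce $v_{i} v_{j} \in E(G)$ whenever $ij \in E(H)$. Combined with Remark~\ref{rem: color-preserve-preprocessing}, which lets us pre-delete edges of $G$ whose endpoint colors do not form an edge of $H$, this identifies grid tilings bijectively with color-preserving $H$-copies in $G$.

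Since the reduction is parsimonious and preserves the parameter, it transfers hardness from each of $\pPart$, $\pPart[\#]$, and $\pPart[\parity]$ to the corresponding variant of $\pGrid$, yielding $\Wone$-, $\Wone[\#]$-, and $\Wone[\parity]$-hardness respectively; containment in these classes is routine via the standard reduction that encodes each cell-value pair as a vertex and connects compatible pairs to form a $k^{2}$-clique instance. Since $|V(H)| = k$ passes unchanged through the reduction, Theorem~\ref{thm: PartitionedSub Marx} immediately yields the desired $n^{o(k/\log k)}$ lower bounds under $\ETH$, $\ETH[\#]$, and $\ETH[\parity]$. The only real subtlety I foresee is that unconstrained cells in $[k]^{2} \setminus \C$ carry no explicit tile set and may \emph{a priori} take any value in $[n]^{2}$; the propagation argument is precisely what rules out spurious tilings at these cells, since their values are fully determined by the $v_{i}$ selected at the diagonal, which is what allows us to keep $|\C| = \O(k)$ without sacrificing parsimoniousness.
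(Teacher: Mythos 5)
Your proposal is correct and takes essentially the same route as the paper: both reduce parsimoniously from $\pPart$ by constraining only the diagonal cells $(i,i)$ (which encode the choice of the colored vertices $v_1,\ldots,v_k$) together with the $\O(k)$ off-diagonal cells at edge positions of $H$ (which enforce adjacency in $G$), then invoke Lemma~\ref{lem: PartitionedSub hard} and Theorem~\ref{thm: PartitionedSub Marx}. The only difference is one of encoding: the paper passes to a directed graph with all self-loops so that a single uniform definition $\T(i,j) := E_{i,j}(G')$ for $ij \in E(H')$ covers both the diagonal and the edge cells, whereas you split these into an explicit diagonal tile set $\{(v,v) : c'(v)=i\}$ and tile sets at $ij \in E(H)$ with $i<j$; both yield $|\C| = \O(k)$ under the degree bound and a bijection between grid tilings and color-preserving $H$-copies.
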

\begin{proof}
	Let $G$ and $H$ be $[k]$-vertex-colored, where we assume $V(G)=[n]$
	and $V(H)=[k]$. Replace each edge $uv$ in $G$ by the directed edges
	$uv$ and $vu$, then add all self-loops to $G$ to obtain a colored
	directed graph $G'$. Define the colorful directed graph $H'$
	by applying the same operations on $H$. Then we can observe that the color-preserving $H$-copies in $G$ stand in bijection with the color-preserving $H'$-copies in $G'$.
	
	For $i,j\in[k]$, write $E_{i,j}=E_{i,j}(G')$ for the set of directed
	edges in $G'$ from $i$-colored vertices to $j$-colored vertices.
	By Remark~\ref{rem: color-preserve-preprocessing}, we may assume
	that $E_{i,j}=\emptyset$ if $ij\notin E(H')$. Note that $E_{i,j}\subseteq[n]^{2}$;
	we use this to define an instance $(n,k,\C,\T)$ for $\pGrid$ by
	declaring $\C:=E(H')$ and $\mathcal{T}(i,j):=E_{i,j}$ for all $ij\in E(H')$.
	We then claim that the grid tilings of this instance correspond bijectively
	to the $H'$-copies in $G'$. This gives a parsimonious reduction
	from $\#\pPart$ to $\#\pGrid$, which, together with Lemma~\ref{lem: PartitionedSub hard}
	and Theorem~\ref{thm: PartitionedSub Marx}, implies all claims of
	the theorem.
	
	It remains to verify the claimed bijection: The third property
	of Definition~\ref{def: GridTiling} implies that every tiling $a:[k]^{2}\to[n]^{2}$
	encodes an edge-subset $S_{a}\subseteq E(G')$ with $|S_{a}|=|E(H)|$
	that picks exactly one element from $E_{i,j}$ for each $ij\in E(H')$.
	If the edges in $S_{a}$ are incident with exactly $k$ distinct vertices,
	then $S_{a}$ induces a $H'$-copy in $G'$. By the first two properties
	of Definition~\ref{def: GridTiling}, the edge set $S_{a}$ contains
	exactly $k$ distinct endpoints and $k$ distinct starting points.
	Since $E_{i,i}$ for $i\in[k]$ contains only self-loops, the sets
	of endpoints and starting points of edges in $S_{a}$ are identical,
	which implies that $S_{a}$ is a $H'$-copy in $G'$. Conversely,
	every $H'$-copy in $G'$ can be mapped to such a grid tiling by reversing
	this operation.
\end{proof}
In the following, we add a small technical extension to Theorem~\ref{thm: GridTiling hardness}
that allows us to assume each input instance to be balanced along
rows or columns in a certain way. While it is almost trivial to ensure this balance property by adding dummy elements, it turns out to be very useful in
our reductions from $\pGrid$.
\begin{lem}
	\label{lem: GridTiling balance}Let $\mathcal{A}=(n,k,\C,\T)$ be
	an instance for $\pGrid$ and let $\mathfrak{W}$ be either of the
	words ``horizontal'' or ``vertical''. In polynomial time, we can
	then compute a number $T\in\N$ and a grid tiling instance $\A'=(n',k,\C,\T')$
	with $n'=\O(k^{2}n)$ such that:
	\begin{enumerate}
		\item The instances $\A$ and $\A'$ have precisely the same grid tilings.
		\item For $u\in[n]$, write $(u,\star):=\{(u,v)\mid v\in[n]\}$. For $v\in[n]$,
		write $(\star,v):=\{(u,v)\mid u\in[n]\}$.
		
		\begin{enumerate}
			\item If $\mathfrak{\mathfrak{W}}$ is ``horizontal'', then for all $\kappa\in\C$
			and $u\in[n']$, we have $|\T'(\kappa)\cap(u,\star)|=T$.
			\item If $\mathfrak{\mathfrak{W}}$ is ``vertical'', then for all $\kappa\in\C$
			and $v\in[n']$, we have $|\T'(\kappa)\cap(\star,v)|=T$.
		\end{enumerate}
	\end{enumerate}
\end{lem}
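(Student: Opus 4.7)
My plan is to handle the horizontal case (a); the vertical case (b) follows by the symmetric construction obtained by swapping the roles of the two coordinates. I will construct $\A'$ by initializing $\T'(\kappa):=\T(\kappa)$ and then appending fresh ``padding'' pairs $(u,v^{*})$ whose column coordinate $v^{*}\in[n']\setminus[n]$ is a newly introduced ``dummy'' column, designed so that no such $v^{*}$ can occur in any valid tiling of $\A'$.

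Concretely, I would set $T:=\max_{\kappa\in\C,\,u\in[n]}|\T(\kappa)\cap(u,\star)|$ and $n'=n+\O(k^{2}n)$. For each pair $(\kappa,u)\in\C\times[n']$ whose current row of $\T'(\kappa)$ has strictly fewer than $T$ entries, I insert the required number of padding pairs $(u,v^{*})$, allocating one fresh dummy column $v^{*}>n$ per inserted pair. Each dummy column is used globally only once: it appears in the single padded $\T'(\kappa)$ and in no other $\T'(\kappa')$. A direct accounting of padding slots shows that $\O(k^{2}n)$ new indices suffice, giving $n'=\O(k^{2}n)$, and the whole construction runs in polynomial time.

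For correctness I would write a tiling $a:[k]^{2}\to[n']^{2}$ in the equivalent form $a(i,j)=(u_{i},v_{j})$ with $(u_{i},v_{j})\in\T'((i,j))$ for every $(i,j)\in\C$, and argue that $u_{i},v_{j}\in[n]$ for all $i,j$, so the tiling coincides with a tiling of $\A$. Suppose some $v_{j}>n$ were a dummy column; by the uniqueness convention it would belong to exactly one tile set $\T'((i_{0},j))$, so any other $\C$-cell $(i',j)$ with $i'\neq i_{0}$ in the $[k]^{2}$-column $j$ would have no pair in $\T'((i',j))$ with second coordinate $v_{j}$, contradicting $(u_{i'},v_{j})\in\T'((i',j))$. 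A symmetric argument on rows, exploiting that dummy rows of every $\T'(\kappa)$ are populated only by dummy-column pairs, then forces $u_{i}\in[n]$.

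The main obstacle is the corner case where a $\C$-cell $\kappa$ is the sole $\C$-cell in its $[k]^{2}$-column: the column-blocking step above becomes vacuous, and a dummy column could in principle be chosen at that position. I would resolve this by exploiting the $[k]^{2}$-row of $\kappa$ as a secondary blocker: when $\kappa$ is alone in its column, I only add padding pairs in rows $u$ for which some horizontally adjacent $\C$-cell $\kappa'$ in the $[k]^{2}$-row of $\kappa$ has no entry of $\T(\kappa')$ with first coordinate $u$, so that any tiling attempting to use the padding is killed by horizontal consistency. For $\C$-cells that are simultaneously alone in both their $[k]^{2}$-row and $[k]^{2}$-column, $\T(\kappa)$ contributes multiplicatively to the tiling count, and I would handle these by a local combinatorial replication of $\T(\kappa)$ that preserves the multiplicative factor while reaching the target row count $T$, all within the $n'=\O(k^{2}n)$ budget.
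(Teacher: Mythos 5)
Your core idea is the same as the paper's: pad each $\T(\kappa)$ with ``dummy'' pairs that are syntactically present (so that the row/column counts become uniform) but that can never be used in a valid tiling because the dummy coordinate value is disjoint across cells and hence violates the adjacency constraint at a neighbouring $\C$-cell. For horizontal balance you pad the column coordinate; the paper's proof (written for vertical balance) pads the row coordinate. That much matches.

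There are, however, two concrete discrepancies. First, the budget. You allocate a fresh dummy column \emph{per inserted pair}. Summing over $\kappa\in\C$ and $u\in[n]$, the number of pairs you insert is $\sum_{\kappa\in\C}\sum_{u\in[n]}\bigl(T-|\T(\kappa)\cap(u,\star)|\bigr)$, which can be as large as $|\C|\cdot n\cdot T = \Theta(k^{2}n^{2})$ (since $T$ can be $\Theta(n)$). So your construction yields $n'=\O(k^{2}n^{2})$, not the $\O(k^{2}n)$ you claim. The paper avoids this by allocating a single \emph{block} $B_{\kappa}$ of only $T$ dummy row-values for each $\kappa$ and reusing these $T$ values across all of the $n$ columns of $\T(\kappa)$ (pairs $(f,v)$ with $f\in B_{\kappa}$ and varying $v\in[n]$ are all distinct), which brings $n'$ down to $n+k^{2}T=\O(k^{2}n)$. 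Your argument works just as well with a shared per-cell block of dummy column-values, so this is fixable, but as written the bound claimed is wrong.

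Second, the corner cases. You are right that the blocking argument, as such, relies on the $\C$-cell $\kappa$ not being alone in its relevant grid line (here, its grid column). But your two proposed workarounds do not hold up. In the first (conditionally adding padding only in rows $u$ where a horizontally adjacent $\C$-cell lacks an entry with first coordinate $u$), the rows that are unsafe to pad simply remain unbalanced, so the balance guarantee fails. In the second (a ``local combinatorial replication'' for cells isolated in both their row and column), no such replication can exist: if $\kappa$ is the sole $\C$-cell in both its grid row and grid column, then every element you add to $\T'(\kappa)$ is a genuinely new, unblocked tiling choice, so the set (and number) of tilings must change. The paper's proof, for its part, does not treat this corner case at all — it simply asserts that any assignment using a dummy violates horizontal agreement. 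That assertion, taken literally, presupposes a second $\C$-cell in the same grid row; the reductions in the paper that invoke this lemma always produce instances (via $\C=E(H')$ with all self-loops and bidirectional edges present) in which every grid row and column contains at least two $\C$-cells, so the issue never bites there. The clean path is therefore to adopt the paper's per-cell block construction and, if one wishes to be fully rigorous, record the mild assumption that every relevant grid line contains at least two $\C$-cells, which is automatic in the intended applications — rather than try to plug the hole with a replication step that cannot exist.
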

\begin{proof}
	We show the statement if $\mathfrak{\mathfrak{W}}$ is ``vertical'';
	the horizontal case is shown in exactly the same manner. Let us first
	define 
	\[
	T_{\kappa,v}:=|\mathcal{T}(\kappa)\cap(\star,v)|\quad\mbox{for }\kappa\in[k]^{2}\mbox{ and }v\in[n],
	\]
	that is, the number of elements in the $v$-th column of $\T(\kappa)$.
	Then we define 
	\[
	T:=\max_{\kappa\in[k]^{2},\,v\in[n]}T_{\kappa,v}
	\]
	and let $n':=n+k^{2}T$. Consider $[n']$ to be partitioned into $[n]$
	and $k^{2}$ consecutive ``dummy'' blocks $B_{\kappa}$ for $\kappa\in[k]^{2}$,
	with $|B_{\kappa}|=T$. We keep $\C$ unchanged and modify $\mathcal{T}$
	to a function $\mathcal{T}'$ that maps from $\C$ into the power-set
	of $[n']^{2}$: For $\kappa\in[k]^{2}$ and $v\in[n]$, we simply add $T-T_{\kappa,v}$
	arbitrary distinct dummy elements from $\{(f,v)\mid f\in B_{\kappa}\}$
	to $\mathcal{T}(\kappa)$ in order to obtain $\mathcal{T}'(\kappa)$.
	
	This ensures the vertical balance property defined in the statement of the lemma, and we observe that $\mathcal{T}'$
	has the same grid tilings as $\mathcal{T}$: Every grid tiling of
	$\mathcal{T}$ is also one of $\mathcal{T}'$. Furthermore, dummy
	elements cannot be chosen in any grid tiling of $\mathcal{T}'$ since,
	for all $\kappa$ and $\kappa'$, the dummy elements in $\mathcal{T}'(\kappa)$
	and $\mathcal{T}'(\kappa')$ have disjoint first coordinates, which
	are also distinct from $[n]$. Thus, in particular, any assignment
	using dummy elements cannot satisfy the first condition of a grid
	tiling required in Definition~\ref{def: GridTiling}.\end{proof}

\section{\label{sec:Holants-and-linear}Holants, matchgates, linear combinations of
	signatures}

In the following, we give a introduction to what we call the \emph{Holant
	framework}, a toolbox introduced by \cite{Valiant2008,Cai.Lu2007,Cai.Lu2008}. Some of this material is abridged from \cite{Curticapean15}. 
We use Holant problems as an intermediate step for reducing problems, such as counting grid tilings, to the permanent.
\global\long\def\assignment{x\in\{0,1\}^{E(\Omega)}}

\subsection{Signature graphs and Holants}

The input to a Holant problem is a so-called signature graph, that is, a graph with certain functions associated with its vertices.

\begin{defn}
	A \emph{signature graph} is an edge-weighted graph $\Omega$ which
	may feature parallel edges, and which has a \emph{vertex function}
	$f_{v}:\{0,1\}^{I(v)}\to\mathbb{C}$ associated with each $v\in V(\Omega)$.
	We also call $f_{v}$ the \emph{signature} of $v$. If $v$ has degree
	$d$ and an edge-ordering $I(v)=\{e_{1},\ldots,e_{d}\}$ is specified,
	we also consider $f_{v}:\{0,1\}^{d}\to\mathbb{C}$.
\end{defn}
The \emph{Holant} of $\Omega$ is a particular sum over edge assignments
$\assignment$. For $\assignment$, we say that $e\in E(\Omega)$
is \emph{active in }$x$ if $x(e)=1$ holds, and 
we tacitly identify $x$ with the set of active edges in $x$. Given
a subset $S\subseteq E(\Omega)$, we write $x|_{S}$ for the restriction
of $x$ to $S$, which is the unique assignment in $\{0,1\}^{S}$
that agrees with $x$ on $S$.
\begin{defn}
	[adapted from \cite{Valiant2008}]\label{def: holant}Let $\Omega$
	be a signature graph with edge weights $w:E(\Omega)\to\mathbb{C}$
	and a vertex function $f_{v}:\{0,1\}^{I(v)}\to\mathbb{C}$ for each
	$v\in V(\Omega)$. For $\assignment$, we define
	\begin{eqnarray}
		\val_{\Omega}(x) & := & \prod_{v\in V(\Omega)}f_{v}(x|_{I(v)}),\label{eq: Holant value}\\
		w_{\Omega}(x) & := & \prod_{e\in x}w(e).\label{eq: Holant weight}
	\end{eqnarray}
	We say that $x$ \emph{satisfies} $\Omega$ if $\val_{\Omega}(x)\neq0$
	holds. Furthermore, we define 
	\begin{equation}
		\Holant(\Omega):=\sum_{\assignment}w_{\Omega}(x)\cdot\val_{\Omega}(x).\label{eq: Holant}
	\end{equation}
	
\end{defn}

A particularly useful type of vertex functions is that of \emph{Boolean functions}, whose ranges are restricted to $\{0,1\}$ rather than
$\mathbb{C}$. If all signatures appearing in a signature graph $\Omega'$
are Boolean, then $\Holant(\Omega')$ simply sums over those assignments
$x\in\{0,1\}^{E(\Omega')}$ that satisfy all constraints imposed by the
vertex functions, and each $x$ is weighted by $w_{\Omega'}(x)$.
As an example, we use Boolean functions to reformulate $\PerfMatch$
as a Holant problem. 

\begin{example}
	\label{exa:Holant PM} Given an edge-weighted graph $G$, let $f_{v}:\{0,1\}^{I(v)}\to\{0,1\}$
	for $v\in V(G)$ be the vertex function defined by 
	\begin{equation}
		f_{v}(x)=\begin{cases}
			1 & \mbox{if }\hw(x)=1,\\
			0 & \mbox{otherwise}.
		\end{cases}\label{eq:vtx-fn ex1}
	\end{equation}
	Let $\Omega$ denote the signature graph obtained from $G$ by
	associating $f_{v}$ with $v$, for all $v\in V(G)$. Then $\Holant(\Omega)$
	ranges over those assignments $\assignment$ in which each vertex
	is incident with exactly one active edge. Each such $x$ is weighted
	by $w_{\Omega}(x)=\prod_{e\in x}w(e)$. This is precisely the expression
	of $\PerfMatch(G)$.
\end{example}

\subsection{Gates and matchgates}

In some occasions, we can simulate signatures $f$ appearing in a
signature graph $\Omega$ by gadgets, i.e., signature graphs on ``basic''
signatures that realize $f$. We call such gadgets \emph{gates}, similar
to the $\mathcal{F}$-gates in \cite{Cai.Lu2008}, and we will be
particularly interested in \emph{matchgates}. These are gates $\Gamma$ that feature,
at each vertex $v\in V(\Gamma) $, 
the perfect matching signature from Example~\ref{exa:Holant PM} that maps $x \in \{0,1\}^{I(v)}$ to
\[
\sigHW{=1}(x):=[\hw(x)=1].
\]

The formal definition of gates and matchgates follows.

\begin{defn}
	\label{def: module}A \emph{gate} is a signature graph $\Gamma$ containing
	a set $D\subseteq E(\Gamma)$ of \emph{dangling edges}, all of which
	have edge-weight $1$. A dangling edge is an ``edge'' that
	is incident with only one vertex. We consider the dangling edges of $\Gamma$ to be labeled as $1,\ldots,|D|$. 
	
	Given a signature graph $\Omega$, a vertex $v\in V(\Omega)$
	of degree $|D|$, and an ordering of $I(v)$ as $I(v)=\{e_{1},\ldots,e_{|D|}\}$,
	we can \emph{insert} $\Gamma$ at $v$ by deleting $v$, placing a
	copy of $\Gamma$ into $G$, and identifying $e_{i}$ with the $i$-labeled
	dangling edge of $\Gamma$, for all $i$.
	
	For disjoint sets $A$, $B$, and for $x\in\{0,1\}^{A}$
	and $y\in\{0,1\}^{B}$, write $xy\in\{0,1\}^{A\cup B}$ for the assignment
	that agrees with $x$ on $A$, and with $y$ on $B$. We say that
	$xy$ \emph{extends} $x$. The \emph{signature of $\Gamma$ }is the
	function $\Sig(\Gamma):\{0,1\}^{D}\to\mathbb{C}$ that maps $x \in \{0,1\}^{D}$ to
	\begin{equation}
		\Sig(\Gamma,x)=\sum_{y\in\{0,1\}^{E(\Gamma)\setminus D}}
		w_{\Gamma}(xy)\cdot\val_{\Gamma}(xy).\label{eq: module signature}
	\end{equation}
	We also say that $\Gamma$ \emph{realizes} $\Sig(\Gamma)$. If all $v\in V(\Gamma)$
	feature the function $\sigHW{=1}$ defined above, then $\Gamma$ is a \emph{matchgate}.
	Finally, we call $\Gamma$ planar if it can be drawn in the plane
	with all dangling edges on the outer face, such that they appear in
	the order $1,\ldots,|D|$ in a clockwise traversal of this face.
\end{defn}
By the following lemma, if $\Gamma$ realizes a signature $f$, and
$v$ is a vertex with signature $f$ in a signature graph $\Omega$, then
we can insert $\Gamma$ at $v$ in a way that preserves Holants. 
In other words, we can treat $\Gamma$ as if it were a single vertex of signature $\Sig(\Gamma)$.
This will be used to reduce $\Holant(\Omega)$ to $\PerfMatch$
if all signatures in $\Omega$ can be realized by matchgates. For a proof, see Chapter 2 of \cite{Curticapean15}.
\begin{lem}
	\label{lem: holant insertions}
	Let $\Omega$ be a signature graph, let $v\in V(\Omega)$ be arbitrary,
	and let $f_{v}$ denote the vertex function of $v$ in $\Omega$.
	Furthermore, let $\Gamma$ be a (match-)gate with $\Sig(\Gamma)=f_{v}$,
	and let $\Omega'$ be obtained from $\Omega$ by inserting $\Gamma$
	at $v$. Then we have 
	\[\Holant(\Omega)=\Holant(\Omega').\] 
	If $\Omega$ and $\Gamma$ are planar and $\Omega$ is given together with a plane embedding, 
	then the following holds: If
	we order $I(v)$ according to its clockwise ordering in the embedding
	and insert $\Gamma$ under this order, then $\Omega'$ is planar.
\end{lem}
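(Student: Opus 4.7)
The plan is to unfold both sides of $\Holant(\Omega) = \Holant(\Omega')$ according to Definition~\ref{def: holant}, partition the edge assignments of $\Omega'$ into an ``external'' part (on the edges inherited from $\Omega$) and an ``internal'' part (on the newly added interior edges of $\Gamma$), and then collapse the inner sum using the hypothesis $\Sig(\Gamma)=f_{v}$. This is essentially the ``locality'' of the Holant: every factor in $\val$ and $w$ depends only on a bounded neighborhood, so the contribution of the gadget $\Gamma$ decouples from the rest.

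Concretely, I would first partition $E(\Omega) = E_{\mathrm{out}} \sqcup I(v)$ and write every $x \in \{0,1\}^{E(\Omega)}$ uniquely as $x = (x_{\mathrm{out}}, z)$ with $x_{\mathrm{out}} \in \{0,1\}^{E_{\mathrm{out}}}$ and $z \in \{0,1\}^{I(v)}$. Factoring the product in \eqref{eq: Holant value} by pulling out the factor $f_{v}(z)$, and the product in \eqref{eq: Holant weight} by separating edges in $x_{\mathrm{out}}$ from edges in $z$, yields
\begin{equation*}
\Holant(\Omega) \;=\; \sum_{x_{\mathrm{out}},\,z} w_{\Omega}(x_{\mathrm{out}})\,w_{\Omega}(z)\, f_{v}(z) \prod_{u\in V(\Omega)\setminus\{v\}} f_{u}\bigl((x_{\mathrm{out}},z)|_{I(u)}\bigr).
\end{equation*}
On the other side, $E(\Omega') = E_{\mathrm{out}} \sqcup I(v) \sqcup E_{\mathrm{int}}$, where $E_{\mathrm{int}} = E(\Gamma)\setminus D$ and the dangling edges $D$ have been identified with $I(v)$ in the specified order. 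Because dangling edges carry weight $1$, the weights on $I(v)$ come entirely from $\Omega$, and the weights on $E_{\mathrm{int}}$ come entirely from $\Gamma$, giving the factorization $w_{\Omega'}(x_{\mathrm{out}},z,y_{\mathrm{int}}) = w_{\Omega}(x_{\mathrm{out}})\,w_{\Omega}(z)\,w_{\Gamma}(y_{\mathrm{int}})$. The value $\val_{\Omega'}$ similarly splits into the vertex-function product over $V(\Omega)\setminus\{v\}$, which depends only on $(x_{\mathrm{out}},z)$, and the product over $V(\Gamma)$, which depends only on $(z,y_{\mathrm{int}})$. Grouping these factors and summing first over $y_{\mathrm{int}}$ turns the inner sum into
\begin{equation*}
\sum_{y_{\mathrm{int}}\in\{0,1\}^{E_{\mathrm{int}}}} w_{\Gamma}(z\,y_{\mathrm{int}})\,\val_{\Gamma}(z\,y_{\mathrm{int}}) \;=\; \Sig(\Gamma,z),
\end{equation*}
by definition~\eqref{eq: module signature}. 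Substituting $\Sig(\Gamma,z)=f_{v}(z)$ reproduces the displayed expression for $\Holant(\Omega)$, proving the equality.

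For the planarity addendum, I would argue that a plane embedding of $\Omega$ realizes $v$ in a small disk whose boundary crosses the edges of $I(v)$ in some clockwise cyclic order. By hypothesis we insert $\Gamma$ using exactly this order, and since $\Gamma$ is planar with its dangling edges on the outer face in the prescribed order $1,\dots,|D|$, we may place the planar drawing of $\Gamma$ inside the disk and glue its dangling edges to $I(v)$ without introducing crossings.

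I do not expect a real obstacle here: the lemma is a compositionality statement whose only subtlety is the bookkeeping of which edges contribute to which factor. The one place to be careful is the weight accounting across the identification $D \leftrightarrow I(v)$: the convention that dangling edges have weight $1$ is precisely what makes $w_{\Omega'}$ factor cleanly so that the external weights $w_{\Omega}(z)$ are neither dropped nor double-counted.
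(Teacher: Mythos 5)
Your proof is correct. The paper itself omits the proof of this lemma (it defers to Chapter~2 of \cite{Curticapean15}), but the argument you give --- decompose $E(\Omega') = E_{\mathrm{out}} \sqcup I(v) \sqcup E_{\mathrm{int}}$, factor $w_{\Omega'}$ and $\val_{\Omega'}$ along this decomposition using that dangling edges carry weight $1$, collapse the inner sum over $y_{\mathrm{int}}$ to $\Sig(\Gamma,z)$ via \eqref{eq: module signature}, and substitute $\Sig(\Gamma)=f_v$ --- is exactly the standard compositionality argument one would expect, and you correctly flag the one place where care is needed (weight bookkeeping at the identified dangling edges). The planarity addendum via a disk-replacement argument is also the right picture.
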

In the remainder of this subsection, we consider specific matchgates that will be relevant later.
To simplify our presentation, we abbreviate the following $4$-bitstrings. Each corresponds to a specific assignment to the edges incident with a vertex of degree $4$.
\[
\begin{array}{cccc}
\symEmpty:=0000,\quad & \symWE:=0101,\quad & \symNS:=1010,\quad & \symNSWE:=1111,\\
\symN:=1000,\quad & \symS:=0010,\quad & \symNWE:=1101,\quad & \symSWE:=0111.
\end{array}
\]

In Figure~\ref{fig: apex-matchgates-analysis}, we define a signature
$\pass$ of arity $4$ and two signatures $\pre$ and $\act$ of arity
$6$. Note that $\pass$ essentially acts as a ``crossing'' signature:
It enforces equality on its western and eastern dangling edges (numbered 4 and 2), as well as on
its northern and southern dangling edges (numbered 1 and 3). However, if all dangling edges are active, then the output of $\pass$ is $-1$ rather than $1$. This flipped sign allows $\pass$
to admit a planar matchgate $\Gamma_{\pass}$, shown in Figure~\ref{fig: apex-matchgates-analysis}.
We verified that $\Sig(\Gamma_{\pass}) = \pass$ holds by means of a computer program: For all $x\in \{0,1\}^4$, we showed mechanically that $\Sig(\Gamma_{\pass},x) = \pass(x)$ holds. Note that this verification can also be carried out by hand. For more details, consider Appendix~C of \cite{Curticapean15}.
It should also be noted
that planar matchgates for $\pass$ were already studied in \cite{Valiant2008,Cai.Gorenstein2013}.

\begin{figure}
	\begin{centering}
		\scalebox{0.81}{
			\begin{tabular}[t]{ccccc}
				\includegraphics[width=5cm]{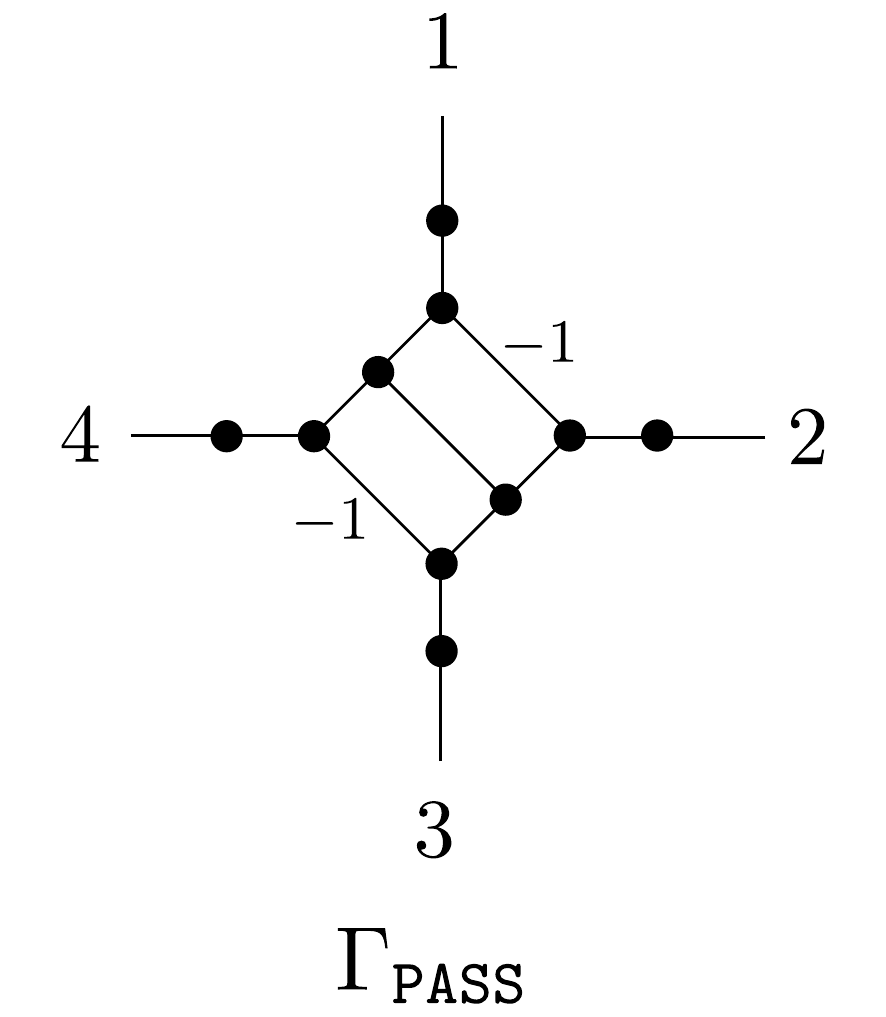} & $\quad$ & \includegraphics[width=5cm]{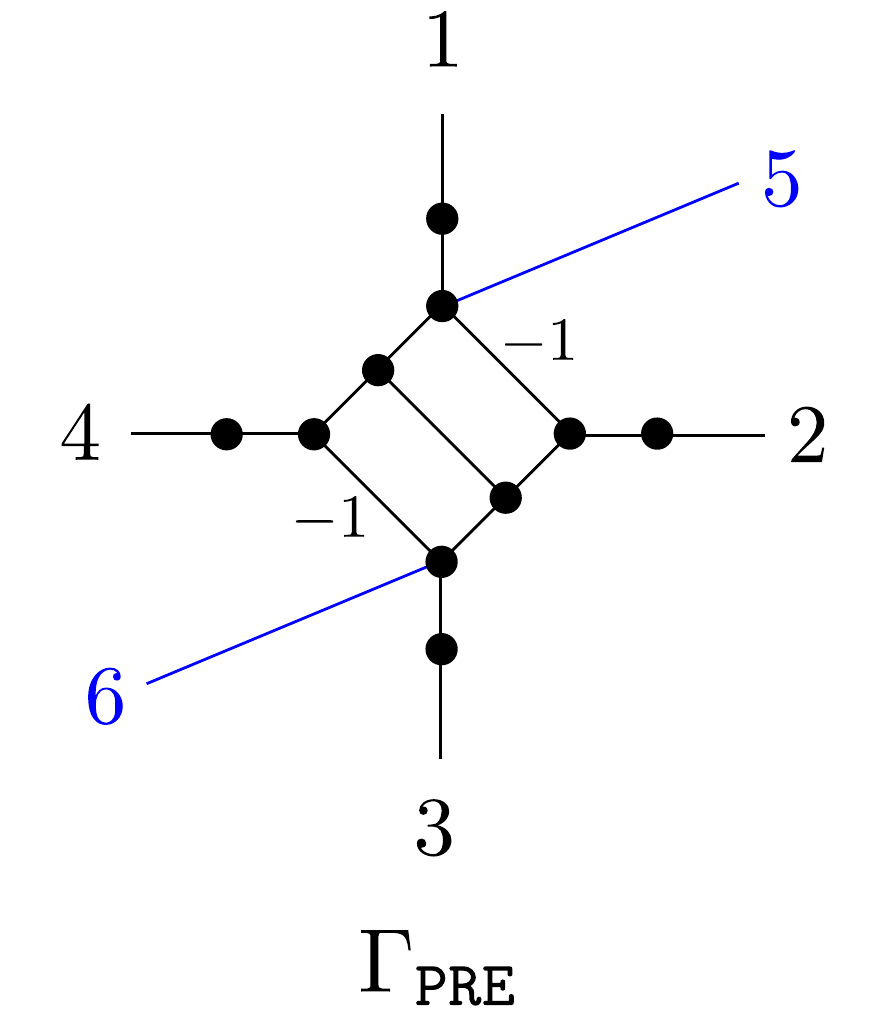} & $\quad$ & \includegraphics[width=5cm]{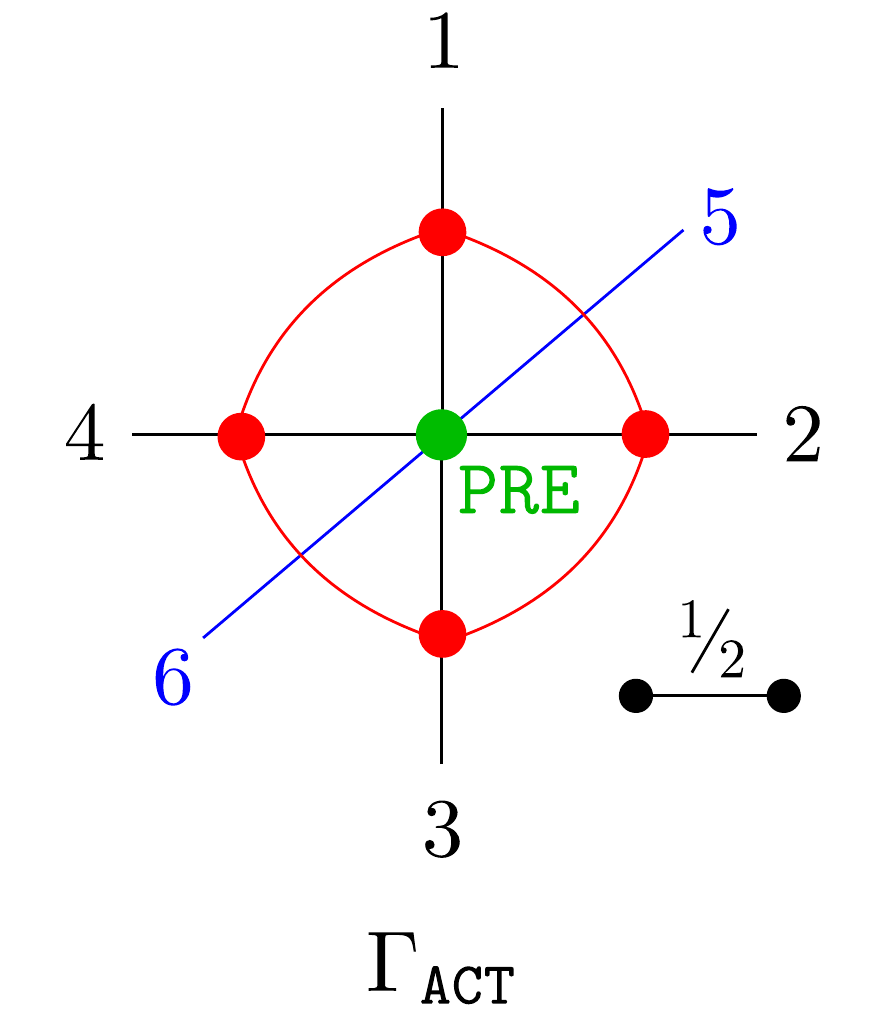}\tabularnewline
				$\quad$ &  &  &  & \tabularnewline
				$\pass(x):=\begin{cases}
				-1 &  x=\symNSWE\\
				1 &  x\in\{\symEmpty,\symWE,\symNS\}\\
				0 & \mbox{otherwise}
				\end{cases}$ &  & $\pre(x):=\begin{cases}
				\pass(y) &  x=y00\\
				1 &  x\in\{\symNS11,\symNSWE\,11\}\\
				1 &  x\in\{\symN01,\symNWE\,01\}\\
				1 &  x\in\{\symS10,\symSWE\,10\}\\
				0 & \mbox{otherwise}
				\end{cases}$ &  & $\act(x):=\begin{cases}
				\pass(y) &  x=y00\\
				1 &  x\in\{\symNS11,\symNSWE\,11\}\\
				0 & \mbox{otherwise}
				\end{cases}$\tabularnewline
			\end{tabular}
		}
		\par\end{centering}
	
	\protect\caption{\label{fig: apex-matchgates-analysis}
		The matchgates 
		$\Gamma_{\protect\pass}$, $\Gamma_{\protect\pre}$ and $\Gamma_{\protect\act}$ 
		and the signatures $\pass$, $\pre$ and $\act$. 
		Note that $\Gamma_{\protect\pass}$ has four dangling edges, numbered $1$ to $4$, 
		whereas $\Gamma_{\protect\pre}$ and $\Gamma_{\protect\act}$ each have six dangling edges, numbered $1$ to $6$.
		The signature $\pass$ is defined on assignments $x\in \{0,1\}^4$, while $\pre$ and $\act$ are defined on assignments $x\in \{0,1\}^6$. These strings correspond canonically to assignments at the dangling edges of $\Gamma_{\protect\pass}$, $\Gamma_{\protect\pre}$ and $\Gamma_{\protect\act}$.
		All black vertices
		are assigned $\protect\sigHW{=1}$. In the gate $\Gamma_{\protect\act}$,
		all red vertices are assigned $\protect\pass$, and the green middle
		vertex is assigned $\protect\pre$. Note that we can also view $\Gamma_{\protect\act}$
		as a matchgate by realizing its signatures with the matchgates $\Gamma_{\protect\pass}$
		and $\Gamma_{\protect\act}$. All matchgates are planar after removal
		of the dangling edges $5$ and $6$, which will later connect to apex
		vertices.}
\end{figure}

Next, we consider the signatures $\pre$ and $\act$, each of arity
$6$. We consider their last two inputs (the dangling edges with numbers $5$ and $6$) as ``switches'', which will later be connected to apices. It is crucial to observe that 
\[
\pre(x00) = \act(x00)=\pass(x)\quad\forall x\in\{0,1\}^{4}.
\]

That is, if the two switch edges are not active, then $\pre$ and $\act$
behave exactly like $\pass$ on their non-switch inputs. If both switches
are active, then some differences occur, namely, the restriction to non-switch
edges must be in state $\symNS$ or $\symNSWE$ for $\pre$ or $\act$
to yield a nonzero value. Furthermore, if only one of the two switches
is active, then $\act$ yields value zero, while $\pre$ still allows such assignments (such as $\symNWE 01$). We verified with a computer program that $\pre=\Sig(\Gamma_{\pre})$ holds
for the matchgate $\Gamma_{\pre}$ from Figure~\ref{fig: apex-matchgates-analysis}.
In the following, we prove manually that $\act=\Sig(\Gamma_{\act})$ holds.

\begin{lem}
	We have $\act=\Sig(\Gamma_{\act})$ with the matchgate $\Gamma_{\act}$
	from Figure~\ref{fig: apex-matchgates-analysis}.
\end{lem}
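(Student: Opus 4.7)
My plan is to perform case analysis on the assignment to the two switch edges (numbers $5$ and $6$), using Lemma~\ref{lem: holant insertions} to reason compositionally: since $\Gamma_{\act}$ is built from the already-verified matchgates $\Gamma_{\pass}$ and $\Gamma_{\pre}$, I may treat its internal red and green vertices as having signatures $\pass$ and $\pre$ respectively, and then sum over assignments to the internal (non-dangling) edges. Equivalently, I consider the signature graph whose vertices carry $\pass$ and $\pre$ and evaluate $\Sig(\Gamma_{\act},x)$ directly from \eqref{eq: module signature}. The goal is then to match $\Sig(\Gamma_{\act},x) = \act(x)$ for all $64$ values of $x \in \{0,1\}^{6}$, grouped by the switch bits $(s_5,s_6)$.

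For the case $(s_5,s_6)=(0,0)$, the definition of $\pre$ immediately reduces the green vertex to a $\pass$ vertex on its four non-switch edges, so $\Gamma_{\act}$ becomes a planar network consisting purely of $\pass$-signatures with four external edges $1,2,3,4$. Because $\pass$ acts as a crossing that enforces equality on the north--south and west--east pairs (with a sign $-1$ exactly on $\symNSWE$), a straightforward propagation argument shows that for any external assignment $y \in \{0,1\}^4$ there is at most one consistent internal assignment, with nonzero contribution precisely when $y \in \{\symEmpty,\symWE,\symNS,\symNSWE\}$. I would then verify that the cumulative sign across all internal $\symNSWE$ patterns equals $\pass(y)$; this is the delicate part, and I would check it by observing that the sign $-1$ arises exactly when the external pattern itself is $\symNSWE$ (the only pattern in which every internal crossing carries the sign factor). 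For $(s_5,s_6)=(1,1)$, both switches of the green vertex are forced on, so $\pre$ restricts it to $\symNS\,11$ or $\symNSWE\,11$; propagating these vertical/full patterns across the $\pass$ crossings shows the four external edges must also be in $\symNS$ or $\symNSWE$, each with value $1$, matching $\act$.

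For the remaining case $(s_5,s_6) \in \{(0,1),(1,0)\}$, the $\pre$ signature restricts the green vertex to an asymmetric configuration from $\{\symN,\symS,\symNWE,\symSWE\}$ on its four non-switch inputs. I plan to show that each such asymmetric pattern is incompatible with the surrounding $\pass$ constraints: every $\pass$ vertex enforces an even number of active edges in its north--south pair as well as in its west--east pair, but an asymmetric assignment at the green vertex creates a single active edge on one internal side that cannot be extended consistently across the adjacent $\pass$ crossings. Hence $\Sig(\Gamma_{\act},x)=0$, as required. The main obstacle throughout is the careful sign bookkeeping in the all-switches-off case, which is why the verification of $\Gamma_{\pass}$ and $\Gamma_{\pre}$ was delegated to a computer check; here, however, the geometry of the gate is small enough that the signs can be tracked by hand, giving the claimed identity $\act=\Sig(\Gamma_{\act})$.
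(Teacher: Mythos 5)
There is a genuine gap: your analysis misrepresents the internal structure of $\Gamma_{\act}$ and hence the mechanism by which it filters out the odd-Hamming-weight cases. The gate consists of a green vertex carrying $\pre$, surrounded by a cycle (``ring'') of red $\pass$ vertices, together with an extra edge of weight $\nicefrac{1}{2}$ hanging off two fresh $\sigHW{=1}$ vertices. The dangling edges and the green-vertex edges enter each red vertex radially, and the two ring edges at a red vertex enter tangentially, so each $\pass$ enforces equality on the two ring edges; consequently the whole ring is either uniformly active or uniformly inactive. This gives \emph{two} satisfying internal extensions $w_{1}$ (ring on) and $w_{2}$ (ring off) for every external assignment $xy$, not ``at most one'' as you claim. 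Evaluating them gives $\val(w_{1})=\frac{1}{2}(-1)^{\hw(x)}\pre(xy)$ (one factor $-1$ per active dangling edge, from a red vertex in state $\symNSWE$) and $\val(w_{2})=\frac{1}{2}\pre(xy)$; the dummy $\nicefrac{1}{2}$ is essential to normalize the sum. Thus the asymmetric cases $(s_5,s_6)\in\{(0,1),(1,0)\}$ vanish \emph{not} because the surrounding $\pass$ crossings are inconsistent with an odd pattern (they are perfectly consistent, in both ring states) but because $\val(w_{1})$ and $\val(w_{2})$ cancel when $\hw(x)$ is odd. Your proposed argument for these cases therefore does not go through.

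Two further symptoms of the same misunderstanding: your proposal never mentions the weight-$\nicefrac{1}{2}$ edge at all, and your sign bookkeeping for the $(0,0)$, $y=\symNSWE$ case is wrong. In that case the final $-1$ comes from the green vertex, i.e.\ from $\pre(\symNSWE\,00)=\pass(\symNSWE)=-1$; the ring crossings contribute $(-1)^{4}=+1$ in $w_{1}$ and $+1$ in $w_{2}$, so attributing the output sign to ``every internal crossing carrying the sign factor'' is incorrect. The high-level plan of splitting on the switch bits (equivalently on the parity of $\hw(x)$) is reasonable and matches the paper, but a correct proof must recognize the ring degree of freedom, compute $\val(w_{1})+\val(w_{2})=\frac{1}{2}\bigl(1+(-1)^{\hw(x)}\bigr)\pre(xy)$, and then invoke the observation that $\pre$ agrees with $\act$ whenever $\hw(x)$ is even.
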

\begin{proof}
	Note that $\Gamma_{\act}$ has a green vertex of signature $\pre$,
	and some additional part (a ring of $\pass$ signatures, and an edge
	of weight $\frac{1}{2}$) which we call the \emph{even filter}. Observe
	also that, for all $x\in\{0,1\}^{4}$ and $y\in\{0,1\}^{2}$, we have the identity
	\begin{equation}
		\pre(xy)=\begin{cases}
			\act(xy) & \mbox{if }\hw(x)\mbox{ even},\\
			\mbox{arbitrary} & \mbox{otherwise}.
		\end{cases}\label{eq: sig_psi}
	\end{equation}
	The even filter now ensures the following, for all $x\in\{0,1\}^{4}$
	and $y\in\{0,1\}^{2}$:
	\begin{itemize}
		\item If $\hw(x)$ is not even, then $\Sig(\Gamma_{\act},xy)=0$, regardless
		of the value of $\pre$ on $xy$. 
		\item If $\hw(x)$ is even, then $\Sig(\Gamma_{\act},xy)=\pre(xy)$. By (\ref{eq: sig_psi}), this implies $\Sig(\Gamma_{\act},xy)=\act(xy)$.
	\end{itemize}
	Since $\act(xy)\ne0$ implies $x\in\{\symEmpty,\symNS,\symWE,\symNSWE\}$,
	which in turn implies that $\hw(x)$ is even, this will prove the
	lemma. To compute $\Sig(\Gamma_{\act},xy)$ for $x\in\{0,1\}^{4}$
	and $y\in\{0,1\}^{2}$, we consider the satisfying assignments $w$ to $E(\Gamma_{\act})$
	that extend $xy$. The dummy edge of weight $\nicefrac{1}{2}$ is
	present in any assignment $w$ and contributes a factor $\nicefrac{1}{2}$
	to $\val(w)$. (In this proof, we write $\val(w)$ instead of $\val_{\Gamma_{\act}}(w)$
	to avoid double indexing.) At each red vertex, the signature $\pass$ ensures that opposing
	edges have the same assignment under $w$. This fixes the value of
	all black edges and ensures that $\val(w)$ contains the factor $\pre(xy)$,
	contributed from the green vertex with signature $\pre$.
	
	It remains to assign values to the red edges: Due to the signature
	$\pass$ at red vertices, this is possible with at most two satisfying
	assignments $w_{1},w_{2}\in\{0,1\}^{E(\Gamma_{\act})}$:
	\begin{description}
		\item [{$w_{1}:$}] All red edges are active. Then every red vertex in
		state $\symNSWE$ yields a factor $\pass(\symNSWE)=-1$, while all
		other red vertices are in one of the states $\symNS$ or $\symWE$
		and yield value $1$. The number of red vertices in state $\symNSWE$
		is $\hw(x)$, so the value of $\Gamma_{\act}$ on $w_{1}$ is 
		\[
		\val(w_{1})=\frac{1}{2}\cdot(-1)^{\hw(x)}\cdot\pre(xy).
		\]
		
		\item [{$w_{2}:$}] No red edges are active. Then every red vertex is in
		one of the states $\symNS$ or $\symWE$ and hence yields value $1$.
		Thus, the value of $\Gamma_{\act}$ on $w_{2}$ is 
		\[
		\val(w_{2})=\frac{1}{2}\cdot\pre(xy).
		\]
		
	\end{description}
	It follows that for all $x\in\{0,1\}^{4}$ and $y\in\{0,1\}$, we
	have 
	\begin{eqnarray*}
		\Sig(\Gamma_{\act},xy) & = & \val(w_{1})+\val(w_{2})\\
		& = & \frac{1}{2}\cdot\left((-1)^{\hw(x)}\cdot\pre(xy)+\pre(xy)\right)\\
		& = & \begin{cases}
			\pre(xy) & \mbox{if }\hw(x)\mbox{ even,}\\
			0 & \mbox{otherwise}.
		\end{cases}\\
		& = & \act(xy)
	\end{eqnarray*}
	This proves the lemma.
\end{proof}

\subsection{Linear combinations of matchgate signatures}

We introduce our main tool for the later sections, a technique that allows us to simulate signatures by linear combinations of other signatures, in particular, of matchgate signatures.
\begin{defn}
	Let $f=c_{1}\cdot f_{1}+\ldots+c_{t}\cdot f_{t}$ be a signature,
	where $c_{1},\ldots,c_{t}\in\mathbb{C}$ are coefficients and $f_{1},\ldots,f_{t}$
	are signatures, and the linear combination is point-wise.
	Then we say that $f$ is $t$-combined from constituents $f_{1},\ldots,f_{t}$.
\end{defn}
We apply such linear combinations as follows: Assume we are given
a signature graph that features $k$ occurrences of some interesting
signature $f$ which cannot be realized by matchgates. If we can express
$f$ as a linear combination of $t$ constituents that do admit matchgates,
then the following lemma allows us to compute $\Holant(\Omega)$ from
the Holants of $t^{k}$ derived signature graphs whose signatures
all admit matchgates.
\begin{lem}
	\label{lem: combined signature lemma}Let $\Omega$ be a signature
	graph, let $k,t\in\mathbb{N}$ and let $w_{1},\ldots,w_{k}$ be distinct
	vertices of $\Omega$ such that the following holds: For all $\kappa\in[k]$,
	the signature $f_{\kappa}$ at $w_{\kappa}$ admits coefficients $c_{\kappa,1},\ldots,c_{\kappa,t}\in\mathbb{C}$
	and signatures $g_{\kappa,1},\ldots,g_{\kappa,t}$ such that $f_{\kappa}=\sum_{i=1}^{t}c_{\kappa,i}\cdot g_{\kappa,i}$. 
	Given a tuple $\theta\in[t]^{k}$, let $\Omega_{\theta}$ be defined
	by replacing, for each $\kappa\in[k]$, the vertex function $f_{\kappa}$
	at $w_{\kappa}$ with $g_{\kappa,\theta(\kappa)}$. Then we have 
	\begin{equation}
		\Holant(\Omega)=\sum_{\theta\in[t]^{k}}\left(\prod_{\kappa=1}^{k}c_{\kappa,\theta(\kappa)}\right)\cdot\Holant(\Omega_{\theta}).\label{eq: combined sig lin-comb}
	\end{equation}
\end{lem}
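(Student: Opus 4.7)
The plan is to prove this by straightforward expansion of the Holant sum and an application of distributivity, since the only difference between $\Omega$ and the $\Omega_\theta$ lies in the vertex functions at $w_1,\ldots,w_k$, while the underlying graph, edge weights, and all other vertex functions are identical.

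First, I would start from the definition
\[
\Holant(\Omega) = \sum_{x\in\{0,1\}^{E(\Omega)}} w_\Omega(x) \cdot \val_\Omega(x),
\]
and split the product defining $\val_\Omega(x)$ in \eqref{eq: Holant value} into the factors coming from $w_1,\ldots,w_k$ and those coming from the remaining vertices. Writing $U = V(\Omega)\setminus\{w_1,\ldots,w_k\}$ and denoting the common ``background'' by $B(x) := \prod_{v\in U} f_v(x|_{I(v)})$, this gives
\[
\val_\Omega(x) = B(x) \cdot \prod_{\kappa=1}^{k} f_\kappa(x|_{I(w_\kappa)}).
\]

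Next, I would substitute the hypothesis $f_\kappa = \sum_{i=1}^{t} c_{\kappa,i} \cdot g_{\kappa,i}$ into each factor of the product over $\kappa$, then apply distributivity to convert the product of sums into a sum of products indexed by tuples $\theta \in [t]^k$:
\[
\prod_{\kappa=1}^{k} f_\kappa(x|_{I(w_\kappa)}) = \sum_{\theta\in[t]^k} \left(\prod_{\kappa=1}^{k} c_{\kappa,\theta(\kappa)}\right) \prod_{\kappa=1}^{k} g_{\kappa,\theta(\kappa)}(x|_{I(w_\kappa)}).
\]
Multiplying by $B(x)$, the inner product over $\kappa$ combined with $B(x)$ is precisely $\val_{\Omega_\theta}(x)$, by definition of $\Omega_\theta$.

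Finally, I would substitute back into the Holant, note that $w_\Omega(x) = w_{\Omega_\theta}(x)$ because edge weights are unchanged, and swap the order of summation between $x$ and $\theta$ (a finite sum of finite sums, so no convergence issue). This yields
\[
\Holant(\Omega) = \sum_{\theta\in[t]^k} \left(\prod_{\kappa=1}^{k} c_{\kappa,\theta(\kappa)}\right) \sum_{x} w_{\Omega_\theta}(x) \cdot \val_{\Omega_\theta}(x) = \sum_{\theta\in[t]^k} \left(\prod_{\kappa=1}^{k} c_{\kappa,\theta(\kappa)}\right) \Holant(\Omega_\theta),
\]
which is the desired identity \eqref{eq: combined sig lin-comb}. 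There is no genuine obstacle here; the argument is pure multilinearity, and the only thing to be careful about is keeping the bookkeeping of the index $\theta$ clean and observing that edge weights and non-distinguished vertex contributions factor out uniformly across all $\Omega_\theta$.
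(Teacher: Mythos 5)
Your proof is correct and rests on the same multilinearity argument as the paper's. The only presentational difference is that the paper replaces the vertex function at a single $w_{\kappa}$, establishes $\Holant(\Omega)=\sum_{i}c_{\kappa,i}\Holant(\Omega_{i})$, and then iterates by induction over $\kappa$, whereas you expand all $k$ linear combinations simultaneously via distributivity over $\theta\in[t]^{k}$; you are also slightly more careful in factoring out the edge-weight term $w_{\Omega}(x)$ and the background product $B(x)$ explicitly, which the paper's sketch elides.
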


\begin{proof}
	Choose any fixed single $\kappa\in[k]$. For $i\in[t]$, let $\Omega_{i}$
	denote the signature graph obtained from $\Omega$ by replacing $f_{\kappa}$
	with $g_{\kappa,i}$. By elementary manipulations, we have
	\begin{eqnarray*}
		\Holant(\Omega) & = & \sum_{x\in\{0,1\}^{E(\Omega)}}f_{\kappa}(x)\cdot\prod_{v\in V(\Omega)\setminus\{w\}}f_{v}(x)\\
		& = & \sum_{x\in\{0,1\}^{E(\Omega)}}\left(\sum_{i=1}^{t}c_{\kappa,i}\cdot g_{\kappa,i}(x)\right)\cdot\prod_{v\in V(\Omega)\setminus\{w\}}f_{v}(x)\\
		& = & \sum_{i=1}^{t}c_{\kappa,i}\cdot\sum_{x\in\{0,1\}^{E(\Omega)}}g_{\kappa,i}(x)\prod_{v\in V(\Omega)\setminus\{w\}}f_{v}(x)\\
		& = & \sum_{i=1}^{t}c_{\kappa,i}\cdot\Holant(\Omega_{i}).
	\end{eqnarray*}

	Then apply this identity inductively for $\kappa = 1, \ldots, k$. Each
	step reduces the number of combined signatures by one, and elementary
	algebraic manipulations imply (\ref{eq: combined sig lin-comb}).
\end{proof}

When using Lemma~\ref{lem: combined signature lemma} for positive
results, as in Section~\ref{sec: genus}, then the right-hand side
of (\ref{eq: combined sig lin-comb}) is ``easy'', in the sense
that the values $\Holant(\Omega_{\theta})$ for all $\theta$ can
be obtained efficiently, e.g., by reduction to planar $\PerfMatch$.
In the same way, Lemma~\ref{lem: combined signature lemma} also
allows us to prove hardness results under Turing reductions, as we
do in Sections~\ref{sec: permanent k-apex} and \ref{sec: permanent modulo}:
In this case, the left-hand side is ``hard'' and could be computed
from oracle access to the values $\Holant(\Omega_{\theta})$ for all
$\theta$.

\section{\label{sec: genus}PerfMatch on bounded-genus graphs}

In this section, we present a first application of the framework of combined signatures:
We show that, for graphs of genus $k$, the quantity $\PerfMatch(G)$
can be expressed as a linear combination of $4^k$ values $\PerfMatch(G_i)$,
where $G_i$ is a planar graph for all $i\in [4^k]$.
The linear combinations resemble those used in \cite{Galluccio.Loebl,DBLP:journals/jct/Tesler00,ReggeZecchina},
but unlike these papers, we can state our linear combinations without any necessity for Pfaffian orientations. That is, we obtain a parameterized reduction with black-box access to counting perfect matchings in planar graphs.

\subsection{The algorithm}
Following \cite{DBLP:journals/jct/Tesler00}, we assume that the graph $G$ in question is given to us together with a plane model:
All vertices of $G$ are drawn in a polygon $P$ with $2k$ sides.
If there is a set of $d_{i}$ parallel edges $x_{i}=x_{i1}x_{i2}\cdots x_{id_{i}}$
leaving $P$ from one side and going into $P$ through another side,
we denote the two sides by $a_{i}$ and $a_{i}^{-1}$ respectively.
Since the edges are parallel, when we walk along the sides of $P$
counterclockwise, we meet the exits of edges in the order $x{}_{i1}x_{i2}\cdots x_{id_{i}}$
on side $a_{i}$, then the entrances of edges in the order $x_{id_{i}}x_{i(d_{i}-1)}\cdots x_{i1}$
on side $a_{i}^{-1}$.
If $G$ can be embedded on an orientable compact
boundaryless surface $S$ of genus $k$, then it can be drawn such
that there are no edges crossing inside $P$, and the sides of $P$
are
\[a_{1}a_{2}a_{1}^{-1}a_{2}^{-1}a_{3}a_{4}a_{3}^{-1}a_{4}^{-1}\cdots a_{2k-1}a_{2k}a_{2k-1}^{-1}a_{2k}^{-1}.\]
The side pair $a_{i},a_{i}^{-1}$ represents boundaries to be glued
together. When $G$ is drawn on the surface $S$, the edge bunches
$x_{1}$ and $x_{2}$ overpass each other without any edges crossing;
see the left picture of Figure~\ref{pic: grid cap} for such a situation, which we call a \emph{grid cap}.

We use linear combinations of matchgates to simulate the grid cap by a planar graph.
Write $x_{i}^{-1}$ to denote $x_{id_{i}}x_{i(d_{i}-1)}\cdots x_{i1}$.
Then the grid cap realizes a function that is defined on assignments $(x_{1},x{}_{2},y_{1},y_{2})$ to its dangling edges as follows:
\[
O(x_{1},x{}_{2},y_{1},y_{2})=[y_{1}=x_{1}^{-1}] \cdot [y_{2}=x_{2}^{-1}].
\]
The straightforward idea is to place a $\pass$ matchgate at each crossing of overpassing edges,
as shown in the middle of Figure~\ref{pic: grid cap}. Let us denote by
$C(x_{1},x{}_{2},y_{1},y_{2})$ the signature of the resulting gate.
In any satisfying assignment $(x_{1},x{}_{2},y_{1},y_{2})$ to its dangling edges, there
are $\hw(x_{1}) \cdot \hw(x_{2})$ instances of $\pass$ in state $\symNSWE$,
each of which gives a factor $-1$, while all other instances of $\pass$ (in states $\symNS$, $\symWE$, $\symEmpty$) give a factor $1$, so
\[
C(x_{1},x{}_{2},y_{1},y_{2})
=
(-1)^{\odd(x_{1}) \cdot \odd(x_{2})} \cdot [y_{1}=x_{1}^{-1}] \cdot [y_{2}=x_{2}^{-1}].
\]

We can thereforce conclude that $O$ can be expressed as a linear combination of signatures of type $C$, each of which is the signature of a planar matchgate.

\begin{figure}[t]
	\begin{centering}
		\includegraphics[width=0.95\textwidth]{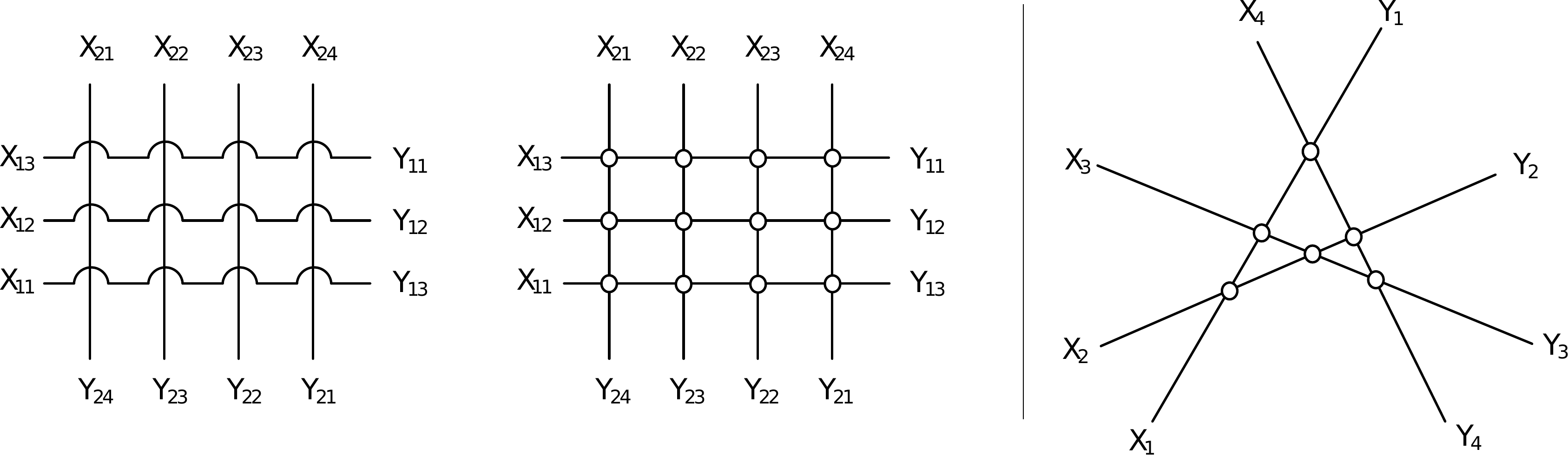}
		\par\end{centering}
	
	\protect\caption{\label{pic: grid cap} \label{pic: cross cap}The first two subfigures show a grid cap and the matchgate realizing one of the constituents used to realize the grid cap. The third subfigure shows the matchgate used to simulate a cross cap. In these matchgates, all vertices are assigned the signature $\pass$.}
\end{figure}

\begin{lem}
	\label{lem:grid-cap}Every grid cap gate is a linear combination of
	$4$ matchgates, given by
	\[
	O(x_{1},x{}_{2},y_{1},y_{2})=\frac{1}{2}(1+(-1)^{\odd(x_{1})}+(-1)^{\odd(x_{2})}+(-1)^{\odd(x_{1})+\odd(x_{2})+1}) \cdot C(x_{1},x{}_{2},y_{1},y_{2}).
	\]
\end{lem}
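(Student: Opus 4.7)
My plan is to reduce the identity to a $2\times 2$ case check at the level of parities, and then to verify that each of the four scalar multiples of $C$ appearing in the formula is itself realizable as a planar matchgate.

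First I compute the signatures of $O$ and $C$ directly. At each crossing in the $\pass$-grid, the $\pass$ signature forces north $=$ south and west $=$ east; propagating this along columns and rows, the entire $i$-th column carries the common bit $x_{1i}$ and the entire $j$-th row carries $x_{2j}$. Since the clockwise ordering of dangling edges around a planar matchgate reverses between opposite sides of the gadget, this forces $y_1 = x_1^{-1}$ and $y_2 = x_2^{-1}$, so $O$ and $C$ share the same support; on that support $O \equiv 1$, while each of the $\hw(x_1)\cdot\hw(x_2)$ crossings in state $\symNSWE$ contributes a factor $\pass(\symNSWE) = -1$, yielding $C = (-1)^{\hw(x_1)\hw(x_2)} = (-1)^{\odd(x_1)\odd(x_2)}$. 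Setting $a := \odd(x_1)$ and $b := \odd(x_2)$, the lemma thus reduces to the scalar identity
$$\tfrac{1}{2}\bigl(1 + (-1)^a + (-1)^b + (-1)^{a+b+1}\bigr)\cdot(-1)^{ab} \;=\; 1 \qquad \text{for all } a,b \in \{0,1\},$$
which is a routine four-case check: the bracket equals $2$ for $(a,b)\in\{(0,0),(1,0),(0,1)\}$ and $-2$ for $(a,b)=(1,1)$, matching $(-1)^{ab}$ in either regime.

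It then remains to exhibit planar matchgates realizing the four constituent signatures $C$, $(-1)^{\odd(x_1)}\,C$, $(-1)^{\odd(x_2)}\,C$, and $(-1)^{\odd(x_1)+\odd(x_2)}\,C$. The first is the $\pass$-grid itself. For the second I would negate the weight of a single internal vertical edge in each of the $d_1$ columns of the grid, e.g.\ all the edges sitting between rows $1$ and $2$. By the column-propagation observation above, each such edge is active in the matching iff $x_{1i}$ is active, so the matchgate signature picks up exactly the factor $\prod_i (-1)^{x_{1i}} = (-1)^{\odd(x_1)}$ and nothing else. The third matchgate is produced symmetrically by negating one internal horizontal edge per row, and the fourth by combining both modifications; planarity is preserved throughout, since only edge weights change.

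The main obstacle lies in this last step: one has to check that the prescribed weight changes produce the claimed sign factor \emph{cleanly}, without any residual interaction with the intrinsic $\pass(\symNSWE)=-1$ signs that already generate the base factor $(-1)^{\odd(x_1)\odd(x_2)}$. This works out because the negated edges lie in positions where their activation is governed by a single boundary bit via the column (resp.\ row) propagation, so their $(-1)$-contributions factor out cleanly from the product defining the matchgate value and simply multiply $C$ by the desired parity factor.
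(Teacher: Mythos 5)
Your proposal is correct and follows essentially the same route as the paper's proof: compute $C=(-1)^{\odd(x_1)\odd(x_2)}$ on the common support, verify the $2\times 2$ scalar parity identity, and realize each extra $(-1)^{\odd(\cdot)}$ factor by negating edges whose activation tracks the relevant boundary bits. You are slightly more careful than the paper about \emph{which} edges to negate (internal column/row edges rather than the dangling edges $x_{1i}$, which by the definition of a gate must carry weight~$1$), but this is a small refinement of the same construction rather than a different proof.
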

\begin{proof}
	Observe first that
	\[
	O(x_{1},x{}_{2},y_{1},y_{2})=\frac{1}{2}(1+(-1)^{\odd(x_{1})}+(-1)^{\odd(x_{2})}+(-1)^{\odd(x_{1})+\odd(x_{2})+1}) \cdot (-1)^{\odd(x_{1}) \cdot \odd(x_{2})} .
	\]
	From this, we can conclude that
	\begin{eqnarray*}
		O(x_{1},x{}_{2},y_{1},y_{2}) & = & \frac{1}{2}C(x_{1},x{}_{2},y_{1},y_{2})+\frac{1}{2}(-1)^{\odd(x_{1})}C(x_{1},x{}_{2},y_{1},y_{2})+\\
		& + & \frac{1}{2}(-1)^{\odd(x_{2})}C(x_{1},x{}_{2},y_{1},y_{2})-\frac{1}{2}(-1)^{\odd(x_{1})}(-1)^{\odd(x_{2})}C(x_{1},x{}_{2},y_{1},y_{2}).
	\end{eqnarray*}
	The extra factor $(-1)^{\odd(x_{1})}$ can be realized by giving weight
	$-1$ instead of $1$ to each edge $x_{1i}$ in the matchgate $C$. Hence,
	all the four functions can be realized by some matchgates similar
	to $C$ after introduction of additional $-1$ weights at some edges.
\end{proof}

We now consider non-orientable surfaces and their plane models:
If $G$ can be embedded on a non-orientable surface $S$, which is
the connected sum of a surface of orientable genus $k$ with either a projective plane
or a Klein bottle, then it can be drawn without
crossings inside $P$, such that the sides of $P$ are
\[
a_{1}a_{2}a_{1}^{-1}a_{2}^{-1}a_{3}a_{4}a_{3}^{-1}a_{4}^{-1}\cdots a_{2k-1}a_{2k}a_{2k-1}^{-1}a_{2k}^{-1}a_{2k+1}a_{2k+2} \mbox{,\ \ and}
\]
\[
a_{1}a_{2}a_{1}^{-1}a_{2}^{-1}a_{3}a_{4}a_{3}^{-1}a_{4}^{-1}\cdots a_{2k-1}a_{2k}a_{2k-1}^{-1}a_{2k}^{-1}a_{2k+1}a_{2k+2}a_{2k+3}a_{2k+4},
\]
respectively. Here, the side pair $a_{i}a_{i}$ means that, when a bunch
of edges $x_{i}=x_{i1}x_{i2}\cdots x_{id_{i}}$ leaves the interior of $P$ through
the first side $a_{i}$ and then enters back into $P$ through the second
side $a_{i}$, then we meet the exits and entrances in the order $x_{i}x_{i}$.
Such a bunch of edges is called a \emph{cross cap}, and it realizes a function
\[
O(x,y)=[y=x].
\]
If we draw it on the plane and replace each crossing
by a $\pass$ matchgate, as shown in the right part of Figure~\ref{pic: cross cap},
we get a matchgate realizing
\[
C(x,y)=(-1)^{{\hw(x) \choose 2}} \cdot [y=x].
\]
From this, we obtain a linear combination for cross cap gates from planar matchgates:
\begin{lem}
	\label{lem:cross cap}Every cross cap gate is a linear combination
	of $2$ matchgates, given by
	\[
	O(x,y)=\frac{1-i}{2} \cdot i{}^{\hw(x)} \cdot C(x,y)
	+
	\frac{1+i}{2} \cdot (-i)^{\hw(x)} \cdot C(x,y).
	\]
\end{lem}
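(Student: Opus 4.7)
The plan is to reduce the identity to a pointwise scalar check and then package the construction in the same spirit as the proof of Lemma~\ref{lem:grid-cap}. Since both $O(x,y)$ and $C(x,y)$ vanish unless $y=x$, it suffices to show that for every $x$ with $h := \hw(x)$,
\[
1 \;=\; \left(\tfrac{1-i}{2}\, i^{h} + \tfrac{1+i}{2}\,(-i)^{h}\right)\cdot (-1)^{\binom{h}{2}}.
\]
Equivalently, one checks $(-1)^{\binom{h}{2}} = \tfrac{1-i}{2} i^{h} + \tfrac{1+i}{2}(-i)^{h}$. Both sides depend only on $h \bmod 4$: the right-hand side is periodic in $h$ with period $4$ (since $i^{4}=1$), and the left-hand side has $\binom{h}{2}\bmod 2$ cycling through $0,0,1,1$ as $h$ runs through residues $0,1,2,3$ modulo $4$. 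The main step is therefore a four-case verification; for instance, $h=1$ gives $\tfrac{1-i}{2}\cdot i + \tfrac{1+i}{2}\cdot(-i) = \tfrac{i+1}{2} + \tfrac{1-i}{2} = 1 = (-1)^{0}$, and the remaining three residues are analogous.

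Having established this scalar identity, the second step is to realize each of the two constituents of the linear combination by an honest matchgate. Starting from the planar matchgate realizing $C(x,y)$ shown in the right part of Figure~\ref{pic: cross cap}, we obtain an additional factor $i^{\hw(x)}$ by assigning weight $i$ to each of the $d$ cross-cap edges $x_{1},\ldots,x_{d}$: a satisfying assignment uses exactly $\hw(x)$ of these edges, so its edge-weight contribution is multiplied by $i^{\hw(x)}$. Replacing $i$ by $-i$ on those same edges gives the second matchgate, whose signature is $(-i)^{\hw(x)}\cdot C(x,y)$. These are matchgates in the sense of Definition~\ref{def: module}, just with complex edge-weights, and scaling by the global coefficients $\tfrac{1\pm i}{2}$ is absorbed harmlessly into one edge-weight of each gate.

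Combining the two ingredients then yields the decomposition of $O$ as a linear combination of two matchgate signatures, as claimed. I do not foresee a real obstacle here: the statement is essentially a finite Fourier identity on $\mathbb{Z}/4\mathbb{Z}$ (the characters $h\mapsto i^{h}$ and $h\mapsto (-i)^{h}$ suffice to expand $(-1)^{\binom{h}{2}}$ because the latter is periodic with period $4$ and orthogonal to the two ``even'' characters $1$ and $(-1)^{h}$ that were used for the grid cap), and the matchgate realization of the scalar phases $i^{\hw(x)}$ and $(-i)^{\hw(x)}$ is a direct analogue of the $(-1)^{\hw(x_1)}$ trick at the end of the proof of Lemma~\ref{lem:grid-cap}.
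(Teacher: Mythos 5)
Your proposal is correct and matches the paper's proof essentially step for step: you establish the scalar identity $(-1)^{\binom{h}{2}}=\frac{1-i}{2}i^{h}+\frac{1+i}{2}(-i)^{h}$ (the paper asserts it after motivating it via a Vandermonde/Fourier argument, while you verify it by reduction modulo $4$), and you realize the phase factors $i^{\hw(x)}$ and $(-i)^{\hw(x)}$ by assigning weights $i$ and $-i$ to the cross-cap edges, exactly as the paper does.
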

\begin{proof}
	The sequence $(-1)^{{\hw(x) \choose 2}}$ indexed by $\hw(x)$ is
	\[1,1,-1,-1,1,1,-1,-1,\ldots\] 
	It must be a linear combination of
	$4$ sequences $w^{\hw(x)}$, for $w\in \{1,i,-1,-i\}$, all of which have the same period
	$4$, since the length 4 initial segments of the 4 sequences form
	a full rank Vandermonde matrix. 
	In fact, it can be expressed as a linear combination of two such sequences, as we can observe that
	\[(-1)^{{\hw(x) \choose 2}}=\frac{1-i}{2}i{}^{\hw(x)}+\frac{1+i}{2}(-i)^{\hw(x)}.\]
	
	The extra factor $i^{\hw(x)}$ can be realized by giving weight $i$
	instead of $1$ to each input edge in $C$. \end{proof}

Using the fact that $G$ is embedded as a plane model, and using the combined signatures for grid caps and cross caps from the last two lemmas, we then obtain the following known theorem.

\begin{thm}
	\cite{DBLP:journals/jct/Tesler00} Let $G$ be a graph that is embedded on a surface.
	Then $\PerfMatch(G)$ is a summation of $\PerfMatch$ of $2^{2k}$, $2{}^{2k+1}$
	or $2{}^{2k+2}$ planar graphs, respectively,
	if the surface is the connected sum of an orientable surface of genus $k$ with
	the plane, the projective
	plane, or the Klein bottle, respectively. \end{thm}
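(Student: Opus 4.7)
The strategy is to phrase $\PerfMatch(G)$ as a Holant problem $\Holant(\Omega)$ in which the non-planarity of the embedding is localized inside gate signatures, and then to apply Lemma~\ref{lem: combined signature lemma} with the constituents provided by Lemma~\ref{lem:grid-cap} and Lemma~\ref{lem:cross cap}. Starting from the plane model of $G$ inside the polygon $P$, I would form a signature graph $\Omega$ as follows: every original vertex $v\in V(G)$ receives the perfect-matching signature $\sigHW{=1}$; for each pair of opposite sides $a_i,a_i^{-1}$ that forms (together with $a_{i+1},a_{i+1}^{-1}$) a grid cap, I introduce one ``grid-cap vertex'' whose dangling edges are precisely the bunches $x_i$, $x_{i+1}$, $y_i$, $y_{i+1}$ leaving $P$ through those four sides, and I attach the signature $O$ from Lemma~\ref{lem:grid-cap}; for each cross-cap bunch I analogously introduce a ``cross-cap vertex'' with signature $O$ from Lemma~\ref{lem:cross cap}. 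Since $O$ precisely enforces the boundary identifications $y_i = x_i^{-1}$ (respectively $y=x$) that correspond to gluing the sides of $P$, the satisfying assignments of $\Omega$ are in bijection with the perfect matchings of $G$, and $\Holant(\Omega)=\PerfMatch(G)$ by Example~\ref{exa:Holant PM}.

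Next, I would apply Lemma~\ref{lem: combined signature lemma} to the grid-cap and cross-cap vertices, using the $4$-term decomposition from Lemma~\ref{lem:grid-cap} at each grid-cap vertex and the $2$-term decomposition from Lemma~\ref{lem:cross cap} at each cross-cap vertex. For a surface of orientable genus $k$ there are $k$ grid caps and no cross caps, giving $4^k = 2^{2k}$ terms; adding a projective plane adds one cross cap for $4^k\cdot 2 = 2^{2k+1}$ terms; adding a Klein bottle adds two cross caps for $4^k\cdot 4 = 2^{2k+2}$ terms. Each term is $c_\theta\cdot\Holant(\Omega_\theta)$ where every grid-cap vertex now carries a signature of type $C$ and every cross-cap vertex carries a suitably edge-reweighted signature of type $C$, each of which is realized by the planar matchgates displayed in Figure~\ref{pic: grid cap}.

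Finally, I would argue that each $\Omega_\theta$ reduces to a planar instance of $\PerfMatch$. The key point is that after we unfold the polygon $P$ into the plane, the grid-cap and cross-cap vertices all sit on the outer face of $\Omega_\theta$, with their dangling edges ordered cyclically exactly as in the matchgates of Figure~\ref{pic: grid cap}. Hence Lemma~\ref{lem: holant insertions} lets me insert each planar matchgate in an embedding-compatible way, producing a planar signature graph all of whose vertices carry $\sigHW{=1}$; possible edge weights $-1$ or $i$ introduced by Lemmas~\ref{lem:grid-cap} and~\ref{lem:cross cap} are simply inherited as weights in the resulting graph $G_\theta$. By Example~\ref{exa:Holant PM} again, $\Holant(\Omega_\theta)=\PerfMatch(G_\theta)$, and summing over $\theta$ yields the stated expression of $\PerfMatch(G)$ as a linear combination of $2^{2k}$, $2^{2k+1}$, or $2^{2k+2}$ planar $\PerfMatch$ values. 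The only delicate step is verifying the cyclic order of dangling edges along each cap so that Lemma~\ref{lem: holant insertions} applies; this is precisely what justified the sign $-1$ at $\pass(\symNSWE)$ in the first place, so it is a matter of bookkeeping rather than a genuine obstacle.
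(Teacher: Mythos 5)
Your proposal follows exactly the paper's own argument, which is stated in one line: apply Lemma~\ref{lem: combined signature lemma} with the constituent decompositions from Lemma~\ref{lem:grid-cap} (four terms per grid cap) and Lemma~\ref{lem:cross cap} (two terms per cross cap), yielding $4^k\cdot 2^0$, $4^k\cdot 2^1$, or $4^k\cdot 2^2$ planar $\PerfMatch$ instances. Your write-up just spells out the intermediate steps (forming the signature graph $\Omega$ from the plane model, the bijection with perfect matchings, insertion of planar matchgates with the right cyclic order of dangling edges via Lemma~\ref{lem: holant insertions}, and absorbing the coefficients $-1$ and $i$ as edge weights) that the paper leaves implicit.
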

\begin{proof}
	By Lemma~\ref{lem:grid-cap} and \ref{lem:cross cap}, use Lemma~\ref{lem: combined signature lemma}
	on the $k$ grid caps and $0$, $1$ or $2$ cross caps.\end{proof}

\subsection{Additional remarks}
For a matrix $A$, let $A^{\otimes k}$ denote the matrix obtained from the $k$-fold Kronecker product $A \otimes \ldots \otimes A$. The essence of Lemma~\ref{lem:grid-cap} is that we can use the four matchgates to realize all four columns of the basis
\[\left(\begin{array}{cc}
1 & 1\\
1 & -1
\end{array}\right)^{\otimes 2},
\]
so that we can then obtain any other function by linear combinations. The same observation also holds for a larger base
\[\left(\begin{array}{cc}
1 & 1\\
1 & -1
\end{array}\right)^{\otimes m}.
\]

We give an example: In a cross cap of $m$ edges, we may replace each
edge by a bunch of parallel edges, and call the result a \emph{grated cross cap}.
All the ${m \choose 2}$ latent crossings of the cross cap become grid caps in the grated cross cap.
\begin{fact}
	\label{fact:grated cross cap}Every grated cross cap gate over $m$
	bunches of edges, as defined above, can be expressed as a linear combination of $2^{m}$ planar matchgates.
\end{fact}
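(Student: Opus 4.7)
The plan is to mimic the strategies of Lemmas~\ref{lem:grid-cap} and~\ref{lem:cross cap}: place a $\pass$ matchgate at every latent crossing of the grated cross cap to obtain a single planar matchgate whose signature equals the target up to a sign depending only on the parities of the bunch inputs, and then absorb that sign via the Walsh--Fourier basis over $\{0,1\}^m$ using $2^m$ local edge-weight modifications.

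Concretely, I would label the bunches $1,\ldots,m$ with sizes $b_1,\ldots,b_m$ and write the dangling-edge assignments as $x^{(i)},y^{(i)}\in\{0,1\}^{b_i}$, with $p_i:=\odd(x^{(i)})$. The target signature is $O(x,y)=\prod_{i=1}^{m}[y^{(i)}\equiv x^{(i)}]$, where $\equiv$ records the orientation convention inherited from the cross cap. First, place a $\pass$ matchgate at each of the $b_ib_j$ crossings inside every grid cap $\{i,j\}$; since $\Gamma_{\pass}$ is planar, this yields a planar matchgate $C$. By the same counting used just before Lemma~\ref{lem:grid-cap}, the grid cap between bunches $i$ and $j$ contributes a factor $(-1)^{\hw(x^{(i)})\hw(x^{(j)})}=(-1)^{p_ip_j}$, giving
\[
C(x,y)=(-1)^{\sigma(p)}\cdot O(x,y),\qquad \sigma(p):=\sum_{1\le i<j\le m}p_ip_j.
\]

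Next, I would use the fact that the characters $\chi_c(p)=(-1)^{c\cdot p}$ for $c\in\{0,1\}^m$ form a basis of $\mathbb{C}^{\{0,1\}^m}$ to Walsh-expand the offending sign as $(-1)^{\sigma(p)}=\sum_{c\in\{0,1\}^m}\alpha_c\,\chi_c(p)$ for uniquely determined real coefficients $\alpha_c$. For each $c\in\{0,1\}^m$, I would construct a planar matchgate $C_c$ from $C$ by flipping, for each $i$ with $c_i=1$, the weight of the internal edges of $C$ that are attached to the $x^{(i)}$ dangling edges of bunch $i$; this is the direct analogue of the weight-change trick used to realize the factor $i^{\hw(x)}$ at the end of the proof of Lemma~\ref{lem:cross cap}. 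The modification multiplies the signature by $\prod_{i:c_i=1}(-1)^{\hw(x^{(i)})}=(-1)^{c\cdot p}$, so $C_c$ realizes $(-1)^{c\cdot p}\cdot C$. Summing then gives
\[
O(x,y)=\sum_{c\in\{0,1\}^m}\alpha_c\,C_c(x,y),
\]
exhibiting $O$ as a linear combination of $2^m$ planar matchgates, which is exactly the claim. As a sanity check, for $m=2$ one recovers the four coefficients $\tfrac12,\tfrac12,\tfrac12,-\tfrac12$ appearing in Lemma~\ref{lem:grid-cap}.

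The main obstacle is orientation and sign bookkeeping: verifying that the grid-cap sign analysis from Lemma~\ref{lem:grid-cap} composes cleanly across all $\binom{m}{2}$ grid caps without spurious cross-terms arising from the cross-cap topology between bunches, and that the local weight flip inside bunch $i$ contributes precisely $(-1)^{\hw(x^{(i)})}$ to the signature of $C_c$ without interfering with the $\pass$ signatures at nearby crossings. Once these routine checks are carried out, the Walsh--Fourier expansion together with Lemma~\ref{lem: combined signature lemma} applied to the $m$ bunches yields the stated linear combination mechanically.
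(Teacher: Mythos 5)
Your proposal is correct and takes essentially the same approach the paper sketches in the paragraph preceding the Fact (the paper states the Fact without a formal proof, but the discussion of realizing all columns of the Hadamard basis $\left(\begin{smallmatrix}1 & 1\\ 1 & -1\end{smallmatrix}\right)^{\otimes m}$ is exactly the Walsh-expansion argument you carry out). You correctly identify that the $\pass$-decorated planar version of the grated cross cap picks up the sign $(-1)^{\sum_{i<j} p_i p_j}$, which depends only on the bunch parities $p_i$, and that the $2^m$ character twists $(-1)^{c\cdot p}$ realized by $\pm 1$ weight flips then allow this sign to be absorbed.
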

%\begin{proof}
%If we put a $\pass$ matchgate at each cross we get
%\[
%C(x_{1},\cdots,x_{m},y_{1,}\cdots,y_{m})=(-1)^{{\sum_{j=1}^{m}P(x_{j}) \choose 2}}O(x_{1},\cdots,x_{m},y_{1,}\cdots,y_{m}).
%\]
% By the method in Lemma~\ref{lem:cross cap},
%\begin{eqnarray*}
%O(x_{1},\cdots,x_{m},y_{1,}\cdots,y_{m}) & = & \frac{1-i}{2}i{}^{\sum_{j=1}^{m}P(x_{j})}C(x_{1},\cdots,x_{m},y_{1,}\cdots,y_{m})+\\
% & + & \frac{1+i}{2}(-i)^{\sum_{j=1}^{m}P(x_{j})}C(x_{1},\cdots,x_{m},y_{1,}\cdots,y_{m})
%\end{eqnarray*}
%
%Substitute $i^{P(x_{j})}=\frac{1+i}{2}+\frac{1-i}{2}(-1)^{\hw(x_{j})}$
%and $(-i){}^{P(x_{j})}=\frac{1-i}{2}+\frac{1+i}{2}(-1)^{\hw(x_{j})}$
%into the above formula. We get $O$ is a linear combination of $2^{m}$
%functions, each of which has form
%\begin{equation}
%\Pi_{j\in S}(-1)^{\hw(x_{j})}C(x_{1},\cdots,x_{m},y_{1,}\cdots,y_{m}),\label{eq:base vector}
%\end{equation}
% for some $S\subseteq[m]$. These $2^{m}$ functions can be realized
%by matchgates.
%\end{proof}

In fact, these $2^m$ basis matchgates are powerful enough to express (as a linear combination) 
any function that depends only upon the parities $p_1, \ldots, p_m$ of active edges in the $m$ edge bunches.
However, among these functions, we currently only know one interesting function, i.e., the grid cap. 
Even the grated cross cap seems too artificial to be related with a natural tractability result.
A similar generalization applies to Lemma~\ref{lem:cross cap}, where the functions to be expressed may also depend upon residuals of the numbers of active edges in the $m$ edge bunches, in this case however modulo $4$ rather than $2$.

\section{\label{sec: permanent k-apex}The permanent on k-apex graphs}

In this section, we prove Theorem~\ref{thm: hadw hard} by an application
of our framework of combined signatures. We use $\pGrid[\#]$ as a
reduction source, and from a high level, our approach could be compared
to, say, the reduction in \cite{Marx2012} for planar multiway cut.
Given an instance $\A$ to $\pGrid[\#]$, we proceed as follows:
\begin{enumerate}
	\item We express the solution to the instance as $\Holant(G)$ for a signature graph $G$ in
	Section~\ref{sub:Global-construction}. 
	\item We realize the signatures of $G$ in Section~\ref{sub:Realizing-cell-signatures}.
	At this point however, we require combined signatures, and this is
	where we depart from the usual reductions from $\pGrid$.
\end{enumerate}
Large parts of this section will be reused
in Section~\ref{sec: permanent modulo} with an added layer of technicalities.

\subsection{\label{sub:Global-construction}Global construction}

In the following, let $\mathcal{A}=(n,k,\C,\T)$ be a fixed instance
to $\pGrid[\#]$, as specified in Definition~\ref{def: GridTiling}.
By applying vertical balance as in Lemma~\ref{lem: GridTiling balance},
we may assume the existence of some number $T\leq n$ such that for
all $\kappa\in\C$ and all $v\in[n]$, there are exactly $T$ elements
of type $(\star,v)$ in $\T(\kappa)$. This will become relevant in
Section~\ref{sub:Realizing-cell-signatures}.

First, we reformulate $\mathcal{A}$ as the Holant of
a signature graph $G=G(\mathcal{A})$. This graph $G$ consists of
a $k\times k$ square grid of \emph{cells,} and $4k$ additional \emph{border
	vertices} adjacent to the borders of the grid, as seen in the left part of Figure~\ref{fig: apex signature graph}.
Note that $G$ is planar. We denote its vertices by $c_{\kappa}$
for $\kappa\in\Xi$, where
\[
\Xi:=[k]^{2}\,\cup\,\{\mathsf{N},\mathsf{W},\mathsf{S},\mathsf{E}\}\times[k].
\]

For $i\in[k]$, we declare $(\mathsf{N},i)$ to be vertically adjacent
to $(1,i)$, and $(\mathsf{S},i)$ to $(k,i)$. Likewise, we declare
$(\mathsf{W},i)$ to be horizontally adjacent to $(i,1)$, and $(\mathsf{E},i)$
to $(i,k)$. We refer to the neighbors of any index $\kappa\in\Xi$ or vertex $c_\kappa \in V(G)$ using cardinal
directions in the obvious way, e.g., we may speak of the northern
neighbor of a vertex. Between any pair of vertices $c_{\kappa}$ and
$c_{\kappa'}$ with adjacent indices $\kappa$ and $\kappa'$, we
place a set $E_{\kappa,\kappa'}$ of $n$ parallel edges, which we
call an \emph{edge bundle}.
\begin{figure}
	\begin{centering}
		\includegraphics[width=1\textwidth]{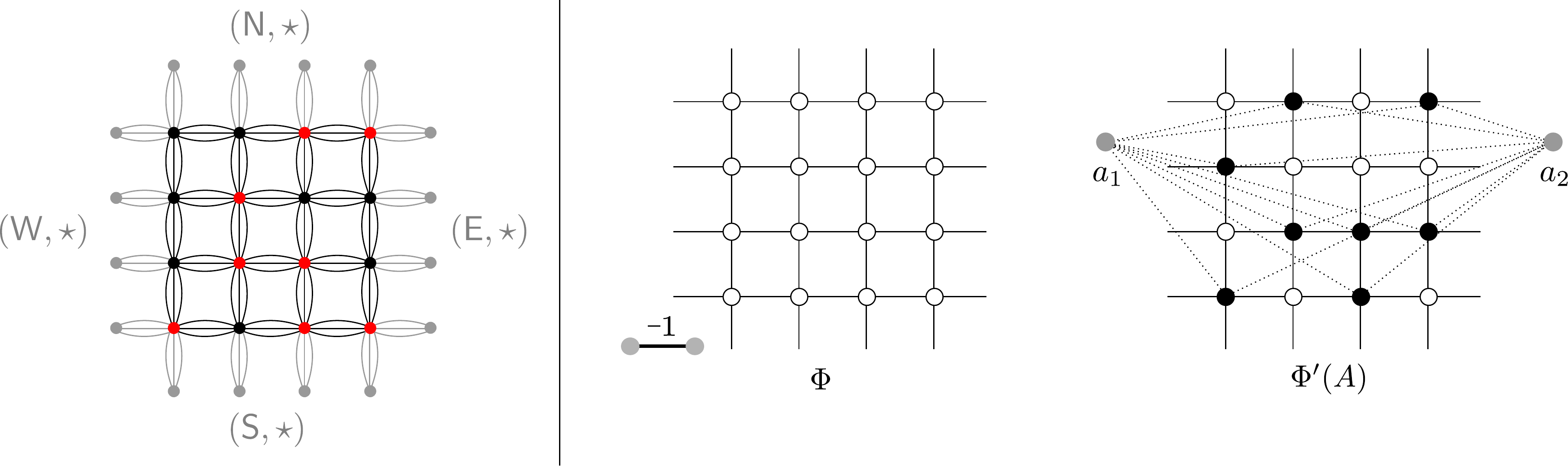}
		\par\end{centering}
	
	\protect\caption{
		\label{fig: apex signature graph}The left part of the figure shows the signature graph $G(\protect\A)$.
		Border vertices $c_{\kappa}$ for $\kappa\in\{\mathsf{N},\mathsf{W},\mathsf{S},\mathsf{E}\}\times[k]$
		and their incident edges are colored gray. Cell vertices $c_{\kappa}$
		for $\kappa\in\protect\C$ are colored red, while vertices $c_{\kappa}$
		for $\kappa\in[k]^{2}\setminus\protect\C$ are colored black. Horizontally
		or vertically adjacent vertices are connected by an edge bundle of
		$n$ parallel edges.
		\label{fig: phi-phi'}The right part of the figure shows the gates $\Phi$ and $\Phi'(A)$. Each white
		vertex is assigned $\protect\pass$, each black vertex is assigned
		$\protect\act$, and each gray vertex is assigned $\protect\sigHW{=1}$.
		Edges from apices in $\Phi'$ are drawn dashed.
		Note that, due to the balance property of $\mathcal{T}$, we may assume
		that every column has the same number $T$ of occurrences of $\protect\act$.
	}
\end{figure}

We proceed to define the signatures of $G$. In the assignments $a\in\{0,1\}^{E(G)}$
we are interested in, each edge bundle features exactly one active
edge, which is used to encode a number from $[n]$. At border vertices,
we place the signature $\sigHW{=1}$ to ensure this. The signatures
of cells $c_{\kappa}$ with $\kappa\in[k]^{2}$ are then defined so
that each cell propagates the number $x_{W}\in[n]$ encoded by its
western incident edge bundle to the east, and its northern number
$x_{N}\in[n]$ to the south, while checking along the way whether
$(x_{W},x_{N})\in\mathcal{T}(\kappa)$ holds.
\begin{rem}
	\label{rem: apex conventions}We adhere to the following notational conventions in this
	section:
	\begin{itemize}
		\item For $v\in[n]$, we often identify the string $0^{v-1}10^{n-v}\in\{0,1\}^{n}$
		with the number $v$ when it is clear from the context which of these
		two objects we currently refer to.
		\item For $\kappa\in[k]^{2}$, the $4n$ incident edges of each vertex $c_{\kappa}$
		are ordered such that all northern edges appear first, in a block
		of length $n$, followed by the $n$ eastern, the $n$ southern, and
		finally the $n$ western edges.
		\item We implicitly consider strings $x\in\{0,1\}^{4n}$ to be
		decomposed into $x=x_{N}x_{E}x_{S}x_{W}$ with four bistrings $x_{N},x_{E},x_{S},x_{W}\in\{0,1\}^{n}$ corresponding to the four cardinal directions.
	\end{itemize}
\end{rem}
Using these conventions, we then define the following predicates for
strings $x\in\{0,1\}^{4n}$:\label{text: one-prop} 
\begin{eqnarray*}
	\varphi_{\mathit{one}}(x) & \equiv & \hw(x_{N})=1\,\wedge\,\hw(x_{W})=1,\\
	\varphi_{\mathit{prop}}(x) & \equiv & x_{N}=x_{S}\wedge x_{W}=x_{E}.
\end{eqnarray*}

If a function $f$ satisfies $\varphi_{\mathit{prop}}(x)$ for each
$x\in\supp(f)$, then we call $f$ \emph{propagating}. For each $\kappa\in[k]^{2}$,
we place a specific propagating signature $f_{\kappa}$ at the vertex
$c_{\kappa}$ in order to complete $G$ to a signature graph whose
satisfying assignments correspond bijectively to the grid tilings
of $\mathcal{A}=(n,k,\C,\T)$.
\begin{defn}
	\label{def: f and g}
	Let let $\mathcal{A}=(n,k,\C,\T)$ an instance to the grid tiling problem, as described above.
	For all $\kappa\in[k]^{2}\setminus\C$, we define
	the vertex function $f_{\kappa}:\{0,1\}^{4n}\to\{0,1\}$ of $c_{\kappa}$
	such that, for all $x\in\{0,1\}^{4n}$ satisfying the predicate $\varphi_{\mathit{one}}(x)$,
	we have 
	\[
	f_{\kappa}(x) := [\varphi_{\mathit{prop}}(x)].
	\]
	Note that no requirement is imposed upon $f_{\kappa}(x)$ on those
	$x\in\{0,1\}^{4n}$ that fail to satisfy $\varphi_{\mathit{one}}(x)$.
	For all remaining $\kappa$, namely all $\kappa\in\C$, we define
	the vertex function $g_{\kappa}$ of $c_{\kappa}$ on such $x\in\{0,1\}^{4n}$
	by declaring
	\[
	g_{\kappa}(x) := [\varphi_{\mathit{prop}}(x)\wedge(x_{W},x_{N})\in\mathcal{T}(\kappa)]
	\]
	
\end{defn}
This finishes the definition of the signature graph $G=G(\A)$. In the following, we verify by a simple argument that $G$ indeed encodes $\A$ properly.
\begin{lem}
	\label{lem: Holant =00003D GridTilings}The grid tilings of $\A$
	correspond bijectively to the satisfying assignments $x\in\{0,1\}^{E(G)}$ of $G$,
	and each satisfying assignment $x$ additionally has $\val_{G}(x)=1$.\end{lem}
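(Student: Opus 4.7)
The plan is to exhibit mutually inverse maps between the grid tilings of $\A$ and the satisfying assignments of $G$. Since every signature appearing in $G$ (the border signature $\sigHW{=1}$ together with the cell signatures $f_\kappa$ and $g_\kappa$) is $\{0,1\}$-valued, any $x$ with $\val_G(x)\neq 0$ automatically has $\val_G(x)=1$ and satisfies every local constraint. This settles the second claim of the lemma once the bijection is established.

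The crux is to propagate the Hamming-weight-one condition from the border inward. The $4k$ border vertices carry $\sigHW{=1}$, so in any satisfying $x$ the bundles entering the top row from the north and the left column from the west each carry exactly one active edge. I then argue by induction on $i+j$ that at every cell $c_{(i,j)}$ the predicate $\varphi_{\mathit{one}}$ holds: its northern bundle is either a border-incident bundle (base case $i=1$) or the southern bundle of $c_{(i-1,j)}$, which has Hamming weight one by the inductive hypothesis; symmetrically for the western bundle via column $j-1$. Once $\varphi_{\mathit{one}}$ holds at $c_{(i,j)}$, Definition~\ref{def: f and g} forces the cell signature to coincide with $[\varphi_{\mathit{prop}}]$ (conjoined with the $\T$-membership test when $\kappa \in \C$), and since this must evaluate to $1$, we get $x_N = x_S$ and $x_W = x_E$, so the eastern and southern bundles also have Hamming weight one, feeding the next step of the induction. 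This propagation argument is the only delicate step, because $f_\kappa$ is only partially specified off $\varphi_{\mathit{one}}$; the induction is exactly what shows that these unspecified inputs never arise in a satisfying assignment, so their values are irrelevant.

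From this, the bijection falls out. Chasing $x_W = x_E$ along a row $i$ shows that all horizontal bundles in row $i$ carry a common unary weight-one value $r_i \in [n]$ (under the identification from Remark~\ref{rem: apex conventions}), and symmetrically all vertical bundles in column $j$ carry a common value $c_j \in [n]$. Define $a\colon [k]^2 \to [n]^2$ by $a(i,j) := (r_i, c_j)$; the horizontal and vertical agreement conditions of Definition~\ref{def: GridTiling} are immediate from this product form, and for $\kappa = (i,j) \in \C$ the constraint $g_\kappa(x) = 1$ reduces to $(r_i, c_j) \in \T(\kappa)$, so $a$ is a valid grid tiling. Conversely, any grid tiling $a$ yields an assignment $x$ by activating the $r_i$-th edge of each horizontal bundle in row $i$ and the $c_j$-th edge of each vertical bundle in column $j$; one checks directly that every border and cell signature evaluates to $1$ on this $x$, so $\val_G(x)=1$. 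The two maps are mutually inverse by inspection, completing the bijection.
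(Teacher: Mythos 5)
Your proposal is correct and follows essentially the same two-step strategy as the paper: a forward construction mapping grid tilings to satisfying assignments, and a converse direction that propagates the Hamming-weight-one condition inward from the border by induction to conclude $\varphi_{\mathit{one}}$ (hence $\varphi_{\mathit{prop}}$) at every cell. Your upfront observation that the Boolean codomain of all signatures makes $\val_G(x)\in\{0,1\}$ automatic is a clean shortcut the paper implicitly relies on as well, and your explicit induction on $i+j$ spells out the ``induction along rows and columns'' that the paper invokes more tersely; otherwise the two arguments coincide.
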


\begin{proof}
	Every grid tiling $a:[k]^{2}\to[n]^{2}$ can be transformed into an
	assignment $x(a)\in\{0,1\}^{E(G)}$ as follows: For each $\kappa\in[k]^{2}$,
	with $a(\kappa)=(u,v)$, declare the $u$-th edge in the western edge
	bundle of $c_{\kappa}$ and the $v$-th edge in the northern edge
	bundle of $c_{\kappa}$ to be active. At vertices $c_{(k,\star)}$,
	copy the assignment from northern edges to southern edges, and at
	$c_{(\star,k)}$, copy the assignment from western edges to eastern
	edges. Declare all other edges to be inactive. It follows from the
	definition of $f_{\kappa}$ at $\kappa\in\C$ and $g_{\kappa}$ at
	$\kappa\in[k]^{2}\setminus\C$ that $\val_{G}(x(a))=1$ holds.
	
	For the converse direction, we show that every satisfying assignment
	$x\in\{0,1\}^{E(G)}$ can be written as $x=x(a)$ for some grid tiling
	$a$, where $x(a)$ is defined as in the previous paragraph. Note
	that this also implies $\val_{G}(x)=1$. By the signature $\sigHW{=1}$,
	every border vertex is incident with exactly one active edge in $x$.
	Hence, the restriction of $x$ to $I(c_{1,1})$ satisfies $\varphi_{\mathit{one}}$;
	call this restricted assignment $y$.
	\begin{itemize}
		\item If $(1,1)\in[k]^{2}\setminus\C$, then the vertex function of $c_{1,1}$
		is $f_{1,1}$. Since $f_{1,1}(y)=1$, and since $f_{1,1}$ is propagating
		on inputs satisfying $\varphi_{\mathit{one}}$, we also have $\varphi_{\mathit{prop}}(y)$. 
		\item If $(1,1)\in\C$, then we additionally have $(y_{W},y_{N})\in\mathcal{T}(1,1)$
		by definition of $g_{1,1}$. 
	\end{itemize}
	By induction along rows and columns, we obtain, for every $\kappa\in[k]^{2}$,
	that the partial assignment $y$ at $I(c_{\kappa})$ satisfies $\varphi_{\mathit{prop}}(y)$
	and $(y_{W},y_{N})\in\mathcal{T}(\kappa)$ if $\kappa\in\C$. Hence
	$x=x(a)$ holds for a unique grid tiling $a$.
\end{proof}

In the next subsection, we realize each signature $f_{\kappa}$ for
$\kappa\in\C$ as a planar matchgate, and each $g_{\kappa}$ for $\kappa\in[k]^{2}\setminus\C$
as a linear combination of two matchgate signatures that have maximum
apex number $2$. Note that the remaining signatures $\sigHW{=1}$
occurring in $G$ are planar.
Since $G$ itself is planar and features
at most $\O(k)$ signatures $g_{\kappa}$, the graphs realizing $G$
will feature at most $\O(k)$ apices, and we will use this to obtain
the desired parameterized reduction and lower bound under $\ETH[\#]$.

\subsection{\label{sub:Realizing-cell-signatures}Realizing cell signatures}

It can be shown (under no additional assumptions) that some of the signatures $g_{\kappa}$ for $\kappa \in [k]^2$
are non-planar. From a complexity viewpoint, if all such signatures
were planar and we knew explicit planar matchgates, then we could
reduce $\pGrid[\#]$ to planar $\PerfMatch$, and thus show $\mathsf{FP}=\sharpP$
by the FKT method. Rather than trying to use planar matchgates, we show that each signature
$g_{\kappa}$ can be realized as a specific \emph{linear combination}
of the signatures of one planar and one $2$-apex matchgate. Note again that at least one non-planar constituent is necessary, as we could otherwise show $\FPT=\Wone[\#]$.

In the remainder of this section, we consider $\kappa\in[k]^{2}$
to be fixed, we write $A=\mathcal{T}(\kappa)$ and we recall that
$A\subseteq[n]^{2}$. The constituents for $g_{\kappa}$ will be the
signatures of two gates $\Phi$ and $\Phi'(A)$, which use as building
blocks the signatures $\pass$ and $\act$ from Section~\ref{sec:Holants-and-linear}.
\begin{defn}
	\label{def: cell gates}Let $n\in\N$ and let $A\subseteq[n]^{2}$.
	We define gates $\Phi$ and $\Phi'=\Phi'(A)$ with $4n$ dangling
	edges (that is, with $n$ dangling edges for each cardinal direction) as follows.
	Consider also the right part of Figure~\ref{fig: phi-phi'}.
	\begin{itemize}
		\item To obtain the gate $\Phi$, arrange vertices $b_{\tau}$ for $\tau\in[n]^{2}$
		in a $n\times n$ grid and assign the signature $\pass$ to each such
		vertex.
		Add a single edge of weight $-1$ between two fresh vertices
		of signature $\sigHW{=1}$. 
		\item A similar construction yields the gate $\Phi'$: Starting from $\Phi$, remove
		the extra edge of weight $-1$, add apex vertices $a_{1}$ and $a_{2}$
		with signatures $\sigHW{=1}$, and for all $\tau\in A$, do the following:
		
		\begin{enumerate}
			\item Replace the signature $\pass$ at $b_{\tau}$ with $\act$.
			\item Add the edges $a_{1}b_{\tau}$ and $a_{2}b_{\tau}$. Declare these
			to be the last two edges in the edge ordering of $I(v_{\tau})$. 
		\end{enumerate}
		
	\end{itemize}
\end{defn}

Recall that $\pass$ is realized by the
planar matchgate $\Gamma_{\pass}$, so we can also view the gate $\Phi$
as a planar matchgate after realizing all signatures by matchgates. 
We will later switch between these views depending
on the application. Note also that the $2$-coloring of $\Gamma_{\pass}$ can be extended
to one of $\Phi$.
Likewise, $\act$ is realized by the matchgate $\Gamma_{\act}$,
which is planar when ignoring its dangling edges $5$ and $6$. That
is, after realizing each occurrence of $\act$ by $\Gamma_{\act}$,
the resulting matchgate obtained from $\Phi'$ is planar after removal
of $a_{1}$ and $a_{2}$.

Our goal for this subsection is to realize the signatures $f_{\kappa}$ and $g_{\kappa}$
from Definition~\ref{def: f and g}. In the following, we prove that
$f_{\kappa}=\Sig(\Phi)$ and that $g_{\kappa}$ can be realized by
a linear combination of $\Sig(\Phi)$ and $\Sig(\Phi')$. It will
be crucial for our calculations to assume our instance $\A$ for $\pGrid$
to be balanced: By Lemma~\ref{lem: GridTiling balance}, we assume there is some $T\in\mathbb{N}$
such that $|A\cap(\star,v)|=T$ for all $v\in[n]$. That is, in the
right part of Figure~\ref{fig: phi-phi'}, we may assume that every
column of $\Phi'(A)$ features the same number $T$ of vertices with signature $\act$. 
\begin{lem}
	\label{lem: cell signatures}Recall the definition of the predicates
	$\varphi_{\mathit{one}}$ and $\varphi_{\mathit{prop}}$ \vpageref{text: one-prop}.
	Let $x\in\{0,1\}^{4n}$ be an assignment that satisfies the predicate
	$\varphi_{\mathit{one}}$. Then
	\begin{eqnarray}
		\Sig(\Phi,x) & = & \begin{cases}
			0 & \mbox{if }\neg\varphi_{\mathit{prop}}(x),\\
			1 & \mbox{if }\varphi_{\mathit{prop}}(x).
		\end{cases}\label{eq: First matchgate}\\
		\Sig(\Phi'(A),x) & = & \begin{cases}
			0 & \quad\mbox{if }\neg\varphi_{\mathit{prop}}(x)\\
			\begin{cases}
				-T & \mbox{if }(x_{W},x_{N})\notin A\\
				-T+2 & \mbox{if }(x_{W},x_{N})\in A
			\end{cases} & \quad\mbox{if }\varphi_{\mathit{prop}}(x).
		\end{cases}\label{eq: Second matchgate}
	\end{eqnarray}
	Note that $f_{\kappa}=\Sig(\Phi)$ for $\kappa\in[k]^{2}\setminus\C$.
	For $\kappa\in\C$ and for $x\in\{0,1\}^{4n}$ satisfying $\varphi_{\mathit{one}}$,
	we have 
	\begin{equation}
		g_{\kappa}(x)=\frac{T}{2}\cdot\Sig(\Phi,x)+\frac{1}{2}\cdot\Sig(\Phi'(\mathcal{T}(\kappa)),x).\label{eq: apex combined signature}
	\end{equation}
	
\end{lem}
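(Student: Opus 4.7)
The plan is to analyze $\Sig(\Phi, x)$ and $\Sig(\Phi'(A), x)$ separately by exploiting the propagation enforced by the $\pass$ and $\act$ signatures, and then to verify the linear combination \eqref{eq: apex combined signature} by a short case analysis. Throughout, I will crucially use the balance property of $\A$ promised by Lemma~\ref{lem: GridTiling balance}.

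For $\Sig(\Phi, x)$, the key observation is that both $\pass$ and $\act$ (the latter with its switches set to $00$) vanish unless their N/S edges agree and their W/E edges agree. Hence each vertex of the $n\times n$ grid of $\Phi$ enforces consistency along its row and column, so every value-nonzero extension of $x$ must satisfy $x_{N}=x_{S}$ and $x_{W}=x_{E}$. If $\varphi_{\mathit{prop}}(x)$ fails, this forces $\Sig(\Phi,x)=0$. Otherwise, under $\varphi_{\mathit{one}}(x)$ we get a unique active row $u^{\circ}$ (from the single $1$ in $x_{W}$) and a unique active column $v^{\circ}$ (from $x_{N}$), and the interior assignment is completely determined. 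The only vertex in state $\symNSWE$ is $b_{(u^{\circ},v^{\circ})}$, contributing $\pass(\symNSWE)=-1$; the extra edge of weight $-1$ must be active since its endpoints carry $\sigHW{=1}$; all other vertices sit in states $\symWE$, $\symNS$, or $\symEmpty$ and contribute $1$. This gives $\Sig(\Phi,x)=(-1)\cdot(-1)=1$, establishing \eqref{eq: First matchgate}.

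For $\Sig(\Phi'(A),x)$, the propagation argument is identical, so again $\varphi_{\mathit{prop}}(x)$ is necessary for a nonzero value (note that $\act$ on inputs $\symNS\,11$ and $\symNSWE\,11$ still has $N=S$). The two apex vertices $a_{1},a_{2}$ each pick exactly one active incident edge, choosing some $\tau_{1},\tau_{2}\in A$; the definition of $\act$ requires both switch edges at every $b_{\tau}$ to have equal states, which forces $\tau_{1}=\tau_{2}=:\tau^{*}$. Under $\varphi_{\mathit{prop}}$ and $\varphi_{\mathit{one}}$, write $\tau^{\circ}:=(u^{\circ},v^{\circ})$ for the crossing; the condition $\act(\cdot\,11)\neq 0$ restricts $\tau^{*}$ to have data-state $\symNS$ or $\symNSWE$, i.e.\ $\tau^{*}$ must lie in the active column $v^{\circ}$. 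By the vertical balance property there are exactly $T$ such candidates $\tau^{*}\in A\cap(\star,v^{\circ})$. A careful sign tally then shows: if $\tau^{*}=\tau^{\circ}$ the product is $+1$ (the unique $\symNSWE$-vertex carries $\act(\symNSWE\,11)=1$), while if $\tau^{*}\neq\tau^{\circ}$ the product is $-1$ (since $b_{\tau^{\circ}}$ now sits in state $\symNSWE$ with switches off, contributing $\pass(\symNSWE)=-1$, and every other vertex contributes $+1$). The choice $\tau^{*}=\tau^{\circ}$ is available precisely when $\tau^{\circ}\in A$, i.e.\ when $(x_{W},x_{N})\in A$, yielding the totals $-T+2$ and $-T$ as claimed.

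The linear combination \eqref{eq: apex combined signature} then follows from a three-line case analysis: if $\neg\varphi_{\mathit{prop}}(x)$ both signatures vanish; if $\varphi_{\mathit{prop}}(x)$ holds and $(x_{W},x_{N})\notin A$ the right-hand side is $\tfrac{T}{2}\cdot 1+\tfrac{1}{2}\cdot(-T)=0$; if $(x_{W},x_{N})\in A$ it is $\tfrac{T}{2}\cdot 1+\tfrac{1}{2}\cdot(-T+2)=1$, matching $g_{\kappa}(x)$ in each case. The main obstacle is the bookkeeping in the middle step: correctly restricting $\tau^{*}$ to the active column, distinguishing the sub-case $\tau^{*}=\tau^{\circ}$ from $\tau^{*}\neq\tau^{\circ}$, and invoking the balance $|A\cap(\star,v^{\circ})|=T$ to obtain a count that is independent of $v^{\circ}$. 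Once the $\pm 1$ contributions and the count $T$ are pinned down, the rest is routine.
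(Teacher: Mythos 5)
Your proof is correct and follows essentially the same route as the paper's: propagation from the supports of $\pass$ and $\act$ forces $\varphi_{\mathit{prop}}$ and pins down the non-apex assignment uniquely; the $\sigHW{=1}$ constraint at the apices together with the $00/11$-structure of $\supp(\act)$ forces a single apex-matched vertex $\tau^*$ in the active column; the balance property gives exactly $T$ candidates for $\tau^*$; and the $\pm 1$ tally distinguishing $\tau^*=\tau^\circ$ from $\tau^*\neq\tau^\circ$ yields $-T$ and $-T+2$, with the final linear-combination check being a routine three-case verification. The only minor slip is the opening remark that $\Phi$ uses ``both $\pass$ and $\act$'' — $\Phi$ contains only $\pass$-vertices — but this does not affect the argument.
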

In Section~\ref{sub:Computing-the-signatures}, we prove Lemma~\ref{lem: cell signatures}
by inspecting the possible satisfying assignments to $\Phi$ and $\Phi'$.
Before doing this, let us first show how Lemma~\ref{lem: cell signatures}
implies Theorem~\ref{thm: apex hard}. We will require parts of this
argument again in Section~\ref{sec: permanent modulo}.
\begin{proof}
	[Proof of Theorem \ref{thm: apex hard}]Using Lemma~\ref{lem: Holant =00003D GridTilings},
	we know that \textbf{$\Holant(G)$} counts precisely the grid tilings of $\A$. By Theorem~\ref{thm: GridTiling hardness}, this problem is $\Wone[\#]$-hard and cannot be solved in time $f(k)\cdot n^{o(k / \log k)}$, even on instances $\mathcal{A}=(n,k,\C,\T)$ with $|\C| = \mathcal{O}(k)$.
	
	Using the linear combination (\ref{eq: apex combined signature}) and
	Lemma~\ref{lem: combined signature lemma} about the linear combinations
	of signatures, as well as Lemma~\ref{lem: holant insertions} about inserting matchgates into signature graphs, we obtain 
	\begin{equation}
		\Holant(G)=\frac{1}{2^{|\C|}}\sum_{\omega:\C\to[2]}T^{d(\omega)}\cdot\PerfMatch(H_{\omega}).\label{eq: Combined GridTiling}
	\end{equation}
	For $\omega:\C\to[2]$,
	the number $d(\omega)$ is the number of $1$-entries in $\omega$,
	and the graph $H_{\omega}$ is obtained as follows: 
	\begin{itemize}
		\item For $\kappa\in[k]^{2}\setminus\C$, insert the matchgate $\Phi$ at
		the cell vertex $c_{\kappa}$.
		\item For $\kappa\in\C$ with $\omega(\kappa)=1$, insert the matchgate
		$\Phi$ at $c_{\kappa}$ as well. 
		\item For $\kappa\in\C$ with $\omega(\kappa)=2$, insert the matchgate
		$\Phi'(\mathcal{T}(\kappa))$ at $c_{\kappa}$. 
	\end{itemize}
	Since $G$ is planar, and since $\Phi$ is planar and $\Phi'(\mathcal{T}(\kappa))$
	for $\kappa\in\C$ has at most $2$ apices, it follows that
	$\apex(H_{\omega})\leq2|\C|$ for all $\omega:\C\to[2]$, and this
	proves the required parameter bound. By $2$-coloring the matchgates
	$\Phi$ and $\Phi'$, it can furthermore be verified that each graph $H_{\omega}$
	is bipartite.
	
	Additionally, by construction of the matchgates $\Gamma_{\pass}$
	and $\Gamma_{\act}$, every graph $H_{\omega}$ features only edge-weights
	from the set $\{-1,\nicefrac{1}{2},1\}$. Non-unit edge-weights in
	$H_{\omega}$ appear only at edges $uv\in E(H_{\omega})$ not incident
	with apices. We can hence use standard weight simulation techniques
	to remove the edge-weights $-1$ and $\nicefrac{1}{2}$, as in \cite{Valiant1979a} or Chapter~1 of \cite{Curticapean15}, while maintaining
	the apex number.
	We consequently obtain $\Wone[\#]$-completeness of the permanent under the apex parameter and the claimed lower bound under $\ETH[\#]$.\end{proof}
\begin{rem}
	\label{rem: PerfMatch / apex extra conditions}The following might
	prove useful for later applications: By construction, the apices in
	the constructed graphs $H_{\omega}$ form an independent set, for any
	$\omega:[k]^{2}\to[2]$, and each non-apex vertex in $H_{\omega}$
	is incident with at most one apex. This last condition holds because
	the matchgate $\Gamma_{\act}$ has no vertex with two incident dangling
	edges.
\end{rem}

\subsection{\label{sub:Computing-the-signatures}Calculating the signatures of
	$\Phi$ and $\Phi'$}

In the remainder of this section, we provide the deferred proof of
Lemma \ref{lem: cell signatures}. To this end, we calculate the signatures
of $\Phi$ and $\Phi'$ by analyzing, for any given assignment $x\in\{0,1\}^{4n}$
to their dangling edges, the possible satisfying assignments $xy$
extending $x$.

\subsubsection{\label{sub: signature phi}Calculating the signature of $\Phi$}

Let $x\in\{0,1\}^{4n}$ be an assignment to the dangling edges of
$\Phi$ that satisfies $\varphi_{\mathit{one}}(x)$, and let $xy\in\{0,1\}^{E(\Phi)}$
be a satisfying assignment to $\Phi$ that extends $x$. We show that,
whenever $\varphi_{\mathit{prop}}(x)$ holds, then $y$ is unique
and $xy$ has value $1$, so $\Sig(\Phi,x)=\val_{\Phi}(xy)=1$. Furthermore,
we show that, if $x$ does not satisfy the predicate $\varphi_{\mathit{prop}}$,
then no such $y$ exists, and hence $\Sig(\Phi,x)=0.$
\begin{figure}
	\begin{centering}
		\includegraphics[width=0.55\textwidth]{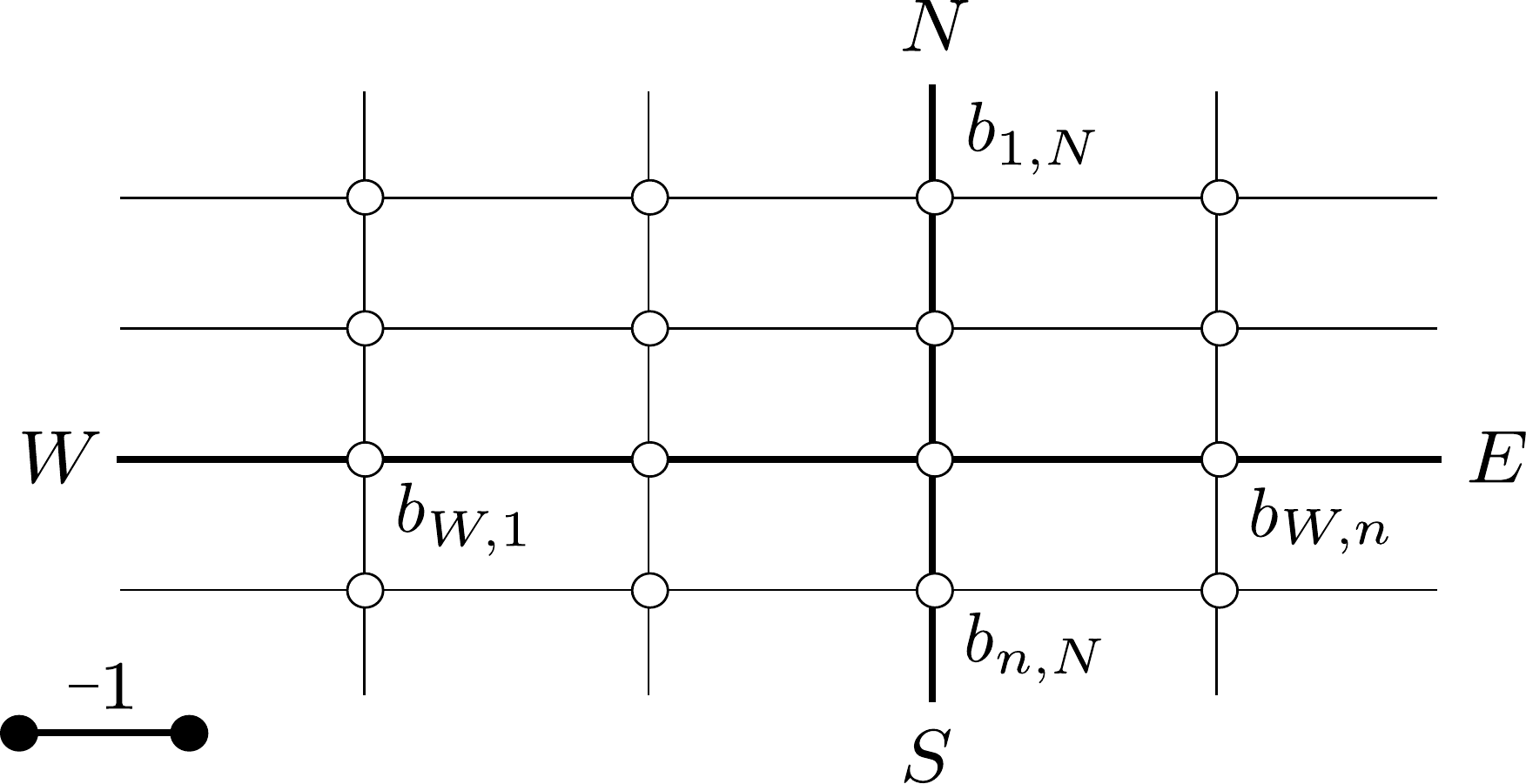}
		\par\end{centering}
	
	\caption{\label{fig: phi-assignments}The unique assignment $y$ to $E(\Phi)$
		that extends $x$. Active edges are drawn with thicker lines than
		non-active edges. Note that the edge of weight $-1$ with $\protect\sigHW{=1}$
		at its endpoints must be active in any satisfying assignment.}
\end{figure}

Recall from Remark~\ref{rem: apex conventions} that we implicitly
decompose the string $x$ into $x_{N},x_{E},x_{S},x_{W}$. Write $W\in[n]$
and $N\in[n]$ for the unique non-zero index in $x_{W}\in\{0,1\}^{n}$
and $x_{N}\in\{0,1\}^{n}$, respectively. These numbers are well-defined
because $x$ satisfies $\varphi_{one}(x)$ by assumption. Then all
western and eastern edges of vertices in row $(W,\star)$ are active
in $xy$, see Figure~\ref{fig: phi-assignments}: The western edge
of the vertex $b_{W,1}$ is active by definition, and since $xy$
satisfies $\Phi$ and $\pass$ at $b_{W,1}$, this vertex must be
in state $\symWE$ or $\symNSWE$, so its eastern edge is also active.
The same follows inductively for all vertices in the row $(W,\star)$.
By the same argument, rotated about 90 degrees, all northern and southern
edges of vertices in row $(\star,N)$ are active in $xy$.

By a similar argument, no other edges are active, and we conclude
that $y$ is uniquely determined by $x$. Furthermore, if $E$ and
$S$ denote the active indices in $x_{E}$ and $x_{S}$, then we observe
that $W=E$ and $N=S$, since otherwise $xy$ could not satisfy $b_{W,n}$
and $b_{n,N}$. Hence, $xy$ satisfies $\Phi$ only if $\varphi_{\mathit{prop}}(x)$
holds. We obtain
\[
\Sig(\Phi,x)=0\qquad\mbox{if }\neg\varphi_{\mathit{prop}}(x).
\]

If $\varphi_{\mathit{prop}}(x)$ holds, then $b_{W,N}$ is in state
$\symNSWE$ under $xy$, while the $n-1$ other vertices in row $(W,\star)$
are in state $\symWE$, the $n-1$ other vertices in column $(\star,N)$
are in state $\symNS$, and the remaining $n^{2}-2n+1$ vertices are
in state $\symEmpty$. Furthermore, we have the additional active
edge of weight $-1$. Hence, in conclusion, $\varphi_{\mathit{prop}}(x)$
implies 
\begin{eqnarray*}
	\Sig(\Phi,x) & = & \val(\Phi,xy)\\
	& = & (-1)\cdot\pass(\symNSWE)\cdot\pass(\symNS)^{n-1}\cdot\pass(\symWE)^{n-1}\cdot\pass(\symEmpty)^{n^{2}-2n+1}\\
	& = & 1.
\end{eqnarray*}
This proves (\ref{eq: First matchgate}).

\subsubsection{\label{sub: signature phi'}Calculating the signature of $\Phi'(A)$}

Let $\Phi'=\Phi'(A)$ for some fixed $A\subseteq[n]^{2}$, let $D\subseteq E(\Phi')$
denote the dangling edges of $\Phi'$ and let $F=I(a_{1})\cup I(a_{2})$
denote the set of edges incident with either of the apices $a_{1}$
or $a_{2}$ in $\Phi'$. Let \[x\in\{0,1\}^{4n}\] be an assignment
to $D$ that satisfies the predicate $\varphi_{\mathit{one}}(x)$,
and let $xyz\in\{0,1\}^{E(\Phi')}$ be a satisfying assignment to
the edges of $\Phi'$ that extends $x$, with 
\begin{eqnarray*}
	y & \in & \{0,1\}^{E(\Phi')\setminus(F\cup D)},\\
	z & \in & \{0,1\}^{F}.
\end{eqnarray*}
We consider the restriction of $xyz$ to $xy$, that is, to edges
not incident with any apex. By definition of $\pass$ and $\act$,
we have, for every vertex $b\in V(\Phi')\setminus\{a_{1},a_{2}\}$,
that 
\begin{equation}
	(xy)|_{I(b)}\in\{\symEmpty,\symNS,\symWE,\symNSWE\}.\label{eq: phi' vertices}
\end{equation}

Recall from Remark~\ref{rem: apex conventions} that we decompose
$x$ into $x_{N},x_{E},x_{S},x_{W}$, and write $W\in[n]$ and $N\in[n]$
for the unique non-zero index in $x_{W}\in\{0,1\}^{n}$ and $x_{N}\in\{0,1\}^{n}$,
respectively. Since $(xy)|_{I(b)}\in\supp(\pass)$ holds by (\ref{eq: phi' vertices})
and the definition of $\pass$, the same argument as in the previous
subsection for $\Phi$ shows that the western and eastern edges of
all vertices in row $(W,\star)$ are active under $xy$, as well as
the northern and southern edges of all vertices in the column $(\star,N)$.
Likewise, as seen in the previous subsection, it shows that no other
edges in $E(\Phi')\setminus F$ are active, that $y$ is unique if
$\varphi_{\mathit{prop}}(x)$ holds, and that $y$ does not exist
otherwise. This last statement implies that
\[
\Sig(\Phi',x)=0\qquad\mbox{if }\neg\varphi_{\mathit{prop}}(x).
\]

In the following, let $x\in\{0,1\}^{D}$ be an assignment to the dangling
edges of $\Phi'$ that satisfies $\varphi_{\mathit{prop}}(x)$, and
let $xy\in\{0,1\}^{E(\Phi')\setminus F}$ be its unique extension
to edges not incident with apices, as seen for $\Phi$. We consider
the possible assignments $z\in\{0,1\}^{F}$ to the apex edges such
that $xyz$ satisfies $\Phi'$. Here, while the choice of $y$ was
unique, the choice of $z$ is not unique.

By virtue of $\sigHW{=1}$ at the apex vertices $a_{1}$ and $a_{2}$,
there are unique indices $\tau,\tau'\in A$ such that the edges $a_{1}b_{\tau}$
and $a_{2}b_{\tau'}$ are active in $xy$. By definition of $\act$,
we actually have $\tau=\tau'$, since all elements in $\supp(\act)$
end on $00$ or $11$. We write $\tau^{*}:=\tau=\tau'$ for the unique
``apex-matched'' index, and $b^{*}:=b_{\tau^{*}}$ for the unique
``apex-matched'' vertex. By definition of $\act$, we have 
\[
(xyz)|_{I(b^{*})}\in\{\symNS11,\symNSWE11\}.
\]

It follows that the second component of $\tau^{*}$ must be equal
to $N$, since only vertices in $(\star,N)$ have state $\symNS$
or $\symNSWE$ under $xy$. There are $T$ vertices with signature
$\act$ in row $(\star,N)$, by the balance property of our instance
$\mathcal{T}$ to $\pGrid$, and we can choose any of these vertices
to be apex-matched. To determine the set of such possible choices, we distinguish two cases, depending on whether $(W,N) \in A$ or not.
\begin{description}
	\item [{$(W,N)\notin A:$}] The apex-matched vertex must be in state $\symNS11$
	under $xyz$. It cannot be in state $\symNSWE11$, since only $b_{W,N}$
	can have state $\symNSWE$ among its first four edges, but $b_{W,N}$
	has $\pass$ assigned, since $(W,N)\notin A$. This gives $T$ assignments
	$z$ such that $xyz$ satisfies $\Phi'$. Each of the $T$ assignments
	$xyz$ satisfies $\val_{\Phi'}(xyz)=-1$, because there is (i) one
	vertex in state $\symNSWE00$, which contributes a factor of $-1$
	to $\val_{\Phi'}(xyz)$, and (ii) some number of vertices in states
	$\symEmpty00$, $\symNS00$ and $\symWE00$, which however all contribute
	a unit factor to $\val_{\Phi'}(xyz)$. This implies that $\Sig(\Phi',x)=-T$
	if both $(W,N)\notin A$ and $\varphi_{\mathit{prop}}(x)$ hold.
	\item [{$(W,N)\in A:$}] The apex-matched vertex may be in state $\symNS11$
	or $\symNSWE11$. We make a distinction into these two individual
	sub-cases:
	
	\begin{description}
		\item [{$\symNS11$:}] We proceed as in the case of $(W,N)\notin A$, but
		we have only $T-1$ choices left for the apex-matched vertex, since
		$b_{W,N}$ must have state $\symNSWE$ among its first four edges
		and can thus not be in state $\symNS11$. This gives $T-1$ assignments
		$z$ with $\val_{\Phi'}(xyz)=\pass(\symNSWE)=-1$ for each $z$. (In
		the expression of $\val_{\Phi'}(xyz)$, we ignored the vertices in
		states $\symEmpty00$, $\symNS00$ and $\symWE00$ that contribute
		a unit factor.)
		\item [{$\symNSWE11:$}] Since only $b_{W,N}$ can have state $\symNSWE$
		among its first four edges, the apex-matched vertex must be $b_{W,N}$.
		This gives one assignment $z$, and $\val_{\Phi'}(xyz)=\act(\symNSWE)=1$.
		Again, we ignored unit factors.
	\end{description}
	
	In total, if both $(W,N)\in A$ and $\varphi_{\mathit{prop}}(x)$
	hold, then we obtain 
	\[
	\Sig(\Phi',xyz)=(T-1)\cdot(-1)+1=-T+2
	\]

\end{description}
This proves (\ref{eq: Second matchgate}), and thus Lemma~\ref{lem: cell signatures}.
The proof of Theorem~\ref{thm: apex hard} is completed.

\section{\label{sec: permanent modulo}The permanent modulo $2^{k}$}

We prove Theorem~\ref{thm: perm mod 2^k}, which
asserts $\Wone[\parity]$-hardness of evaluating the permanent mod
$2^{k}$. We reduce from the problem $\pGrid[\parity]$,
the parity version of $\pGrid$ from Definition~\ref{def: GridTiling}.
From a high level, the proof resembles that of Theorem~\ref{thm: apex hard},
but the setting of modular evaluation requires
us to apply linearly
combined signatures in a more intricate way.

\subsection{\label{sub:The-main-idea}The main idea}

Our reduction is based upon the following observation: Let $\A=(n,k,\C,\T)$
be an instance for $\pGrid[\parity]$. For $\omega:\C\to[2]$, recall
the graphs $H_{\omega}$ and the numbers $d(\omega)$ from the last
section. We can rewrite (\ref{eq: Combined GridTiling}) as
\begin{equation}
	2^{|\C|}\cdot\pGrid[\#](\T)=\sum_{\omega:\C\to[2]}T^{d(\omega)}\cdot\perm(H_{\omega}).\label{eq: modulo GridTiling =00003D PerfMatch}
\end{equation}

Theorem~\ref{thm: GridTiling hardness} asserts that computing $\pGrid[\parity](\T)$
is $\Wone[\parity]$-hard. Let $M:=2^{|\C|}$ and assume we could
evaluate $\perm(H_{\omega})$ modulo $2M$ for all $\omega$. Using
arithmetic in $\Z{2M}$, we could then evaluate the entire right-hand-side
of (\ref{eq: modulo GridTiling =00003D PerfMatch}), and this allows
us to compute 
\[
M\cdot\pGrid[\#](\T)\equiv_{2M}\begin{cases}
M & \mbox{if }\pGrid[\#](\T)\mbox{ is odd},\\
0 & \mbox{if }\pGrid[\#](\T)\mbox{ is even.}
\end{cases}
\]
Hence, it seems that we could solve $\pGrid[\parity](\T)$ with an oracle for the
permanent modulo $2M=2^{|\C|+1}$, and we might be tempted to
believe that we just proved Theorem~\ref{thm: perm mod 2^k}.

However, the above argument suffers from a fatal gap: The graphs $H_{\omega}$
from the previous section feature edges of weight $\frac{1}{2}$, a number
that does not exist in the rings $\Z{2^{k}}$ for $k\in\N$. In other words, the
proof fails for the surprisingly philosophical reason that the
instances $H_{\omega}$ constructed in the previous section do not even \emph{exist} modulo $2^{k}$.
More precisely, it is the matchgate $\Gamma_{\act}$ used to realize
the signature $\act$ that features this offending weight, and it
is incurred by the part that we called the \emph{even filter}.
To obtain graphs $H_{\omega}$
that avoid edge-weights with even denominators, we therefore construct cell gates using the signature $\pre$ rather than
its more benign version $\act$. This adds several complications to our
arguments, which we can however handle with a suitable linear combination.

\subsection{\label{sub:Revisiting-the-cell}Revisiting the cell gate}

Let $A\subseteq[n]^{2}$ be fixed in the following, and recall the
gates $\Phi$ and $\Phi'$ from Definition~\ref{def: cell gates}.
Note that $\Phi$ features only occurrences of $\pass$, which is
realized by the matchgate $\Gamma_{\pass}$ on edge-weights $-1$
and $1$. We can therefore also realize this gate modulo $2^{k}$.
This does not apply to the gate $\Phi'(A)$, as the matchgate $\Gamma_{\act}$
realizing $\act$ features the weight $\frac{1}{2}$. We modify $\Phi'(A)$
to a new gate $\Gamma(A)$ by replacing all occurrences of $\act$
with $\pre$.
\begin{defn}
	\label{def: Gamma}For $A\subseteq[n]^{2}$, let the gate $\Gamma(A)$
	on $4n$ dangling edges be defined exactly as the gate $\Phi'(A)$
	from Definition~\ref{def: cell gates}, but replace every occurrence
	of $\act$ by $\pre$.
	
	For all $u,v\in[n]$, let $\alpha_{u,v}$ denote the number of occurrences
	of $\pre$ among vertices $b_{\tau}$ with $\tau\in\{(1,v),\ldots,(u-1,v)\}$.
	Likewise, let $\beta_{u,v}$ denote the number of occurrences of $\pre$
	among vertices $b_{\tau}$ with $\tau\in\{(u+1,v),\ldots,(n,v)\}$.
\end{defn}
Figuratively speaking, $\alpha_{u,v}$ is the number of occurrences
of $\pre$ in the column above $(u,v)$, and $\beta_{u,v}$ is the
number of occurrences below it.
In Section~\ref{sub:Realizing-cell-signatures}, we used the vertical
balance property to ensure that $\alpha_{u,v}+\beta_{u,v}$ is equal
to $T-1$ when $(u,v)\in A$, and equal to $T$ when $(u,v)\notin A$.
In this section, this vertical balance will not be required, but \emph{horizontal}
balance will prove useful instead, for different reasons. For the
remainder of our proofs, we define the following auxiliary polynomials,
for all $u,v,w\in[n]$:
\begin{eqnarray}
	q_{u} & := & \sum_{z\in[n]}\alpha_{u,z}\cdot\beta_{u,z}-{\alpha_{u,z} \choose 2}-{\beta_{u,z} \choose 2},\label{eq: poly q}\\
	p_{u,v,w} & := & (\alpha_{u,v}-\beta_{u,v})\cdot(\beta_{u,w}-\alpha_{u,w}),\label{eq: poly p}\\
	r_{u,v} & := & \sum_{\substack{z\in[n]\setminus\{v\}\\
			(u,z)\in A
		}
	}\beta_{u,z},\label{eq: poly r}\\
	s_{u,v} & := & \sum_{\substack{z\in[n]\setminus\{v\}\\
			(u,z)\in A
		}
	}\alpha_{u,z}.\label{eq: poly s}
\end{eqnarray}
Using these polynomials, we can express the signature of $\Gamma$.
\begin{lem}
	\label{lem: sig Gamma}Let $A\subseteq[n]^{2}$, let $\Gamma=\Gamma(A)$
	and let $x\in\{0,1\}^{4n}$ satisfy $\phione$. Recall the conventions
	from Remark~\ref{rem: apex conventions}, including that we implicitly
	decompose the string $x$ into $x_{N},x_{E},x_{S},x_{W}$.
	\begin{itemize}
		\item If $x_{W}\neq x_{E}$ or $\hw(x_{S})\neq1$, then $\Sig(\Gamma,x)=0$.
		\item If $\phiprop(x)$ is true (i.e., we have $x_{W}=x_{E}$ and additionally $x_{N}=x_{S}$), write
		$u:=x_{W}$ and $v:=x_{N}$, with $u,v\in [n]$. Note that these numbers are well-defined. We call such assignments $x$ \emph{wanted},
		and we have
		\[
		\Sig(\Gamma,x)=\begin{cases}
		q_{u}-r_{u,v}-s_{u,v}-\alpha_{u,v}-\beta_{u,v} & \mbox{if }(u,v)\notin A\\
		q_{u}-r_{u,v}-s_{u,v}+1 & \mbox{if }(u,v)\in A
		\end{cases}
		\]
		
		\item If $\phiprop(x)$ is false (i.e, we have $x_{W}=x_{E}$, but $x_{N}\neq x_{S})$, then
		write $u:=x_{W}$, $v:=x_{N}$, and $w:=x_{S}$. We call such assignments
		$x$ \emph{unwanted,} and we have
		\[
		\Sig(\Gamma,x)=\begin{cases}
		p_{u,v,w} & \mbox{if }(u,v)\notin A,\:(u,w)\notin A\\
		p_{u,v,w}+\alpha_{u,v}-\beta_{u,v} & \mbox{if }(u,v)\notin A,\:(u,w)\in A\\
		p_{u,v,w}+\beta_{u,w}-\alpha_{u,w} & \mbox{if }(u,v)\in A,\:(u,w)\notin A\\
		p_{u,v,w}+\beta_{u,w}-\alpha_{u,w}+\alpha_{u,v}-\beta_{u,v}+1 & \mbox{if }(u,v)\in A,\:(u,w)\in A
		\end{cases}
		\]
		
	\end{itemize}
\end{lem}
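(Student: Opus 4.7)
The plan is to adapt the case analysis of Section~\ref{sub:Computing-the-signatures} that computed $\Sig(\Phi', x)$ to the more permissive signature $\pre$. Recall that on its two apex-incident ``switch'' coordinates, $\pre$ supports not only the pairs $00$ and $11$ but also $01$ and $10$; hence the two edges forced active at $a_1, a_2$ by $\sigHW{=1}$ may terminate either at a single vertex $b_{\tau^*}$ with $\tau^* \in A$ (Case~I, both switches in state $11$), or at two distinct vertices $b_{\tau_1}, b_{\tau_2}$ with $\tau_1, \tau_2 \in A$ (Case~II, one switch in $10$ and the other in $01$).

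I first observe that every $s \in \supp(\pre)$ satisfies $s_W = s_E$ on its first four coordinates, so horizontal conservation holds at every vertex of $\Gamma$; together with $\phione$, this forces $x_W = x_E$ in any satisfying assignment and handles the first bullet of the lemma when $x_W \neq x_E$. A column-by-column trace then shows that the only vertices breaking $N$-$S$ equality are the two apex-matched ones, acting as vertical sources (states $\symS 10, \symSWE 10$) or sinks (states $\symN 01, \symNWE 01$); a short argument then shows that in every configuration the resulting $x_S$ has Hamming weight exactly one, disposing of the case $\hw(x_S) \neq 1$.

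The main computation refines Case~II according to whether $\tau_1, \tau_2$ lie in a common column $j$ or in distinct columns. A common column forces $x$ to be wanted and subdivides into subcase~A2 ($j = v$, $i_1 > i_2$, with flow entering at the top from the sink $\tau_2$ and exiting at the bottom from the source $\tau_1$) and subcase~A1 ($j \neq v$, $i_1 < i_2$, with flow confined between the source $\tau_1$ and the sink $\tau_2$). Distinct columns force $x$ to be unwanted with $w$ equal to the column of $\tau_1$ and $v$ equal to the column of $\tau_2$ (subcase~B). In every sub-case the non-apex edges are uniquely determined by $x$ together with $(\tau_1, \tau_2)$, so each satisfying extension has value equal to a sign, arising solely from the number of $\pass$-like vertices (actual $\pass$-vertices or $\pre$-vertices with switches $00$) landing in the unique ``bad'' state $\symNSWE$.

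The bookkeeping of these signs is the main technical obstacle. One shows that the offending $\symNSWE$-vertices are confined to at most two positions in row $u$ (the ``peak'' $b_{u,v}$ and, in subcase~A1, the vertex $b_{u,j}$ in the shared column $j$), and that the resulting sign depends only on whether $u$ lies above, below, between, or coincides with the row indices $i_1, i_2$ of $\tau_1, \tau_2$. Summing the signed contributions over admissible pairs $(\tau_1, \tau_2)$ therefore reduces to counting pairs of $A$-elements in each column stratified by position relative to $u$, which is precisely what the polynomials $q_u, p_{u,v,w}, r_{u,v}, s_{u,v}$ encode. Boundary cases where $u$ coincides with the row of $\tau_1$ or $\tau_2$ contribute the additional indicator terms $[(u,v) \in A]$ or $[(u,w) \in A]$; combining Case~I with the three sub-cases of Case~II and performing a routine algebraic simplification then yields exactly the four formulas claimed by the lemma.
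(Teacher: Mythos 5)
Your proposal is correct and follows essentially the same strategy as the paper: enumerate satisfying extensions by the positions of the one or two apex-matched vertices, note that each choice fixes the remaining edges and contributes a sign determined only by which $\pass$-like vertices land in state $\symNSWE$, and sum these signs over admissible pairs $(\tau_1,\tau_2)$ stratified by position relative to row $u$ to recover the polynomials $q_u$, $p_{u,v,w}$, $r_{u,v}$, $s_{u,v}$. The one presentational difference is that you split Case~II by the column configuration of $\tau_1,\tau_2$ first (same column with $j=v$; same column with $j\neq v$; distinct columns), which aligns at once with the wanted/unwanted dichotomy, whereas the paper's Table~\ref{tab: calculate Gamma} partitions first by whether $\tau_1,\tau_2$ lie on row $u$ and only refines by column in Table~\ref{tab: bottom left}; both orderings reduce to the same bookkeeping.
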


The full proof of this lemma requires a somewhat tedious calculation,
which is deferred to Section~\ref{sub: Calculating Gamma }. Note that the entries of $\Sig(\Gamma)$ are polynomials in the indeterminates $\alpha_{u,v}$ and $\beta_{u,v}$ for $u,v\in[n]$

Taking Lemma~\ref{lem: sig Gamma} for granted at the moment, we
note that the gate $\Gamma$ essentially discriminates between six
different assignment types, depending on whether $x$ is wanted (giving
$2$ types) or unwanted (giving $4$ types, depending on whether $(x_{W},x_{N})$
and $(x_{W},x_{S})$ are each contained in $A$). However, the actual
value of $\Sig(\Gamma,x)$ is \emph{not constant} for each of the
six types, as it depends on $u,v,w$ and the concrete values for $\alpha_{u',v'}$
and $\beta_{u',v'}$ for all $u',v'\in[n]$. Compare this to the gate
$\Phi'$ from Section~\ref{sub: signature phi'}, which attains one
of the three fixed values $\{0,-T,-T+2\}$ due to vertical balance. It turns out that the simple balance argument used in the last section does not work in this setting; our technical efforts in the remainder of the proof therefore aim at the following two goals:

\begin{description}
	\item [{Goal~1:}] \quad Ensure that the four unwanted cases (as defined above) cancel out.
	\item [{Goal~2:}] \quad Ensure that the two wanted cases (as defined above) do not depend upon the actual
	value of $(x_{W},x_{N})$, but only on the information whether $(x_{W},x_{N})\in A$
	or $(x_{W},x_{N})\notin A$.
\end{description}

In the following, we show how to attain these goals by considering a particular linear combination of matchgate signatures that could be considered as the ``derivative'' of a matchgate.

\subsection{Linear combinations via discrete derivatives}

Recall the construction of $\Gamma$ from Definition~\ref{def: Gamma}. In the following, we construct
a gate $\Gamma_{\uparrow}$ from $\Gamma$ by adding several ``dummy'' vertices. Then we consider the difference
\[\Sig(\Gamma_{\uparrow})-\Sig(\Gamma).\] The gate $\Gamma_{\uparrow}$
is obtained from $\Gamma$ by adding dummy rows of vertices with signature
$\pre$, and this allows us to obtain $\Sig(\Gamma_{\uparrow})$ by a
simple substitution on the indeterminates of $\Sig(\Gamma)$.
\begin{defn}
	\label{def: Gamma up down}We define a \emph{dummy gate} as in Figure~\ref{fig: dummy-gamma}:
	Starting from a vertex with signature $\pre$, add several vertices of signature
	$\sigHW{=1}$ to its western and eastern dangling edges to force these
	edges to be inactive, as shown in the left part of the figure. We then define a \emph{dummy row} by arranging
	$n$ dummy gates horizontally as shown in the right part of the figure.
	\begin{figure}
		\begin{centering}
			\includegraphics[width=0.75\textwidth]{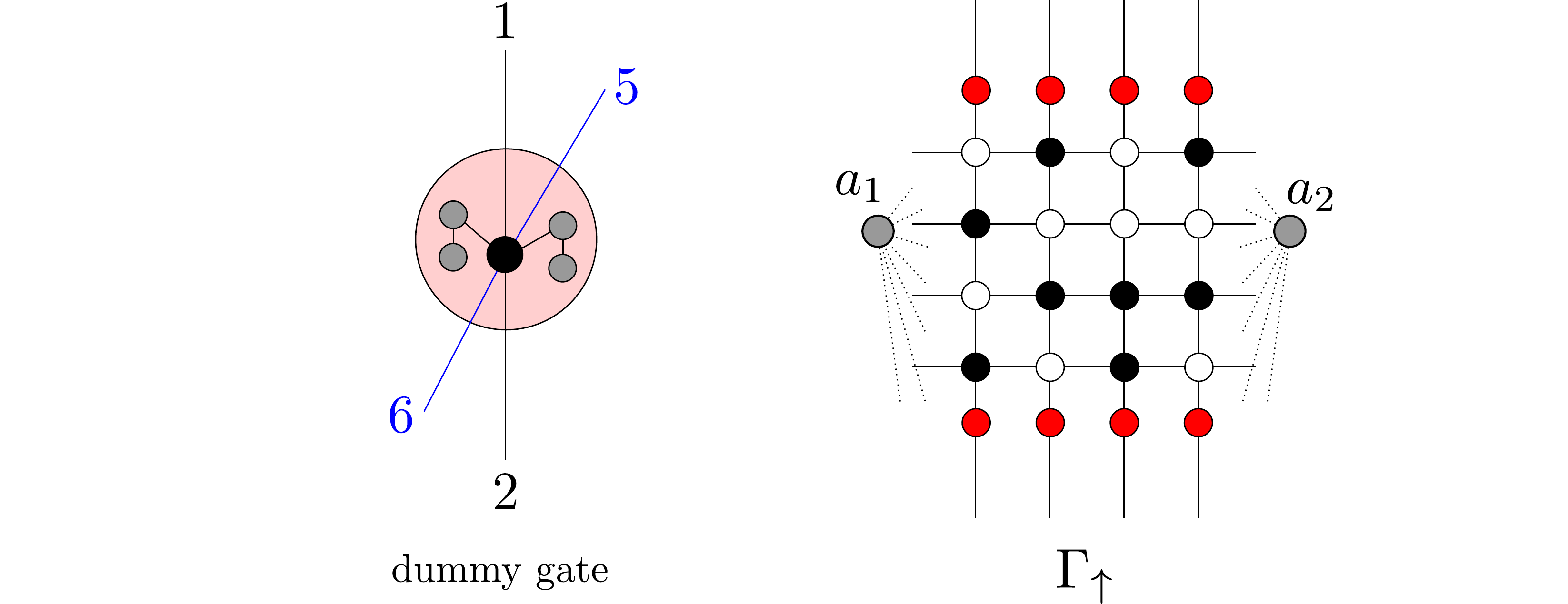}
			\par\end{centering}
		
		\protect\caption{\label{fig: dummy-gamma}A dummy gate is shown on the left. On the
			right, we see $\Gamma_{\uparrow}$, which is obtained from $\Gamma$
			by adding rows of dummy gates, shown red. Each gray vertex is assigned
			$\protect\sigHW{=1}$, and the apices connect to all black vertices
			(assigned $\protect\pre$) and all red vertices (whose signature is
			realized by the dummy gate). White vertices are assigned $\protect\pass$,
			and they are not adjacent to apices.}
	\end{figure}
	
	Starting from $\Gamma$, define a gate $\Gamma_{\uparrow}$ by
	adding a dummy row above the row $(1,\star)$, and a dummy row below
	the row $(n,\star)$, as shown in Figure~\ref{fig: dummy-gamma}.
	We connect apex $a_{1}$ to the dangling edge $5$ of each dummy gate,
	and $a_{2}$ to the dangling edge $6$.
\end{defn}
Furthermore, we define algebraic manipulations on multivariate polynomials
that capture the effect of introducing dummy rows into $\Gamma$ as described above.
\begin{defn}
	\label{def: algebra on p}Let $p$ be any multivariate polynomial
	on the indeterminates $\alpha_{u,v}$ and $\beta_{u,v}$ for $u,v\in[n]$.
	Write $x\gets y$ for the operation of substituting $x$ with $y$
	in $p$. Then we define $p_{\uparrow}$ to be the polynomial obtained from $p$ after performing the substitutions $\alpha_{u,v}\gets\alpha_{u,v}+1$ and $\beta_{u,v}\gets\beta_{u,v}+1$ for all $u,v\in[n]$.
	
	We also define the following discrete derivative operator $D$ on such polynomials $p$:
	\[
	D(p):=p_{\uparrow}-p.
	\]
\end{defn}
The following is then easily observed:
\begin{lem}
	We have \[\Sig(\Gamma_{\uparrow})=(\Sig(\Gamma))_{\uparrow},\] and
	in particular, we have
	\[D(\Sig(\Gamma))=\Sig(\Gamma_{\uparrow})-\Sig(\Gamma).\]
\end{lem}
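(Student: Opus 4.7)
My plan is to re-run the signature analysis underlying Lemma~\ref{lem: sig Gamma} on the enlarged gate $\Gamma_\uparrow$ and observe that the result differs from $\Sig(\Gamma)$ precisely by the substitution $\alpha_{u,v}\gets\alpha_{u,v}+1$, $\beta_{u,v}\gets\beta_{u,v}+1$ of Definition~\ref{def: algebra on p}. The second identity $D(\Sig(\Gamma))=\Sig(\Gamma_\uparrow)-\Sig(\Gamma)$ will then be an immediate consequence of the first, together with $D(p):=p_\uparrow-p$.

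First, I will analyze the dummy gate locally. Because the $\sigHW{=1}$ vertices attached to its western and eastern dangling edges force those edges to $0$, the central $\pre$ vertex of a dummy gate is restricted to those elements of $\supp(\pre)$ whose main part has $W=E=0$, namely $\symEmpty 00$, $\symNS 00$, $\symNS 11$, $\symN 01$, and $\symS 10$. Crucially, on this restricted domain, the dummy gate reproduces \emph{exactly} the local behavior that a $\pre$ vertex placed inside $\Gamma$ would exhibit if its $W$ and $E$ edges happened to be inactive in the satisfying assignment at hand. Thus, from the viewpoint of the signature analysis, each dummy row acts as an additional row of $\pre$ vertices added to the grid, and the extra $\sigHW{=1}$ vertices contribute only the harmless factor $1$.

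Next, I will reapply the analysis underlying Lemma~\ref{lem: sig Gamma} to $\Gamma_\uparrow$ on the extended $(n+2)\times n$ grid. Determining the unique non-apex assignment $y$ from $x$ by vertical propagation along the active columns goes through as before, since a dummy $\pre$ vertex with $W=E=0$ is consistent with $\pass$-style flow on its $N,S$ edges. The enumeration of apex-matching configurations (wanted and unwanted cases alike) is structurally identical; the only change is that the pool of positions that may carry active apex edges is enlarged by exactly one entry per column above and below the original grid. Writing $\alpha^{\uparrow}_{u,v},\beta^{\uparrow}_{u,v}$ for the counts of Definition~\ref{def: Gamma} computed inside $\Gamma_\uparrow$, I will therefore conclude that $\alpha^{\uparrow}_{u,v}=\alpha_{u,v}+1$ and $\beta^{\uparrow}_{u,v}=\beta_{u,v}+1$ for every $u,v\in[n]$, while the set $A$ itself is unchanged. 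Plugging these into the polynomial expressions of Lemma~\ref{lem: sig Gamma} then yields exactly $(\Sig(\Gamma))_\uparrow$ by Definition~\ref{def: algebra on p}.

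The main obstacle will be to verify carefully, case by case, that no satisfying assignment of $\Gamma_\uparrow$ is missed or double-counted by this translation. In particular, in the unwanted case, one must check that a dummy vertex in state $\symN 01$ (resp.\ $\symS 10$) contributes to the apex-matching count in column $v$ (resp.\ $w$) with the same coefficient as the analogous configurations on the interior rows of $\Gamma$, so that the contribution of dummy vertices is absorbed cleanly into the incremented counts rather than producing spurious cross-terms. Once this bookkeeping is established, the substitution matches Definition~\ref{def: algebra on p} on the nose, and the lemma follows.
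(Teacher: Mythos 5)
The paper states this lemma without proof (``The following is then easily observed''), so your proposal is in effect supplying the missing argument, and it is correct. Your key observations — that the two $\sigHW{=1}$ vertices attached to a dummy gate restrict its central $\pre$-vertex to the sub-support $\{\symEmpty 00,\symNS 00,\symNS 11,\symN 01,\symS 10\}$, which is exactly the behavior of an interior $\pre$-vertex whose $W,E$ edges are inactive, and that adding one dummy row above and one below therefore increments every count $\alpha_{u,v}$ and $\beta_{u,v}$ by exactly one without changing $A$, the horizontal path structure, or the index sets of the sums in (\ref{eq: poly q})--(\ref{eq: poly s}) — are precisely what justifies $\Sig(\Gamma_\uparrow)=(\Sig(\Gamma))_\uparrow$, and the second identity is then immediate from the definition of $D$. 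This is the intended reasoning; you have simply made it explicit.

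One small bookkeeping note: Table~\ref{tab: calculate Gamma} labels the unwanted-case apex states as $\symN 10,\symS 01,\symNWE 10,\symSWE 01$, whereas the support of $\pre$ as defined in Figure~\ref{fig: apex-matchgates-analysis} contains $\symN 01,\symNWE 01,\symS 10,\symSWE 10$ — the last two bits are swapped. You used the latter (consistent with the $\pre$ definition), which is the internally consistent choice; the table's labeling appears to be a transposition of the two apex edges and does not affect the substance of your argument, since a dummy vertex can attain exactly those single-apex states, whichever convention is fixed.
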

Note that the operator $D$ indeed resembles a derivative: We have linearity
by $D(p+q)=D(p)+D(q)$, and applying $D$ to a polynomial $p$ of
degree $d$ gives one of degree $d-1$. We will use these properties of $D$
to effect two useful modifications on the polynomials in (\ref{eq: poly q})-(\ref{eq: poly r}),
and thus ultimately on $\Sig(\Gamma)$. These correspond to the two
goals described at the end of Section~\ref{sub:Revisiting-the-cell}.
\begin{enumerate}
	\item Concerning the first goal, our choice of $D$ ensures that ``unwanted''
	polynomials vanish under $D$. For instance, for all $u,v,w\in[n]$,
	the polynomial $p_{u,v,w}$ from (\ref{eq: poly p}) maps to
	\begin{eqnarray}
		D(p_{u,v,w}) & = & ((\alpha_{u,v}+1)-(\beta_{u,v}+1))\cdot((\beta_{u,w}+1)-(\alpha_{u,w}+1))\nonumber \\
		&  & -(\alpha_{u,v}-\beta_{u,v})\cdot(\beta_{u,w}-\alpha_{u,w})\nonumber \\
		& = & 0.\label{eq: Dp is zero}
	\end{eqnarray}
	By our calculation of $\Sig(\Gamma)$, this implies that $D(\Sig(\Gamma))$
	vanishes on assignments $x$ with $x_{N}\neq x_{S}$ and $(x_{W},x_{N})\notin A$
	and $(x_{W},x_{S})\notin A$. The other unwanted cases will be handled
	by similar arguments.
	\item Under the operator $D$, linear terms, such as $\alpha_{u,v}$ or
	$\beta_{u,v}$ for $u,v\in[n]$, are mapped to 
	\begin{eqnarray}
		D(\alpha_{u,v}) & = & (\alpha_{u,v}+1)-\alpha_{u,v}\ =\ 1,\label{eq: Da}\\
		D(\beta_{u,v}) & = & (\beta_{u,v}+1)-\beta_{u,v}\ =\ 1.\label{eq: Db}
	\end{eqnarray}
	This helps us to attain the second goal, since the original terms
	depend on the concrete values of $\alpha_{u,v}$ or $\beta_{u,v}$
	in $A$, whereas the resulting constants do not. It will also turn
	out that only linear terms need to be considered.
\end{enumerate}
In the following, we show that $D(\Sig(\Gamma))$ essentially realizes
the function $g_{\kappa}$, up to some additive term on assignments
$x$ with $\varphi_{\mathit{prop}}$. This allows us to write $g_{\kappa}$ as a linear combination of the
matchgate signatures $\Sig(\Gamma_{\uparrow})$ and $\Sig(\Gamma)$.
As a technical requirement, we use Lemma~\ref{lem: GridTiling balance}
to ensure that the set $A$ in the definition of $\Gamma=\Gamma(A)$
is horizontally balanced.

\begin{lem}
	\label{lem: sig D}
	Assume
	the existence of a number $T\in\N$ such that $A$ features exactly
	$T$ elements of type $(u,\star)$, for all $u\in[n]$. Let $\Gamma=\Gamma(A)$
	and write $D:=D(\Sig(\Gamma))=\Sig(\Gamma_{\uparrow})-\Sig(\Gamma)$.
	We then have
	\[
	D=\begin{cases}
	0 & \quad\mbox{if }\neg\varphi_{\mathit{prop}}(x)\\
	\begin{cases}
	n-2T-2 & (x_{W},x_{N})\notin A\\
	n-2T+2 & (x_{W},x_{N})\in A
	\end{cases} & \quad\mbox{if }\varphi_{\mathit{prop}}(x)
	\end{cases}
	\]
\end{lem}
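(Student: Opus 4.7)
The plan is to exploit the fact that $D$ is a $\mathbb{C}$-linear operator on the polynomial ring in the indeterminates $\alpha_{u,v}, \beta_{u,v}$, and therefore it suffices to compute $D$ on each atomic building block appearing in Lemma~\ref{lem: sig Gamma} and then reassemble the six cases. First I would record the elementary identities: directly from Definition~\ref{def: algebra on p}, we have $D(c) = 0$ for any constant $c$, $D(\alpha_{u,v}) = D(\beta_{u,v}) = 1$, and $D(\alpha_{u,z}\beta_{u,z}) = \alpha_{u,z} + \beta_{u,z} + 1$, while $D({\alpha_{u,z} \choose 2}) = \alpha_{u,z}$ and $D({\beta_{u,z} \choose 2}) = \beta_{u,z}$. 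Summing over $z \in [n]$, each summand of $q_u$ contributes exactly $1$ under $D$, so $D(q_u) = n$. Likewise, (\ref{eq: Dp is zero}) already provides $D(p_{u,v,w}) = 0$. Finally, since $r_{u,v}$ and $s_{u,v}$ are sums of individual $\beta_{u,z}$'s and $\alpha_{u,z}$'s, respectively, each indexed by $z \in [n]\setminus\{v\}$ with $(u,z) \in A$, we obtain $D(r_{u,v}) = D(s_{u,v}) = |\{z \in [n]\setminus\{v\} : (u,z) \in A\}|$. By the horizontal balance assumption, this quantity equals $T$ if $(u,v) \notin A$, and $T-1$ if $(u,v) \in A$.

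Next, I would handle the \emph{unwanted} cases of Lemma~\ref{lem: sig Gamma}, i.e., those with $x_W = x_E$ but $x_N \neq x_S$. In each of the four sub-cases, $\Sig(\Gamma,x)$ is $p_{u,v,w}$ plus a linear combination of terms of the shape $\alpha_{u,\cdot} - \beta_{u,\cdot}$ and possibly a constant $+1$. All three pieces vanish under $D$: the polynomial $p_{u,v,w}$ by (\ref{eq: Dp is zero}), any expression $\alpha_{u,\cdot} - \beta_{u,\cdot}$ because $D(\alpha_{u,\cdot}) = D(\beta_{u,\cdot}) = 1$ cancel, and the constant $+1$ because $D$ annihilates scalars. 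Hence $D = 0$ throughout the unwanted regime. For the remaining branch $x_W \neq x_E$ or $\hw(x_S) \neq 1$, Lemma~\ref{lem: sig Gamma} gives $\Sig(\Gamma,x) = 0$, and the same vanishing argument (which depends only on the skeleton of the gate, not on $A$) applies to $\Gamma_\uparrow$; hence $D = 0$ trivially here as well. This completes the $\neg\varphi_{\mathit{prop}}$ half of the statement and reflects the design goal of $D$ mentioned in Section~\ref{sub:Revisiting-the-cell}: the substitution $\alpha \gets \alpha+1$, $\beta \gets \beta+1$ is engineered precisely so that the unwanted polynomials collapse.

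Finally I would treat the \emph{wanted} cases $\varphi_{\mathit{prop}}(x)$, writing $u := x_W$, $v := x_N$. By linearity and the atomic computations above, if $(u,v) \notin A$ then
\[
D = D(q_u) - D(r_{u,v}) - D(s_{u,v}) - D(\alpha_{u,v}) - D(\beta_{u,v}) = n - T - T - 1 - 1 = n - 2T - 2,
\]
while if $(u,v) \in A$ then
\[
D = D(q_u) - D(r_{u,v}) - D(s_{u,v}) + D(1) = n - (T-1) - (T-1) + 0 = n - 2T + 2.
\]
This yields the two claimed values and finishes the proof.

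I do not anticipate any serious obstacle: the work is essentially bookkeeping once Lemma~\ref{lem: sig Gamma} is granted and the linearity of $D$ is used, and the only non-trivial ingredient is the horizontal balance property of $A$, which is invoked exactly when evaluating $D(r_{u,v})$ and $D(s_{u,v})$ to get the clean dependence on $T$ alone. The conceptual point worth emphasizing is that the two design goals highlighted after Lemma~\ref{lem: sig Gamma} are met simultaneously: $D$ simultaneously kills all the unwanted polynomials (Goal~1) and converts the remaining $(u,v)$-dependent linear terms into constants depending only on $T$ and the membership $(u,v) \in A$ (Goal~2).
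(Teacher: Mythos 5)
Your proposal is correct and follows essentially the same route as the paper: using linearity of $D$, computing its effect on the atomic polynomials $q_u$, $p_{u,v,w}$, $r_{u,v}$, $s_{u,v}$, $\alpha_{u,v}$, $\beta_{u,v}$, and then substituting into the six cases of Lemma~\ref{lem: sig Gamma}. You supply the intermediate identities (such as $D(\alpha_{u,z}\beta_{u,z}) = \alpha_{u,z} + \beta_{u,z} + 1$ and $D\binom{\alpha_{u,z}}{2} = \alpha_{u,z}$) and the explicit case-by-case arithmetic that the paper compresses into ``a simple calculation for each of the six assignment types''; the one step you flag as worth a sanity check — that $\Sig(\Gamma_\uparrow)$ also vanishes in the degenerate branch — is handled correctly, since $\Sig(\Gamma_\uparrow) = (\Sig(\Gamma))_\uparrow$ by the preceding unnumbered lemma and the zero polynomial is fixed by the substitution.
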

\begin{proof}
	We prove the identity using linearity of $D$. For all $u,v,w\in[n]$,
	consider the effect of $D$ on the polynomials from (\ref{eq: poly q})-(\ref{eq: poly s}).
	For instance, we have seen in (\ref{eq: Dp is zero}) and (\ref{eq: Da})-(\ref{eq: Db})
	that 
	\begin{eqnarray*}
		D(p_{u,v,w}) & = & 0,\\
		D(\alpha_{u,v})=D(\beta_{u,v}) & = & 1.
	\end{eqnarray*}
	Likewise, we can show that 
	\begin{eqnarray*}
		D(q_{u}) & = & \sum_{v\in[n]}1\ =\ n,\\
		D(r_{u,v})=D(s_{u,v}) & = & \sum_{\substack{z\in[n]\setminus\{v\}\\
				(u,z)\in A
			}
		}1\ =\ \begin{cases}
		T & (u,v)\notin A,\\
		T-1 & (u,v)\in A.
	\end{cases}
\end{eqnarray*}
Together with linearity of $D$ and the expression of $\Sig(\Gamma)$
from \ref{lem: sig Gamma}, this proves the claim by a simple calculation
for each of the six assignment types.\end{proof}
\begin{cor}
	\label{cor: combine modulo}Write $S:=n-2T-2$ and recall the matchgate
	$\Phi$ from Section~\ref{sub: signature phi} with
	\[
	\Sig(\Phi,x)=\begin{cases}
	1 & \mbox{if }\varphi_{\mathit{prop}}(x),\\
	0 & \mbox{otherwise}.
	\end{cases}
	\]
	Then the following linear combination realizes the signature $g_{\kappa}$:
	\[
	\frac{D-S\cdot\Sig(\Phi)}{4}=\frac{\Sig(\Gamma_{\uparrow})-\Sig(\Gamma)-S\cdot\Sig(\Phi)}{4}.
	\]
	Note that each of the constituent gates $\Gamma_{\uparrow}$, $\Gamma$
	and $\Phi$ has at most two apices and features only edge-weights
	from the set $\{-1,1\}$. Furthermore, each of these gates admits a $2$-coloring.
\end{cor}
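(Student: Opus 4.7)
The plan is to derive the corollary as an essentially one-line case analysis on top of Lemma~\ref{lem: sig D}, combined with the values of $\Sig(\Phi)$ recorded in Lemma~\ref{lem: cell signatures}. Recall that for an assignment $x\in\{0,1\}^{4n}$ satisfying $\varphi_{\mathit{one}}$, the target signature is $g_{\kappa}(x)=[\varphi_{\mathit{prop}}(x)\wedge (x_W,x_N)\in\T(\kappa)]$, so I only need to check three cases. First, if $\neg\varphi_{\mathit{prop}}(x)$, then Lemma~\ref{lem: sig D} gives $D(x)=0$ and Lemma~\ref{lem: cell signatures} gives $\Sig(\Phi,x)=0$, so the linear combination evaluates to $0=g_\kappa(x)$. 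Second, if $\varphi_{\mathit{prop}}(x)$ and $(x_W,x_N)\notin A$, then $D(x)=n-2T-2=S$ and $\Sig(\Phi,x)=1$, so $(D-S\cdot\Sig(\Phi))/4=(S-S)/4=0=g_\kappa(x)$. Third, if $\varphi_{\mathit{prop}}(x)$ and $(x_W,x_N)\in A$, then $D(x)=n-2T+2=S+4$ and $\Sig(\Phi,x)=1$, so $(D-S\cdot\Sig(\Phi))/4=((S+4)-S)/4=1=g_\kappa(x)$. This is the entire content of the identity.

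Next I would verify the three structural side claims. For the apex count, $\Phi$ is planar (so zero apices), $\Gamma(A)$ inherits exactly the two apex vertices $a_1,a_2$ from Definition~\ref{def: cell gates}, and $\Gamma_\uparrow$ is obtained by inserting dummy rows whose new $\pre$-vertices connect only to the existing $a_1,a_2$ (and to local $\sigHW{=1}$ leaves used inside the dummy gate), so no third apex is introduced. For the edge-weights, the motivation for replacing $\act$ by $\pre$ was precisely to eliminate the weight $\tfrac12$ appearing inside the even filter of $\Gamma_{\act}$; inspection of $\Gamma_{\pre}$ and $\Gamma_{\pass}$ in Figure~\ref{fig: apex-matchgates-analysis} confirms that after this substitution all surviving edge-weights lie in $\{-1,1\}$, and the only non-unit weight in $\Phi$ is the single weight-$(-1)$ edge introduced in its definition. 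For the $2$-coloring, it suffices to observe that each matchgate realizing $\pass$ or $\pre$ is bipartite and that the bipartitions match up consistently along dangling edges, exactly as in the proof of Theorem~\ref{thm: apex hard}; the dummy rows and apex attachments respect this bipartition by construction.

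The proof is essentially a bookkeeping step: there is no combinatorial difficulty left once Lemma~\ref{lem: sig D} has been proved. The only point at which one must be a bit careful is ensuring that the constants $S$ and $S+4$ appearing in the two branches of Lemma~\ref{lem: sig D} really differ by exactly $4$, so that the normalization by $1/4$ produces the indicator $\{0,1\}$ rather than $\{0,c\}$ for some other constant $c$. Once this is checked, the corollary simply repackages Lemma~\ref{lem: sig D} into the linear-combination form needed for Lemma~\ref{lem: combined signature lemma} in the subsequent modular reduction, where every occurrence of the non-realizable signature $g_\kappa$ will be replaced by a choice among the three constituents $\Gamma_\uparrow$, $\Gamma$, and $\Phi$, all of which do admit matchgate realizations over $\{-1,1\}$.
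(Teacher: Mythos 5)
Your proposal is correct and coincides with the paper's (implicit) derivation: the paper states this as a direct corollary of Lemma~\ref{lem: sig D} with no written-out proof, and your three-case verification, together with the observation that the two propagating values $n-2T-2$ and $n-2T+2$ differ by exactly $4$, is precisely the intended bookkeeping. The structural side remarks on apex count, edge-weights, and bipartiteness are also argued as the paper expects, by referring back to the constructions of $\Gamma$, $\Gamma_{\uparrow}$, and $\Phi$ and to the corresponding observations in the proof of Theorem~\ref{thm: apex hard}.
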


Using Corollary~\ref{cor: combine modulo}, we can complete the proof
of Theorem~\ref{thm: perm mod 2^k}. Recall that we aim at a reduction from $\pGrid[\parity]$ to the permanent modulo $2^k$.
\begin{proof}
	[Proof of Theorem \ref{thm: perm mod 2^k}]Let $\A=(n,k,\C,\T)$ be
	an instance for the $\Wone[\oplus]$-complete problem $\pGrid[\parity]$. To prove the lower bound under $\ETH[\parity]$,
	we may assume $|\C|=\O(k)$ by Theorem~\ref{thm: GridTiling hardness}.
	Furthermore, by horizontal balance via Lemma~\ref{lem: GridTiling balance}, we may assume
	that we are given a number $T\in\N$ such that $|\T(\kappa)\cap(u,\star)|=T$
	for all $\kappa\in\C$ and $u\in[n]$.
	
	Recall Definition~\ref{def: f and g} and Lemma~\ref{lem: Holant =00003D GridTilings}
	of Section~\ref{sub:Global-construction}: These allow us to compute
	a signature graph $G$ with signatures $f_{\kappa}$ at $\kappa\in[k]^{2}\setminus\C$
	and signatures $g_{\kappa}$ at $\kappa\in\C$ such that
	\[
	\pGrid[\#](\A)=\Holant(G).
	\]
	
	As shown in Lemma~\ref{lem: cell signatures}, we can realize $f_{\kappa}$
	by the planar matchgate $\Phi$ on edge-weights $\{-1,1\}$. Furthermore,
	as shown in Lemma \ref{lem: sig D}, we can realize
	$g_{\kappa}$ for each $\kappa\in\C$ as the linear combination of
	three $2$-apex matchgates on edge-weights $\{-1,1\}$: Let $\Gamma_{\kappa}:=\Gamma(\T(\kappa))$
	be as in Definition~\ref{def: Gamma}, and let $\Gamma_{\kappa,\uparrow}$
	be obtained from $\Gamma_{\kappa}$ as in Definition~\ref{def: Gamma up down}.
	Then, similarly to the proof of Theorem~\ref{thm: apex hard}, we
	obtain with Lemma \ref{lem: sig D} and Lemma~\ref{lem: combined signature lemma}
	about the linear combinations of signatures that
	\begin{equation}
		4^{|\C|}\cdot\Holant(G)=\sum_{\omega:\C\to[3]}(-1)^{d(\omega)}\cdot(-S)^{e(\omega)}\cdot\PerfMatch(H_{\omega}).\label{eq: Modulo Combination}
	\end{equation}
	Here, for each $\omega:\C\to[3]$, the number $d(\omega)$ is defined
	to be the number of $2$-entries in $\omega$, and $e(\omega)$ is
	the number of $3$-entries. The graph $H_{\omega}$ is obtained as
	follows: For $\kappa\in[k]^{2}\setminus\C$, insert the matchgate
	$\Phi$ at the cell vertex $c_{\kappa}$. For all $\kappa\in\C$,
	insert $\Gamma_{\kappa,\uparrow}$ or $\Gamma_{\kappa}$ or $\Phi$
	at $c_{\kappa}$ if $\omega(\kappa)$ is $1$ or $2$ or $3$, respectively.
	
	Let $M:=2^{2|\C|}$. With an oracle for computing $\PerfMatch(H_{\omega})$
	modulo $2M$ for all $\omega$, we can compute the right-hand side
	of (\ref{eq: Modulo Combination}) modulo $2M$ via arithmetic in $\Z{2M}$.
	We then obtain the value (modulo $2M$) of 
	\[
	M\cdot\Holant(G)=M\cdot\pGrid[\#](\A)\equiv_{2M}\begin{cases}
	M & \mbox{if }\pGrid[\#](\A)\mbox{ odd},\\
	0 & \mbox{if }\pGrid[\#](\A)\mbox{ even.}
	\end{cases}
	\]

	Each graph $H_{\omega}$ is bipartite, has at most $2|\C|=\O(k)$
	apices, and the computation is modulo $2M=2^{\O(k)}$. We have thus
	shown a parameterized Turing reduction from $\pGrid[\parity]$ to
	the evaluation of the permanent on $\O(k)$-apex graphs modulo $2^{\O(k)}$.
	Since Theorem~\ref{thm: GridTiling hardness} asserts the $\Wone[\oplus]$-completeness of the former problem, the theorem
	follows.
\end{proof}

\subsection{\label{sub: Calculating Gamma }Calculating the signature of $\Gamma$}

In the remainder of this section, we prove Lemma~\ref{lem: sig Gamma}.
Let $x\in\{0,1\}^{4n}$ be an assignment to the dangling edges of
$\Gamma$. The statement of the lemma is shown by inspecting the possible
satisfying extensions of $x$, as we did when calculating $\Sig(\Phi')$.
To understand the following proof, we therefore recommend recalling
Section~\ref{sub: signature phi'}, since that section contains a similar, yet substantially simpler argument. 

Let $F\subseteq E(\Gamma)$ denote the edges of $\Gamma$ that are
incident with apices. Given $x$, let $xyz\in\{0,1\}^{E(\Gamma)}$
be an assignment extending $x$ such that $\Sig(\Gamma,xyz)\neq0$,
with $y\in\{0,1\}^{E(\Gamma)\setminus F}$ and $z\in\{0,1\}^{F}$.
Due to $\sigHW{=1}$ at the apex vertices $a_{1}$ and $a_{2}$ of
$\Gamma$, there are apex-matched indices $\tau_{1},\tau_{2}\in A$
and apex-matched vertices $b_{1}:=b_{\tau_{1}}$ and $b_{2}:=b_{\tau_{2}}$
such that $a_{1}b_{1}$ and $a_{2}b_{2}$ are active in $xyz$. However,
opposing Section~\ref{sub: signature phi'}, it may well be that
$\tau_{1}\neq\tau_{2}$, and this makes our calculations somewhat
more difficult. In particular, the assignment $y$ is no longer uniquely determined
by $x$.

For each assignment $x$, we partition the satisfying extending assignments
$xyz$ to $\Gamma$ into six partition classes $\{\mathcal{P}_{i}(x)\}_{i\in[6]}$,
corresponding to the states of the (at most two distinct) apex-matched
vertices. More precisely, for $i\in[6]$, we let
\[
\mathcal{P}_{i}(x):=\{xyz\in\{0,1\}^{E(\Gamma)}\mid xyz|_{I(b_{1})}\mbox{ and }xyz|_{I(b_{2})}\mbox{ are as in row }i\mbox{ of Table \ref{tab: calculate Gamma}}\}.
\]
\begin{table}
	\begin{centering}
		\begin{tabular}{c|c|c||c|c|c|c|}
			& $(u,v)\notin A$  & $(u,v)\in A$ & $\substack{(u,v)\notin A\\
				(u,w)\,\notin A
			}
			$ & $\substack{(u,v)\notin A\\
				(u,w)\in A
			}
			$ & $\substack{(u,v)\in A\\
				(u,w)\notin A
			}
			$ & $\substack{(u,v)\in A\\
				(u,w)\in A
			}
			$\tabularnewline
			\hline 
			$\symNSWE11$ & $0$ & $1$ & $0$ & $0$ & $0$ & $0$\tabularnewline
			\hline 
			$\symNS11$ & $-\alpha_{u,v}-\beta_{u,v}$ & $-\alpha_{u,v}-\beta_{u,v}$ & $0$ & $0$ & $0$ & $0$\tabularnewline
			\hline 
			\hline 
			$\symN10,\symS01$ & $q_{u}$ & $q_{u}$ & $p_{u,v,w}$ & $p_{u,v,w}$ & $p_{u,v,w}$ & $p_{u,v,w}$\tabularnewline
			\hline 
			$\symN10,\symSWE01$ & $-r_{u,v}$ & $-r_{u,v}+\alpha_{u,v}$ & $0$ & $\alpha_{u,v}-\beta_{u,v}$ & $0$ & $\alpha_{u,v}-\beta_{u,v}$\tabularnewline
			\hline 
			$\symNWE10,\symS01$ & $-s_{u,v}$ & $-s_{u,v}+\beta_{u,v}$ & $0$ & $0$ & $\beta_{u,w}-\alpha_{u,w}$ & $\beta_{u,w}-\alpha_{u,w}$\tabularnewline
			\hline 
			$\symNWE10,\symSWE01$ & $0$ & $0$ & $0$ & $0$ & $0$ & $1$\tabularnewline
			\hline 
		\end{tabular}
		\par\end{centering}
	
	\caption{\label{tab: calculate Gamma}The six assignment types of the cell
		are listed as columns, and the possible states of the (at most two)
		apex-matched vertices are listed as rows. The signature of $\Gamma$
		on each of the six assignment types is given as the sum of the elements
		in the corresponding column. Note that the table is divided into four
		quadrants. We have essentially already calculated the top left quadrant
		in Section~\ref{sub:Computing-the-signatures} when we calculated
		$\protect\Sig(\Phi')$.}
\end{table}
Note that $b_{1}$ and $b_{2}$ depend upon the assignment $xyz$.
To give an example, in row $1$, and thus in class $\mathcal{P}_1$, we consider extending assignments $xyz$ that have only one vertex with active edges leading to an apex, and the local assignment at this vertex reads $\symNSWE11$. More formally, we have 
\[
b_{1}=b_{2}\ \wedge\ xyz|_{I(b_{1})}=\symNSWE11.
\]
As another example, in row $3$, we have 
\[
b_{1}\neq b_{2}\ \wedge\ xyz|_{I(b_{1})}=\symN10\ \wedge\ xyz|_{I(b_{2})}=\symS01.
\]
It is evident that, given $x\in\{0,1\}^{4n}$, we have 
\begin{equation}
	\Sig(\Gamma,x)=\sum_{i\in[6]}\underbrace{\sum_{xyz\in\mathcal{P}_{i}(x)}\val_{\Gamma}(xyz)}_{=:P_{i}(x)}.\label{eq: sum states}
\end{equation}

In Table~\ref{tab: calculate Gamma}, we calculate $P_{i}(x)$ for
all $i\in[6]$ and all six types of assignments $x$ to dangling edges distinguished
by the signature: The entry in this table at row $i\in[6]$ and column
$j\in[6]$ denotes the number $P_{i}(x)$ on assignments $x$ of the $j$-th
type. Note that the table is divided into four quadrants, as indicated
by the double lines in Table~\ref{tab: calculate Gamma}. In Section~\ref{sub: signature phi'},
we have essentially already calculated the values in the top left
quadrant. In the following, we calculate the remaining quadrants.

Before doing so, we first need to make some general observations:
In each satisfying assignment $xyz$ extending $x$, all western and
eastern edges of vertices in the row $(x_{W},\star)$ are active,
and no other western and eastern edges are active. This is because
for any vertex $b\in V(\Gamma)\setminus\{a_{1},a_{2}\}$, the signatures
$\pass$ and $\pre$ imply that the assignment $xy|_{I(b)}$ has one
of the states 
\begin{equation}
	\underbrace{\symEmpty,\symNS,\symWE,\symNSWE}_{\mathrm{``tame"}},\underbrace{\symN,\symS,\symNWE,\symSWE}_{\mathrm{``wild"}}.\label{eq: states}
\end{equation}
In each such state, be it tame or wild, the western incident edge
is active iff the eastern edge is active as well. By an argument as
in Section~\ref{sub: signature phi}, this implies $x_{W}=x_{E}$
for the assignment $x$. Note that a similar statement from north
to south is not necessarily true, as witnessed by vertices in a ``wild''
state. 

If $b_{1}\neq b_{2}$, this implies $xyz|_{I(b_{1})}\in\{\symN10,\symNWE10\}$
and $xyz|_{I(b_{2})}=\{\symS01,\symSWE01\}$. Because all other vertices
are in tame states and thus enforce equality on their northern and
southern dangling edges, the vertex $b_{1}$ ``shoots'' a ray of
active vertical edges to the north (transmitted by vertices in state
$\symNS$, $\symNSWE$, $\symNS00$, $\symNSWE00$). This ray may
either leave the cell, or it hits $b_{2}$. We conclude that, for
any column $j\in[n]$, 
\begin{itemize}
	\item $x_{N}(j)=x_{S}(j)$ iff column $(\star,j)$ contains neither $b_{1}$
	nor $b_{2}$, or it contains both,
	\item $x_{N}(j)=1\ \wedge\ x_{S}(j)=0$ iff column $(\star,j)$ contains
	$b_{1}$ but not $b_{2}$, 
	\item $x_{N}(j)=0\ \wedge\ x_{S}(j)=1$ iff column $(\star,j)$ contains
	$b_{2}$ but not $b_{1}$.
\end{itemize}
We are now ready to calculate the remaining quadrants of Table~\ref{tab: calculate Gamma}.
Recall that we use the abbreviations $u:=x_{W}$, $v:=x_{N}$ and $w:=x_{S}$.

\paragraph*{Top right quadrant: }

If $x_{N}\neq x_{S}$, then $\mathcal{P}_{1}(x)=\mathcal{P}_{2}(x)=\emptyset$.
This is because all vertices in assignments $xyz\in\mathcal{P}_{1}(x)\cup\mathcal{P}_{2}(x)$
are in tame states, which would imply $x_{N}=x_{S}$. This explains
all zeros in the top right quadrant of Table~\ref{tab: calculate Gamma}.

\paragraph*{Bottom right quadrant (0/1 entries): }

If $x_{N}\neq x_{S}$ and $(x_{W},x_{N})\notin A$, then no satisfying
assignment has a vertex in state $\symNWE$: By our general observation,
the index of this vertex would be $(x_{W},x_{N})$, but this vertex
has no adjacent apex, since $(x_{W},x_{N})\notin A$, and it can thus
only be in a tame state. Likewise, if $(x_{W},x_{S})\notin A$, then
no satisfying assignment has a vertex in state $\symSWE$. This explains
all zeros in the bottom right quadrant of Table~\ref{tab: calculate Gamma},
and it also explains the bottom right entry of $1$.

\paragraph*{Bottom right quadrant (other entries): }

By our general observation, the vertex $b_{1}$ must be located in
the column $(\star,v)$ and $b_{2}$ must be located in the column
$(\star,w)$ . 

Consider the third row in the right quadrant and Figure~\ref{fig: states-btm-right-quadrant}. Because of the states
of $b_{1}$ and $b_{2}$, neither of them is on the horizontal path
$u$. This gives $\alpha_{u,v}+\beta_{u,v}$ choices for $b_{1}$.
When $b_{1}$ is above $(u,v)$, there are $\alpha_{u,v}$ possibilities,
and the northbound ray emitted by $b_{1}$ does not cross the horizontal
path in $(u,\star)$ described in the general observations. When $b_{1}$
is below $(u,v)$, there are $\beta_{u,v}$ possibilities, and the
northbound ray crosses the horizontal path in $(u,\star)$, so the
vertex at $(u,v)$ contributes a factor $-1$ from $\pass(\symNSWE)$
or $\pre(\symNSWE00)$.
By a similar analysis for $b_{2}$
as for $b_{1}$, we obtain four cases, shown in Figure~\ref{fig: states-btm-right-quadrant} and we see that, for inputs $x$ of the third type in Table~\ref{tab: calculate Gamma}, we have 
\begin{eqnarray*}
	P_{3}(x) & = & \alpha_{u,v}\cdot\beta_{u,w}-\alpha_{u,v}\cdot\alpha_{u,w}-\beta_{u,v}\cdot\beta_{u,w}+\beta_{u,v}\cdot\alpha_{u,w}\\
	& = & (\alpha_{u,v}-\beta_{u,v})\cdot(\beta_{u,w}-\alpha_{u,w})\\
	& = & p_{u,v,w}
\end{eqnarray*}

The calculation of the remaining rows of Table~\ref{tab: calculate Gamma}
is similar, except that $b_{1}$ or $b_{2}$ may appear on the horizontal
path $(u,\star)$ by the $\symNWE$ or $\symSWE$ state, so only one
or fewer factors of $p_{u,v,w}$ remain.

\begin{figure}
	\begin{centering}
		\includegraphics[width=0.65\textwidth]{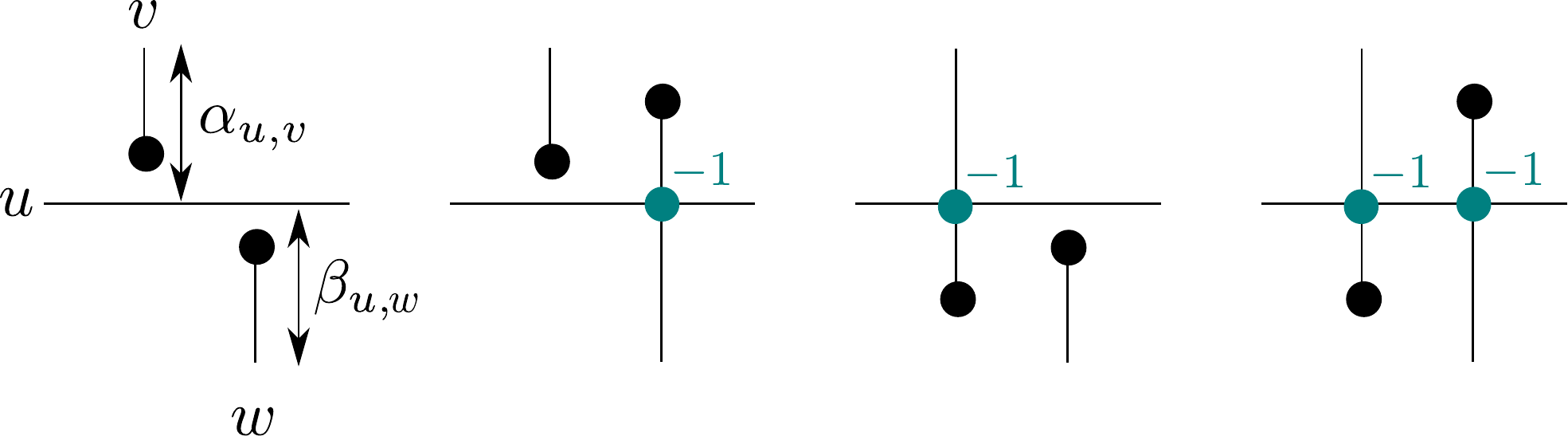}
		\par\end{centering}
	
	\caption{\label{fig: states-btm-right-quadrant} Relevant states in the bottom right quadrant. The vertices $b_1$ and $b_2$ are shown as black dots, crossings with the horizontal path are shown as turquoise dots.}
\end{figure}

\paragraph*{Bottom~left~quadrant~(zero~entries): }

The argument for the zero entries in the bottom right quadrant applies here as well.

\paragraph*{Bottom~left~quadrant~(other~entries): }

It can be verified that $b_{1}$ and $b_{2}$
must be located in the same column, as otherwise it would be impossible to have $x_N = x_S$. In particular, either they are in some column
$(\star,j)$ with $j\neq v$, or they are in the column $(\star,v)$.
We calculate the weighted sum over the relevant extensions in Table~\ref{tab: bottom left}, and then use it to get
the bottom left quadrant of Table~\ref{tab: calculate Gamma}. To verify the completeness of our reasoning, we advise to tick the corresponding cells of the table while reading.

Let us assume first that $b_{1}$ and $b_{2}$ appear in a column $(\star,j)$ of $\Gamma$ with $j\neq v$. These situations are covered in columns 1, 2, 4, and 5 of Table~\ref{tab: bottom left}. Then, after fixing the positions of $b_{1}$ and $b_{2}$, the unique possible assignment realizing this choice contains the horizontal path $(u,\star)$, a vertical path $(\star,v)$ and a path connecting $b_{1}$ and $b_{2}$. The vertex at $(u,v)$
yields the value $-1$, since it is in state $\symNSWE$ or $\symNSWE 00$. Whether the vertex at $(u,j)$ also yields $-1$ depends on
whether the line segment $b_{1}b_{2}$ crosses the horizontal path
$(u,\star)$. 

Consider the first row of Table~\ref{tab: bottom left} for columns with $j \neq v$.
When $b_{1}$ and $b_{2}$ are in states $\symN10$ and $\symS01$
respectively, there are $\alpha_{u,j}\beta_{u,j}$ choices for $b_{1}$ and $b_{2}$ such that the line segment $b_{1}b_{2}$ crosses the horizontal path (and in this case, we have two crossings, each of which yields a factor $-1$). There are ${\alpha_{u,j} \choose 2}+{\beta_{u,j} \choose 2}$
choices of $b_{1}$ and $b_{2}$ such that the crossing
does not occur (yielding one crossing in total and a factor $-1$). Hence, the total sum over extensions to $x$ with $b_{1}$ and $b_{2}$ in states $\symN10$ and $\symS01$ is equal to
\[
t_{j}=\alpha_{u,j}\beta_{u,j}-{\alpha_{u,j} \choose 2}-{\beta_{u,j} \choose 2}.
\]

We observe that no extension to $x$ can have the vertices $b_1$ and $b_2$ in states $\symNWE10$ and $\symSWE01$, as these states would force the vertices $b_1$ and $b_2$ to appear in different columns of $\Gamma$. Hence, the number of extensions in row 4 are all zero.
Note also that, in columns 1, 3, and 4, no states other than $\symN10$ and $\symS01$  can appear: Every other state would require $(u,j) \in A$, since only such vertices can possibly be in wild states.

The calculations so far have settled columns 1 and 4; we now consider column 2. 
If and only if $b_{2}$ is located on $(u,j)$, then the vertices $b_1$ and $b_2$ are in states $\symN10,\symSWE01$.
Then the vertex $b_2$ at $(u,j)$ gives $\pre(\symSWE01)=1$, and $b_1$ gives $\pre(\symN10)=1$. 
The vertex at $(u,v)$ is in state $\symNSWE$ or $\symNSWE 00$ and consequently yields $-1$.
We observe that there are $\beta_{u,j}$ choices for $b_1$. This settles row 2 of column 2.
A symmetric argument applies in row 3 of column 2, when the vertices $b_1$ and $b_2$ are in states $\symNWE10,\symS01$.

The same argument applies to column 5, 
since both $b_1$ and $b_2$ do not appear in the $v$-th column of $\Gamma$.
This settles all columns with $j\neq v$; we will henceforth consider the case $j=v$ as in columns 3 and 6. 
In these columns, the vertices $b_1$ and $b_2$ must be situated in column $(\star,v)$ of $\Gamma$. Furthermore, we again have the horizontal path passing through row $(u,\star)$.

Consider row 1, 
corresponding to states $\symN10,\symS01$. Here, it is irrelevant whether 
$(u,v)\in A$ or not, since none of $b_1$ or $b_2$ can be located at $(u,v)$, as the horizontal path could otherwise not pass through these vertices. There are $\alpha_{u,j}\beta_{u,j}$ possible positions
for $b_{1}$ and $b_{2}$ such that $b_1$ lies above
the horizontal path $(u,\star)$ and $b_2$ lies below it. In both situations, no crossing occurs. Furthermore, there are ${\alpha_{u,j} \choose 2}+{\beta_{u,j} \choose 2}$
possible positions for $b_{1}$ and $b_{2}$ such that both lie above or both lie below the horizontal path, introducing precisely one crossing with the path. Hence, the weighted sum over extensions is again given by $t_{j}$, with $j = v$. 

This settles column 3; it remains to consider column 6.
Consider its second row. Because $b_{2}$
is in state $\symSWE01$, it is located at $(u,v)$, and shoots a ray
to the south. There are $\alpha_{u,v}$ positions left for $b_{1}$
to shoot a ray to the north. Similarly, the third entry is $\beta_{u,v}$. It is important to note here that no crossing occurs, as opposed to, say, column 5.

We have now calculated all entries of the table. If we sum the first $3$ columns and the last $3$ columns, respectively, we get the bottom
left quadrant of Table~\ref{tab: calculate Gamma}. (Note that each block of $3$ columns actually corresponds to $n$ choices for $j$, so each sum involves $n$ terms.)

\begin{table}
	\begin{centering}
		\begin{tabular}{c|c|c|c|c|c|c|}
			
			states of $b_1$, $b_2$
			
			& $\substack{(u,v)\notin A\\
				j\neq v\\
				(u,j)\notin A
			}
			$  & $\substack{(u,v)\notin A\\
				j\neq v\\
				(u,j)\in A
			}
			$  & $\substack{(u,v)\notin A\\
				j = v
			}
			$ & $\substack{(u,v)\in A\\
				j\neq v\\
				(u,j)\notin A
			}
			$  & $\substack{(u,v)\in A\\
				j\neq v\\
				(u,j)\in A
			}
			$  & $\substack{(u,v)\in A\\
				j = v
			}
			$\tabularnewline
			\hline 
			$\symN10,\symS01$ & $t_{j}$ & $t_{j}$ & $t_{j}$ & $t_{j}$ & $t_{j}$ & $t_{j}$\tabularnewline
			\hline 
			$\symN10,\symSWE01$ & $0$ & $-\beta_{u,j}$ & $0$ & $0$ & $-\beta_{u,j}$ & $\alpha_{u,v}$\tabularnewline
			\hline 
			$\symNWE10,\symS01$ & $0$ & $-\alpha_{u,j}$ & $0$ & $0$ & $-\alpha_{u,j}$ & $\beta_{u,v}$\tabularnewline
			\hline 
			$\symNWE10,\symSWE01$ & $0$ & $0$ & $0$ & $0$ & $0$ & $0$\tabularnewline
			\hline 
		\end{tabular}
		\par\end{centering}
	
	\caption{\label{tab: bottom left}A detailed table of the bottom left quadrant
		of Table~\ref{tab: calculate Gamma}.}
\end{table}

\paragraph*{Conclusion of the calculation.}

This explains all entries of Table~\ref{tab: calculate Gamma}. Given
an assignment $x$ having one of the types indicated in the columns
of Table~\ref{tab: calculate Gamma}, the value $\Sig(\Gamma,x)$
is then obtained by summing along the corresponding column as in (\ref{eq: sum states}).

\section*{Acknowledgement}
We wish to thank an anonymous reviewer for extensive and helpful comments on the submitted version.

\bibliographystyle{plain}
\bibliography{References}

\end{document}